
\documentclass{scrartcl}
\pdfoutput=1

\usepackage{cmap}
\usepackage[utf8]{inputenc}
\usepackage[T1]{fontenc}

\usepackage{amsmath}
\usepackage{amssymb}
\usepackage{amsthm}
\usepackage{thmtools}
\usepackage{bm}
\usepackage{mathrsfs}
\usepackage{csquotes}
\usepackage{tabularx}
\usepackage{enumitem}
\usepackage{hyperref}
\usepackage{bbm}
\usepackage{calc}
\usepackage{tikz}
\usetikzlibrary{automata, calc, matrix, shapes, positioning, decorations.pathreplacing}

\usepackage{pdfpages}

\setcapindent{0pt}

\theoremstyle{plain}
\newtheorem{thm}{Theorem}[section]
\newtheorem{lemma}[thm]{Lemma}
\newtheorem{prop}[thm]{Proposition}
\newtheorem{cor}[thm]{Corollary}

\theoremstyle{definition}

\newtheorem{example}[thm]{Example}
\newtheorem{prob}[thm]{Open Problem}

\theoremstyle{remark}
\newtheorem{remark}[thm]{Remark}

\newcommand{\inverse}[1]{\mkern 1.5mu\overline{\mkern-1.5mu#1\mkern-1.5mu}\mkern 1.5mu}

\DeclareFontFamily{U}{mathb}{\hyphenchar\font45}
\DeclareFontShape{U}{mathb}{m}{n}{
  <5> <6> <7> <8> <9> <10> gen * mathb
  <10.95> mathb10 <12> <14.4> <17.28> <20.74> <24.88> mathb12
}{}
\DeclareSymbolFont{mathb}{U}{mathb}{m}{n}
\DeclareFontSubstitution{U}{mathb}{m}{n}
\DeclareMathSymbol{\drsh}{3}{mathb}{"EB}

\newlength{\edgelength}
\newcommand{\trans}[4]{%
  \begin{tikzpicture}[auto, shorten >=1pt, >=latex, baseline=(l.base), inner sep=0pt, outer xsep=0.3333em]
    \node (l) {\ensuremath{#1}};%
    \setlength{\edgelength}{\widthof{\scriptsize\ensuremath{#2/#3}}+0.5cm}%
    \node[base right=\edgelength of l] (r) {\ensuremath{#4}};%
    \path[->] (l.mid east) edge node[inner sep=0pt] {\scriptsize\ensuremath{#2/#3}} (r.mid west);%
  \end{tikzpicture}%
}

\newcommand{\transa}[3]{%
  \begin{tikzpicture}[auto, shorten >=1pt, >=latex, baseline=(l.base), inner sep=0pt, outer xsep=0.3333em]
    \node (l) {\ensuremath{#1}};%
    \setlength{\edgelength}{\widthof{\scriptsize\ensuremath{#2}}+0.5cm}%
    \node[base right=\edgelength of l] (r) {\ensuremath{#3}};%
    \path[->] (l.mid east) edge node[inner xsep=0pt, inner ysep=0.2em] {\scriptsize\ensuremath{#2}} (r.mid west);%
  \end{tikzpicture}%
}

\newcommand{\wang}[4]{%
  \begin{tikzpicture}[baseline={([yshift={-0.5ex}]current bounding box.east)}, inner sep=0pt, outer sep=0.1ex]%
    \matrix[row sep=0.1ex, column sep=0.1ex, ampersand replacement=\&] {%
      \& \node[align=center] (t) {\scriptsize\ensuremath{#4}}; \& \\%
      \node (l) {\scriptsize\ensuremath{#1}}; \& \& \node (r) {\scriptsize\ensuremath{#3}}; \\%
      \& \node (b) {\scriptsize\ensuremath{#2}}; \& \\%
    };%
    \draw[gray] (l.north) |- (t.mid west);%
    \draw[gray] (l.south) |- (b.mid west);%
    \draw[gray] (t.mid east) -| (r.north);%
    \draw[gray] (b.mid east) -| (r.south);%
  \end{tikzpicture}%
}

\DeclareMathOperator{\dom}{dom}
\DeclareMathOperator{\im}{im}
\DeclareMathOperator{\Stab}{Stab}
\DeclareMathOperator{\Sub}{Sub}
\DeclareMathOperator{\Aff}{Aff}
\DeclareMathOperator{\SAff}{SAff}
\newcommand{\revbin}{\overleftarrow{\operatorname{bin}}}
\newcommand{\wt}{\widetilde}
\DeclareMathOperator{\idGrp}{\mathbbm{1}}
\newcommand*{\DecProblem}[1]{\textbf{\textsc{#1}}}

\newcommand*{\id}{\operatorname{id}}

\newcommand{\problem}[3][]{%
  \par\vspace{0.125cm plus 0.05cm minus 0.05cm}\begin{tabularx}{\textwidth-2\parindent}{lX}%
    \if\relax\detokenize{#1}\relax%
    \else%
      \textnormal{\textbf{Constant:}}&#1\\%
    \fi%
    \textnormal{\textbf{Input:}}&#2\\%
    \textnormal{\textbf{Question:}}&#3\\%
  \end{tabularx}\vspace{0.125cm plus 0.05cm minus 0.05cm}\par%
  }

\def\vlongrightarrow{\relbar\joinrel\longrightarrow}

\def\longmapright#1{\smash{\mathop{\vlongrightarrow}\limits^{#1}}}

\usepackage[affil-it]{authblk}

\author{Daniele D'Angeli\thanks{The first author was supported by the Austrian Science Fund projects FWF P24028-N18 and FWF P29355-N35.}}
\affil{Institut für Diskrete Mathematik\\
  Technische Universität Graz\\
  Steyrergasse 30\\
  8010 Graz, Austria}
\author{Emanuele Rodaro\thanks{The second author thanks the project INDAM-GNSAGA.}}
\affil{Department of Mathematics\\
  Politecnico di Milano\\
  Piazza Leonardo da Vinci, 32\\
  20133 Milano, Italy}
\author{Jan Philipp Wächter}
\affil{Institut für Formale Methoden der Informatik (FMI)\\
  Universität Stuttgart\\
  Universitätsstraße 38\\
  70569 Stuttgart, Germany}

\title{Automaton Semigroups and Groups: On the Undecidability of Problems Related to Freeness and Finiteness}

\begin{document}
  \maketitle
  \vspace{-\baselineskip}

\begin{abstract}
  \textbf{Abstract.}
  In this paper, we study algorithmic problems for automaton semigroups and automaton groups related to freeness and finiteness. In the course of this study, we also exhibit some connections between the algebraic structure of automaton (semi)groups and their dynamics on the boundary.

  First, we show that it is undecidable to check whether the group generated by a given invertible automaton has a positive relation, i.\,e.\ a relation $\bm{p} = \idGrp$ such that $\bm{p}$ only contains positive generators. Besides its obvious relation to the freeness of the group, the absence of positive relations has previously been studied by the first two authors and is connected to the triviality of some stabilizers of the boundary. We show that the emptiness of the set of positive relations is equivalent to the dynamical property that all (directed positive) orbital graphs centered at non-singular points are acyclic. Our approach also works to show undecidability of the freeness problem for automaton semigroups, which negatively solves an open problem by Grigorchuk, Nekrashevych and Sushchansky. In fact, we show undecidability of a strengthened version where the input automaton is complete and invertible.\marginline{Regarding the freeness problem, see the erratum below.}

  Gillibert showed that the finiteness problem for automaton semigroups is undecidable. In the second part of the paper, we show that this undecidability result also holds if the input is restricted to be bi-reversible and invertible (but, in general, not complete). As an immediate consequence, we obtain that the finiteness problem for automaton subsemigroups of semigroups generated by invertible, yet partial automata, so called \emph{automaton-inverse} semigroups, is also undecidable.
\end{abstract}

  \begin{section}{Introduction}
    Automaton groups, i.\,e.\ groups generated by synchronous Mealy automata, are a quite intriguing class of groups. They have deep connections with many areas of mathematics, from the theory of profinite groups to complex dynamics and theoretical computer science, and they serve as a source of examples or counterexamples for many important group theoretic problems (see e.\,g.\ \cite{zuk2012automata} for an introduction). Despite these connections and the many surprising and interesting consequences, knowledge about the class of automaton groups from the algebraic, algorithmic and dynamical perspective is still rather limited. From the algorithmic point of view, the word problem for automaton groups is decidable\footnote{See \cite{DAngeli2017} for some discussion of its complexity.} while many other problems are suspected (and sometimes proven) to be undecidable. In this regard, the most studied problems in the literature are the finiteness problem, the freeness problem and the conjugacy problem. The latter has been proven to be undecidable by {\v S}uni\'c  and Ventura \cite{Su-Ve09}. However, if this problem is restricted to the contracting case, then it turns out to be decidable, see \cite{BoBoSi13}. Decidability of the finiteness problem for automaton groups is still open. However, some partial results are known when certain properties of the generating automaton are relaxed: Belk and Bleak showed undecidability of the finiteness problem for groups generated by asynchronous automata \cite{belk2017some} and Gillibert showed undecidability of the finiteness problem for automaton semigroups \cite{Gilbert13}, which are generated by synchronous but not necessarily invertible Mealy automata. Other results worth mentioning in this respect are the recent proof by Gillibert showing undecidability of the order problem of automaton groups \cite{gillibert2017automaton} and the result by Bartholdi and Mitrofanov that the order problem is already undecidable for contracting automaton groups \cite{bartholdi2017wordAndOrderProblems}. On the other side, Klimann showed that the finiteness problem is solvable for reversible, invertible automata with two states or two letters and that so is the freeness problem for semigroups generated by two state invertible, reversible automata \cite{InesKli}. Obviously, it is usually easier to show undecidability results for automaton semigroups than it is for automaton groups. This might be one of the reasons why the less studied class of automaton semigroups seems to stir up more interest lately. There is, for example, the semigroup theoretic work of Cain \cite{cain2009automaton}, and Brough and Cain \cite{brough2015automaton, brough2017automatonTCS}, approaches to semigroups via duals of automata generating groups \cite{DaRo14, DaRo16} or studies of the torsion-freeness of automaton semigroups (see e.\,g.\ \cite{GKP}; also for some further references). Other works study the existence of free subsemigroups in automaton groups and semigroups (see e.\,g.\ \cite{DaRo13, DaRo14, DaRo16, francoeur2018existence, olukoya2017growthRates, SilSte05}).

    The aim of this paper is to make a contribution to the algorithmic point of view, in particular to the finiteness problem and the freeness problem for automaton (semi)groups. We approach these problems by making use of ideas with a dynamical flavor \cite{GriNeShu} underlining the importance of the dynamics on the boundary when considering algorithms: as an alternative to the automaton definition, one can also view automaton (semi)groups as level-preserving (finite-state) actions on the rooted infinite tree $\Sigma^*$ corresponding to the finite words over some alphabet $\Sigma$; the limit of this tree (in the sense of Gromov-Hausdorff convergence) is $\Sigma^\omega$, which corresponds to the right-infinite words over $\Sigma$ and on which the group also acts. As automaton (semi)groups are finitely generated and, thus, countable, their action on $\Sigma^\omega$ cannot be transitive. In this context, one of the main problems in the combinatorial study of automaton groups is to understand the (boundary) orbital Schreier graphs of a given automaton group \cite{MR2643891}. It is natural to ask which properties of a group can be recognized by exploring the structure of these Schreier graphs. Of course, this idea can be extended to semigroups by considering their orbital graphs on the boundary.

    After introducing some preliminaries, we first consider the freeness problem for automaton (semi)groups. Inspired by links between the freeness of an automaton group and the dynamics on periodic points in the boundary of the dual automaton exhibited in \cite{DaRo14, DaRo16}, we discuss the problem of checking whether a given automaton generating a group yields a positive relation, i.\,e.\ a relation $\bm{q} = \idGrp$ where $\bm{q}$ contains only positive generators, which was previously studied in \cite{DaGoKPRo16}. In \autoref{prop: characterization empty}, we first show that the existence of a positive relation is equivalent to having a cycle in some boundary Schreier graph. Then, we proceed to show undecidability of the existence of a positive relation; a problem left open in \cite[Problem 3]{DaGoKPRo16}. We do this by reducing \DecProblem{ICP}, an undecidable \cite[Theorem 11]{Pota-Bell} problem similar to Post's Correspondence Problem, to the problem using a construction given in \cite{Su-Ve09}. We observe that, using the same approach, one can reduce \DecProblem{IICP}, a variation of \DecProblem{ICP} of unknown decidability, to the freeness problem of automaton groups. Furthermore, we show using a similar reduction that (even a strengthened version of) the freeness problem for automaton semigroups is undecidable in \autoref{theo: freeness semigroup} and \autoref{cor:SemigroupFreenessIsUndecidable}.

    In the second part of the paper, we deal with the finiteness problem for automaton groups. In the spirit of \cite{Gilbert13} and \cite{DaGoKPRo16}, we attack the problem using Wang tilings. First, we show that finiteness of an automaton (semi)group is characterized by uniform boundedness of the orbital graphs in the boundary in \autoref{prop:uniformly bounded}. Afterwards, in \autoref{prop:recurrentTiling}, we connect the finiteness of a semigroup generated by a (partial) automaton obtained from a Wang tile set with the existence of non-periodic tiling, which is the crucial point of reducing the tiling problem for 4-way deterministic tiles (proved undecidable by Lukkarila \cite{Lukkarila}) to a strengthened version of the finiteness problem for automaton semigroups where the input automaton is known to be bi-reversible and invertible, which is stated to be undecidable in \autoref{theo: undecidability finite}. The strengthened version is already quite close to the group case; the only missing part is completeness of the automaton. An immediate consequence is that it is undecidable to check whether a subsemigroup of an automaton-inverse semigroup (i.\,e.\ the inverse semigroup generated by an invertible, yet partial automaton) is finite.
\end{section}

  \begin{section}{Preliminaries}\label{sec: preliminaries}
    \paragraph{Fundamentals.}
    We use $A \sqcup B$ to denote the disjoint union of two sets $A$ and $B$. The set $\mathbb{N}$ is the set of natural numbers including $0$, $\mathbb{Z}$ is the set of integers and $\mathbb{Z}_+$ is the set of (strictly) positive integers. Finally, $\mathbb{R}$ is the set of real numbers.

    An \emph{alphabet} $\Sigma$ is a non-empty finite set of \emph{letters}. By $\Sigma^*$, we denote the set of all finite words over the alphabet $\Sigma$, including the empty word, which we denote by $\varepsilon$. Furthermore, we set $\Sigma^+ = \Sigma^* \setminus \{ \varepsilon \}$ and use the notation $\Sigma^n$ for the set of words of length $n$ over $\Sigma$ and the notation $\Sigma^{\leq n}$ for the set of words $w$ of length up to $|w| = n$; $\Sigma^{< n}$ is the set of words of length smaller than $n$, respectively. The set of $\omega$-words, i.\,e.\ infinite words whose positions are the natural numbers, is denoted by $\Sigma^\omega$. Whenever we refer to a \emph{word} in this paper, it can either be a finite word or an $\omega$-word. The set $\Sigma^* \cup \Sigma^\omega$ is a metric space and, thus, also a topological space. The distance between two distinct words $x$ and $y$ is $|\Sigma|^{-|p|}$ where $p$ is the longest common prefix\footnote{A word $w$ is a \emph{prefix} of a word $x$ if there is a word $y$ with $wy = x$.} of $x$ and $y$. Notice that this metric induces the discrete topology on $\Sigma^*$, $\Sigma^n$ and $\Sigma^{\leq n}$ for all $n \in \mathbb{N}$. On $\Sigma^\omega = \prod_{i \in \mathbb{N}} \Sigma^1$, it induces Tychonoff's topology. The sequence $\left( \Sigma^{\leq n} \right)_{n \in \mathbb{N}}$ of compact metric spaces converges to $\Sigma^* \cup \Sigma^\omega$ (in the sense of Gromov-Hausdorff\footnote{For Gromov-Hausdorff convergence, see \cite{gromov2007metric}.}); therefore, $\Sigma^\omega$ is often referred to as the \emph{boundary} of $\Sigma^*$.

    On the algebraic side, we will be dealing with semigroups (and monoids), inverse semigroups and groups. In this context, we want to remind the reader of the difference between inverses in semigroups and in groups. An element $\inverse{s}$ of a semigroup $S$ is called (semigroup) \emph{inverse} to another element $s \in S$ if $\inverse{s} s \inverse{s} = \inverse{s}$ and $s \inverse{s} s = s$ hold. On the other hand, we will call an element $m^{-1}$ of a monoid $M$ (which may be a group) the \emph{group inverse} of $m \in M$ if we have $m^{-1} m = m m^{-1} = \idGrp$. Here and throughout this paper, we use $\idGrp$ to denote the neutral element of a monoid (or group, of course); the monoid in question will be clear from the context. Clearly, the group inverse of a monoid element is always also a (semigroup) inverse. The converse does not hold in general, however. To emphasize this difference, we use the notation $\inverse{m}$ to denote (semigroup) inverses and the notation $m^{-1}$ to denote group inverses.

    A semigroup $S$ is an \emph{inverse semigroup} if every element $s \in S$ has a unique inverse $\inverse{s}$, see \cite{howie, petrich1984} for further details on inverse semigroups. As the Preston-Vagner Theorem (see \cite[p.~150]{howie} or \cite[p.~168]{petrich1984}) demonstrates, there is a close connection between inverse semigroups and \emph{partial function} (or \emph{partial maps}). For these, we fix some notation. To indicate that a function $f$ from a set $A$ to a set $B$ is partial, we write $f: A \to_p B$. The \emph{domain} of $f$, denoted by $\dom f$, is the subset of $A$ on which $f$ is defined. If we have $\dom f = A$, i.\,e.\ that $f$ is defined on all elements in $A$, then we call $f$ a \emph{total} function from $A$ to $B$ and write $f: A \to B$. The counter part to $\dom f$ is $\im f$, the \emph{image} of $f$; it consists of the images under $f$ of the elements in $\dom f$. We say that a partial functions $\inverse{f}: B \to_p A$ is \emph{inverse} to another partial function $f: A \to_p B$ if $\dom \inverse{f} = \im f$, $\im \inverse{f} = \dom f$ and $f(\inverse{f}(f(a))) = f(a)$ for all $a \in \dom f$ as well as $\inverse{f}(f(\inverse{f}(b))) = \inverse{f}(b)$ for all $b \in \im f$; in an abuse of terminology, we say that $f^{-1} = \inverse{f}$ is a \emph{group inverse} of $f$ if $\dom f^{-1} = \im f$, $\im f^{-1} = \dom f$ and $f^{-1}(f(a)) = a$ for all $a \in \dom f$ as well as $f(f^{-1}(b)) = b$ for all $b \in \im f$. Note that in the latter case both functions are injective and the group inverse is unique.

    \paragraph*{Automata: Definition, Properties and Operations.}
    In this paper, the term \emph{automaton} refers to a special form of a finite-state transducer: it is
    \begin{itemize}[noitemsep]
      \item synchronous (it outputs exactly one letter on input of one letter),
      \item not necessarily complete and
      \item its input and output alphabets coincide.
    \end{itemize}

    Formally, let $Q$ be a non-empty set and let $\Sigma$ be an alphabet. An automaton is a triple $\mathcal{T} = (Q, \Sigma, \delta)$ with $\delta \subseteq Q \times \Sigma \times \Sigma \times Q$. The set $Q$ is called the \emph{state set} of $\mathcal{T}$, $\Sigma$ is its (input and output) \emph{alphabet} and $\delta$ is its \emph{transition set}. Accordingly, an element $(q, a, b, p) \in \delta$ is called a \emph{transition} of the automaton from state $q$ on \emph{input} $a$ with \emph{output} $b$ into state $p$. To denote a transition, we use the more graphical notation $\trans{q}{a}{b}{p}$ instead of the tuple notation $(q, a, b, p)$. Additionally, we use the common graphical representation for automata: a transition $\trans{q}{a}{b}{p} \in \delta$ is represented as
    \begin{center}
      \begin{tikzpicture}[baseline=(q.base), auto, shorten >=1pt, >=latex]
        \node[state] (q) {$q$};
        \node[state, right=of q] (p) {$p$};
        \path[->] (q) edge node {$a/b$} (p);
      \end{tikzpicture}.
    \end{center}

    An automaton $\mathcal{T} = (Q, \Sigma, \delta)$ is called \emph{deterministic} if
    \[
      \left| \left\{ \trans{q}{a}{b}{p} \mid \trans{q}{a}{b}{p} \in \delta, b \in \Sigma, p \in Q \right\} \right| \leq 1
    \]
    holds for all $q \in Q$ and all $a \in \Sigma$. It is called \emph{complete} if we have
    \[
      \left| \left\{ \trans{q}{a}{b}{p} \mid \trans{q}{a}{b}{p} \in \delta, b \in \Sigma, p \in Q \right\} \right| \geq 1
    \]
    for all $q \in Q$ and all $a \in \Sigma$. An automaton $\mathcal{T} = (Q, \Sigma, \delta)$ is \emph{reversible} if
    \[
      \left| \left\{ \trans{q}{a}{b}{p} \mid \trans{q}{a}{b}{p} \in \delta, b \in \Sigma, q \in Q \right\} \right| \leq 1
    \]
    holds for all $a \in \Sigma$ and $p \in Q$ (i.\,e.\ if it is co-deterministic with respect to the input). This means that $\trans{q}{a}{b}{p}, \trans{q'}{a}{b'}{p} \in \delta$ implies $q = q'$ and $b = b'$ for every $a, b, b' \in \Sigma$ and every $q, q', p \in Q$.

    For every automaton $\mathcal{T} = (Q, \Sigma, \delta)$, one can define its \emph{inverse} automaton $\inverse{\mathcal{T}} = (\inverse{Q}, \Sigma, \inverse{\delta})$ where $\inverse{Q}$ is a disjoint copy of $Q$ and we have
    \[
      \trans{\inverse{q}}{b}{a}{\inverse{p}} \in \inverse{\delta} \iff \trans{q}{a}{b}{p} \in \delta \text{,}
    \]
    i.\,e.\ we swap input and output. By defining $\inverse{\inverse{q}} = q$ for all $q \in Q$, we obtain that taking the inverse of an automaton is an involution, i.\,e.\ we have $\inverse{\inverse{\mathcal{T}}} = \mathcal{T}$.
    For any of the automaton properties defined above, we also have an inverse version: we say $\mathcal{T}$ is \emph{inverse-deterministic} (sometimes also called \emph{invertible}) if $\inverse{\mathcal{T}}$ is deterministic, it is \emph{inverse-complete} if $\inverse{\mathcal{T}}$ is complete, and it is \emph{inverse-reversible} if $\inverse{\mathcal{T}}$ is reversible (this is the case if $\mathcal{T}$ is co-deterministic with respect to the output). Additionally, we define versions of the properties which describe that they hold for an automaton and its inverse at the same time: an automaton $\mathcal{T}$ is \emph{bi-deterministic} if it is deterministic and inverse-deterministic, it is \emph{bi-complete} if it is complete and inverse-complete and it is \emph{bi-reversible} if it is reversible and inverse-reversible.

    Next to the inverse of an automaton, there is also its \emph{dual}. For an automaton $\mathcal{T} = (Q, \Sigma, \delta)$, we define its dual $\partial \mathcal{T} = (\Sigma, Q, \partial \delta)$ by
    \[
      \trans{a}{q}{p}{b} \in \partial \delta \iff \trans{q}{a}{b}{p} \in \delta \text{,}
    \]
    i.\,e.\ we swap the roles of states and letters. Just like taking the inverse, taking the dual is an involution: $\partial \partial \mathcal{T} = \mathcal{T}$. Notice that there are many connections between $\mathcal{T}$ and its dual. For example, we have that
    \begin{itemize}
      \item $\mathcal{T}$ is deterministic if and only if $\partial \mathcal{T}$ is deterministic,
      \item $\mathcal{T}$ is complete if and only if $\partial\mathcal{T}$ is complete,
      \item $\mathcal{T}$ is inverse-deterministic if and only if $\partial\mathcal{T}$ is reversible, and that
      \item $\mathcal{T}$ is inverse-reversible if and only if $\partial\mathcal{T}$ is inverse-reversible.
    \end{itemize}

    Other operations on automata involve two (or more) automata. For example, for any two automata $\mathcal{T}_1 = (Q, \Sigma, \delta)$ and $\mathcal{T}_2 = (P, \Gamma, \rho)$, one can take their \emph{union} automaton $\mathcal{T}_1 \cup \mathcal{T}_2 = (Q \cup P, \Sigma \cup \Gamma, \delta \cup \rho)$. For the union of an automaton $\mathcal{T} = (Q, \Sigma, \delta)$ and its inverse $\inverse{\mathcal{T}} = (\inverse{Q}, \Sigma, \inverse{\delta})$, we use a shorthand notation and denote it by $\wt{\mathcal{T}} = (\wt{Q}, \Sigma, \wt{\delta})$, where $\wt{Q} = Q \cup \inverse{Q}$ and $\wt{\delta} = \delta \cup \inverse{\delta}$.

    In addition to the union, one can also take the \emph{composition} of two automata $\mathcal{T}_1 = (Q, \Sigma, \delta)$ and $\mathcal{T}_2 = (P, \Sigma, \rho)$. It is the automaton $\mathcal{T}_2 \circ \mathcal{T}_1 = (P \circ Q, \Sigma, \rho \circ \delta )$ with $P \circ Q = \{ p \circ q \mid p \in P, q \in Q \}$ (a formal copy of $P \times Q$) given by
    \[
      \trans{p \circ q}{a}{c}{p' \circ q'} \in \rho \circ \delta \iff \trans{q}{a}{b}{q'} \in \delta, \trans{p}{b}{c}{p'} \in \rho \text{ for some $b \in \Sigma$} \text{.}
    \]
    The idea is to use the output $b$ of $\mathcal{T}_1$ on input of $a$ as input for the second automaton $\mathcal{T}_2$. Notice that composition preserves the properties defined above: if $\mathcal{T}_1$ and $\mathcal{T}_2$ are deterministic/complete/reversible, then so is $\mathcal{T}_1 \circ \mathcal{T}_2$.

    A special form of the composition of automata, is the \emph{$k^{\text{th}}$-power} $\mathcal{T}^k$ of an automaton $\mathcal{T}=(Q, \Sigma, \delta)$ for some $k \geq 1$. It is the $k$-fold composition of $\mathcal{T}$ with itself:
    \[
      \mathcal{T}^k = \underbrace{\mathcal{T} \circ \mathcal{T} \circ \dots \circ \mathcal{T}}_{k \text{ times}} \text{.}
    \]

    \paragraph*{Semantics of Automata, Automaton Semigroups.}
    So far, we have defined automata only formally. Of course, behind these definitions stands the common intuitive understanding that an automaton emits an output for some input. Since we will be dealing primarily with deterministic automata in this paper, we give this class a special name: a deterministic automaton is called an \emph{$S$-automaton} from now on. This name stems from the fact that these automata generate semigroups (as we will see shortly). Later on, we will also encounter $\inverse{S}$-automata (generating inverse semigroups) and $G$-automata (generating groups). Every $S$-automaton $\mathcal{T} = (Q, \Sigma, \delta)$ induces a partial right action $\cdot: \Sigma \times Q \to_p Q$ of $\Sigma$ on $Q$: we have $q \cdot a = p$ if $\trans{q}{a}{b}{p} \in \delta$ for some $b \in \Sigma$; here, we use the more common infix notation for $\cdot$. Notice that $b$ must be unique since $\mathcal{T}$ is deterministic by definition. This action can be extended into a partial action of $\Sigma^*$ on $Q$ by setting $q \cdot \varepsilon = q$ and $q \cdot a_1 a_2 \dots a_n = \left( \left( q \cdot a_1 \right) \cdot a_2 \right) \ldots \cdot a_n$ (if defined). Intuitively, this action describes reading the finite \emph{input} word in the automaton: if one starts reading the finite word $w$ in state $q$, one ends up in $q \cdot w$. Notice that this action is total, i.\,e.\ $\cdot$ is a function $\Sigma^* \times Q \to Q$, if (and only if) $\mathcal{T}$ is complete.

    Additionally, every $S$-automaton $\mathcal{T}$ induces a partial left action $\circ: Q \times \Sigma \to_p \Sigma$ of $Q$ on $\Sigma$: we have $q \circ a = b$ if $\trans{q}{a}{b}{p} \in \delta$ for some $p \in Q$; again, we use infix notation for $\circ$. This action can be extended into an action of $Q$ on $\Sigma^*$ by setting $q \circ \varepsilon = \varepsilon$ and $q \circ a w = (q \circ a) \left[ (q \cdot a) \circ w \right]$ (if defined) for all $a \in \Sigma$ and $w \in \Sigma^+$. Here, the intuition is to read a finite input word $a_1 \dots a_n$ in the automaton starting in state $q$. After each letter, the automaton emits an output letter and transitions into a new state. From this state onwards, the next letter is read and so on. Notice that this action is also total, i.\,e.\ $\circ$ is a function $Q \times \Sigma^* \to \Sigma^*$, if (and only if) $\mathcal{T}$ is complete. Furthermore, notice that $\circ$ can be extended into a (partial) function $Q \times \Sigma^\omega \to_p \Sigma^\omega$.

    We can also consider the action of each individual state $q$. We re-use the notation $q \circ{}\!$ to denote the function $q \circ{}\! : \Sigma^* \cup \Sigma^\omega \to_p \Sigma^* \cup \Sigma^\omega$ induced by $\circ$ with first parameter $q$. Notice that, because the automaton is synchronous, all functions $q \circ{}\!$ are length-preserving (whenever they are defined) and prefix-compatible. Now, we can consider the closure of the functions $Q \circ{}\! = \{ q \circ{}\! \mid q \in Q \}$ for some $S$-automaton $\mathcal{T} = (Q, \Sigma, \delta)$ under (finite) composition of partial functions. This is the \emph{semigroup generated by $\mathcal{T}$}, which we denote by $\mathscr{S}(\mathcal{T})$. A semigroup is called an \emph{automaton semigroup} if it is generated by some $S$-automaton $\mathcal{T}$. To avoid notational overhead, we omit the $\circ$ symbols in an element $\bm{q} \circ{}\! = q_n \circ \dots \circ q_1 \circ{}\!$ of an automaton semigroup and simply write $\bm{q} \circ{}\! = q_n \dots q_1 \circ{}\!$.

    Just like we extended the notation $q \circ u$ to cover more than a single state, we can do the same with the notation $q \cdot u$: for an $S$-automaton $\mathcal{T} = (Q, \Sigma, \delta)$, a finite word $u \in \Sigma^*$ and states $q_n, \dots, q_2, q_1 \in Q$, define $q_n \dots q_2 q_1 \cdot u = \left[ q_n \dots q_2 \cdot (q_1 \circ u) \right] (q_1 \cdot u)$ inductively; furthermore, we define $\varepsilon \cdot u = \varepsilon$. Notice that $q_n \dots q_2 q_1 \cdot u$ with this definition coincides with the state reached in $\mathcal{T}^n$ if one starts reading the finite input word $u$ in state $q_n \circ \dots \circ q_2 \circ q_1$.

    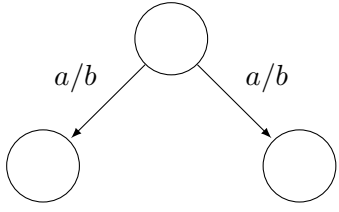
\begin{figure}
      \centering
      \begin{tikzpicture}[auto, shorten >=1pt, >=latex]
        \node[state] (q) {};
        \node[state, below left=of q] (qa) {};
        \node[state, below right=of q] (qc) {};

        \path[->] (q) edge node[swap] {$a/b$} (qa)
                  (q) edge node {$a/b$} (qc);
      \end{tikzpicture}%
      \caption{A bi-re\-ver\-si\-ble automaton which is neither deterministic nor inverse-deterministic}\label{fig:bireversibleNotDetNorInvDet}%
    \end{figure}%
    \begin{remark}
      Notice that the definition of an automaton semigroup presented here differs from the more common one, which is based on \emph{complete} $S$-automata.\footnote{This extended definition was previously used in \cite{DAngeli2017}.}. We call an automaton semigroup which is generated by a complete $S$-automaton a \emph{complete automaton semigroup} to distinguish the two concepts. There are obviously some connections between the two classes but it is not clear whether they coincide or not (see \cite[Section~3]{structurePart} for a discussion of this). The usual way of going from automaton semigroups to complete automaton semigroups is to algebraically adjoin a zero element. For a semigroup $S$, let $S^0$ denote the semigroup obtained from $S$ by adjoining a new zero element. With this notation, if $S$ is an automaton semigroup, then $S^0$ is a complete automaton semigroup \cite[Proposition 1]{DAngeli2017} (but also note the discussion in \cite[Section~3]{structurePart}).

      Notice that non-complete automata sometimes behave differently compared to complete automata. For example, for them, reversibility does not imply determinism as can be seen in \autoref{fig:bireversibleNotDetNorInvDet}.
    \end{remark}
    
    \paragraph*{Inverse Automaton Semigroups and Automaton Groups.}
    Using non-complete automata to define automaton semigroups allows us to define automaton-inverse semigroups, an intermediate step between automaton semigroups and automaton groups.\footnote{It seems that the concept of inverse semigroups generated by partial automata has not been studied widely yet. However, it does appear in \cite{nekrashevych2006self}, for example, and a similar concept was studied by Olijnyk, Sushchansky and Slupik \cite{olijnyk2010inverse} (see also there for previous work by Sushchansky and Slupik).} To do so, we introduce the name \emph{$\inverse{S}$-automaton} to denote bi-deterministic automata. For an $\inverse{S}$-automaton $\mathcal{T} = (Q, \Sigma, \delta)$ and its inverse $\inverse{\mathcal{T}} = (\inverse{Q}, \Sigma, \inverse{\delta})$, we can consider the set $\wt{Q} \circ{}\! = (Q \circ{}\!) \cup (\inverse{Q} \circ{}\!) = \{ \wt{q} \circ{}\! \mid \wt{q} \in Q \cup \inverse{Q} = \wt{Q} \}$ of actions induced by the states of the automaton and its inverse. The \emph{inverse semigroup generated by $\mathcal{T}$}, denoted by $\inverse{\mathscr{S}}(\mathcal{T})$, is the closure of $\wt{Q} \circ{}\!$ under finite composition of partial functions. Notice that it is equal to the semigroup generated by the disjoint union of $\mathcal{T}$ and its inverse $\inverse{\mathcal{T}}$, i.\,e.\ we have $\inverse{\mathscr{S}}(\mathcal{T}) = \mathscr{S}(\mathcal{T} \sqcup \inverse{\mathcal{T}})$. A semigroup is called an \emph{automaton-inverse semigroup} if it is generated by some $\inverse{S}$-automaton. The name comes from the fact that $\inverse{q} \circ{}\!$ and $q \circ{}\!$ are mutually inverse in the sense of partial functions for all states $q \in Q$: $q \circ \inverse{q} \circ q \circ{}\! = q \circ{}\!$ and $\inverse{q} \circ q \circ \inverse{q} \circ{}\! = \inverse{q} \circ{}\!$. Notice that, therefore, $\inverse{\mathscr{S}}(\mathcal{T})$ is an inverse semigroup for all $\inverse{S}$-automata $\mathcal{T}$. Please note that the use of $\inverse{s}$ to denote the inverse of $s$ in a semigroup is compatible with the notation $\inverse{q}$ for the corresponding state in the inverse automaton. Also note the difference in definition between an automaton-inverse semigroup and an inverse automaton semigroup!\footnote{However, it turns out that both concepts coincide \cite[Theorem~25]{structurePart}.}

    \begin{example}
      The (finite) Brandt semigroup $B_2$ is generated by the elements $p$ and $q$ with the relations $p^2 = q^2 = 0$, $pqp = p$ and $qpq = q$; it, thus, contains the elements $\{ p, q, pq, qp, 0 \}$ \cite[p.~32]{howie}\footnote{Readers familiar with syntactic semigroups might also find it interesting that $B_2$ is the syntactic semigroup of $\{ (pq)^n \mid n \geq 1\}$.}. To realize this semigroup as an automaton semigroup, we can let it act on itself (see the proof of \cite[Proposition 4.6]{cain2009automaton}). This leads to the $S$-automaton $\mathcal{T} = (\{ q, p \}, \{ a, b, ab, ba, 0 \}, \delta)$ depicted in \autoref{fig:BrandtSemigroupAutomaton}.
      \begin{figure}
        \begin{center}
          \begin{tikzpicture}[auto, shorten >=1pt, >=latex]
            \node[state] (q) {$p$};
            \node[state, right=of q] (p) {$q$};
  
            \path[->] (q) edge[loop left] node {
              $\begin{aligned}
                a &/ 0 \\
                b &/ ab \\
                ab &/ 0 \\
                ba &/ a \\
                0 &/ 0
              \end{aligned}$} (q);
            \path[->] (p) edge[loop right] node {
              $\begin{aligned}
                a &/ ba \\
                b &/ 0 \\
                ab &/ b \\
                ba &/ 0 \\
                0 &/ 0
              \end{aligned}$} (p);
          \end{tikzpicture}
        \end{center}
        \caption{An automaton generating the Brandt semigroup $B_2$.}\label{fig:BrandtSemigroupAutomaton}
      \end{figure}
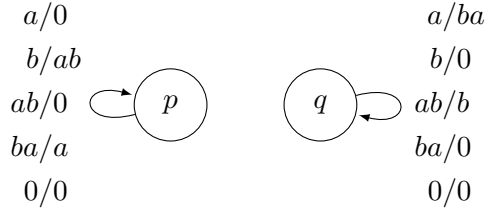
      Here, we have used the alphabet $\{ a, b, ab, ba, 0 \}$ (where we consider $ab$ and $ba$ to be \emph{single} letters) instead of the elements of the semigroup to have a clearer distinction between the two concepts; the idea is to interpret $p$ as $a$ and $q$ as $b$. While the Brandt semigroup $\mathscr{S}(\mathcal{T})$ is an inverse semigroup with $\inverse{q} = p$ and $\inverse{p} = q$, the automaton $\mathcal{T}$ is \emph{not} an $\inverse{S}$-automaton as it is not inverse-deterministic. Therefore, $\inverse{\mathscr{S}}(\mathcal{T})$ has no defined meaning.
    \end{example}

    Finally, we call a complete $\inverse{S}$-automaton a \emph{$G$-automaton}. Notice that, by reasons of cardinality, any $G$-automaton is not only complete but also bi-complete. Thus, for a $G$-automaton $\mathcal{T} = (Q, \Sigma, \delta)$ with inverse $\inverse{\mathcal{T}} = (Q, \Sigma, \delta)$, all functions $q \circ{}\!$ and $\inverse{q} \circ{}\!$ with $q \in Q$ are total length-preserving bijections $\Sigma^* \to \Sigma^*$ (or $\Sigma^\omega \to \Sigma^\omega$, respectively). In this case, the automaton-inverse semigroup $\inverse{\mathscr{S}}(\mathcal{T})$ generated by $\mathcal{T}$ (seen as an $\inverse{S}$-automaton) is a group. We call it the \emph{group generated by $\mathcal{T}$} and denote it by $\mathscr{G}(\mathcal{T})$ to emphasize this fact. We call a group an \emph{automaton group} if it is generated by some $G$-automaton $\mathcal{T}$.

    We also use the (less-precise) name of an \emph{automaton structure} to denote an automaton semigroup, an automaton-inverse semigroup or an automaton group. We summarize the definitions from above in \autoref{tab:automatonStructures}.
    \begin{table}[h!]
      \begin{center}
        \begin{tabular}{lll}
          Automaton Class & Properties & Generated Structure\\\hline
          $S$-automaton & deterministic & semigroup $\mathscr{S}(\mathcal{T})$\\
          $\inverse{S}$-automaton & bi-deterministic & inverse semigroup $\inverse{\mathscr{S}}(\mathcal{T})$\\
          $G$-automaton & bi-deterministic, bi-complete & group $\mathscr{G}(\mathcal{T})$%
        \end{tabular}%
      \end{center}%
      \vspace{-\baselineskip}%
      \caption{Structures defined by automata}\label{tab:automatonStructures}
    \end{table}

    \begin{example}\label{expl:automatonForZ}
      The automaton
      \begin{center}
        \begin{tikzpicture}[auto, shorten >=1pt, >=latex]
          \node[state] (+1) {$+1$};
          \node[state, right=of +1] (+0) {$+0$};

          \path[->] (+1) edge[loop left] node {$1/0$} (+1)
                         edge node {$0/1$} (+0)
                    (+0) edge[loop right] node[align=left] {$0/0$\\$1/1$} (+0)
          ;
        \end{tikzpicture},
      \end{center}
      which shall be denoted by $\mathcal{T}$ for this example, is called the \emph{adding machine}. It is a $G$-automaton and, as such, it is also an $\inverse{S}$- and an $S$-automaton. To understand the group and the semigroup generated by $\mathcal{T}$, it is useful to study the actions of $+1$ and $+0$.

      The state $+0$ obviously acts as the identity on $\{ 0, 1 \}^* \cup \{ 0, 1 \}^\omega$. The action $+1 \circ{}\!$ is more interesting. We can see any finite word from $\{ 0, 1 \}^*$ as representing a binary number in reverse (i.\,e.\ with least significant bit first). The same is true for $\omega$-words of the form $w0^\omega$ for some $w \in \{ 0, 1 \}^*$ as the infinitely many $0$s at the end can be considered leading $0$s in the binary representation. Now, the action of $+1$ increments the encoded number by one. For example, we have $+1 \circ 010 = 110$ and $+1 \circ 1100^\omega = 0010^\omega$. Therefore, it is not surprising that $\mathscr{S}(\mathcal{T})$, the semigroup generated by $\mathcal{T}$, is the free monoid with one generator. Similarly, $\mathscr{G}(\mathcal{T})$ is the free group with one generator.

      If we extend\footnote{This extension is inspired by \cite[Fig.~8]{olijnyk2010inverse}.} the adding machine into the automaton
      \begin{center}
        \begin{tikzpicture}[auto, shorten >=1pt, >=latex]
          \node[state] (+1) {$+1$};
          \node[state, right=of +1] (+0) {$+0$};

          \path[->] (+1) edge[loop left] node {$1/0$} (+1)
                         edge node {$0/1$} (+0)
                         edge node[swap] {$\hat{0}/\hat{1}$} (+0)
                    (+0) edge[loop right] node[align=left] {$0/0$\quad$\hat{0}/\hat{0}$\\$1/1$\quad$\hat{1}/\hat{1}$} (+0)
          ;
        \end{tikzpicture},
      \end{center}
      which we denote by $\hat{\mathcal{T}}$, then the result is not complete anymore but it still is an $\inverse{S}$-automaton whose inverse is
      \begin{center}
        \begin{tikzpicture}[auto, shorten >=1pt, >=latex]
          \node[state] (+1) {$\inverse{+1}$};
          \node[state, right=of +1] (+0) {$\inverse{+0}$};

          \path[->] (+1) edge[loop left] node {$0/1$} (+1)
                         edge node {$1/0$} (+0)
                         edge node[swap] {$\hat{1}/\hat{0}$} (+0)
                    (+0) edge[loop right] node[align=left] {$0/0$\quad$\hat{0}/\hat{0}$\\$1/1$\quad$\hat{1}/\hat{1}$} (+0)
          ;
        \end{tikzpicture}.
      \end{center}
      Notice that we have $\inverse{+0} \circ{}\! = +0 \circ{}\!$. Similarly to our previous examples, $\inverse{\mathscr{S}}(\hat{\mathcal{T}})$, the inverse semigroup generated by $\hat{\mathcal{T}}$, is the free inverse monoid with one generator. By \cite[VIII.4.6, p.~381]{petrich1984} (see also \cite[Lemma~24]{olijnyk2010inverse}), this follows if we show $+1 \inverse{+1} (\inverse{+1})^n (+1)^n \neq (\inverse{+1})^n (+1)^n$ and $\inverse{+1}+1(+1)^n(\inverse{+1})^n \neq (+1)^n(\inverse{+1})^n$ for all $n \in \mathbb{N}$. We have $(+1)^n \circ 0^n \hat{0} = w \hat{0}$ where $w$ is the (reverse/least significant bit first) binary representation of the number $n$ with length $n$. Thus, we also have $(\inverse{+1})^n \circ w \hat{0} = 0^n \hat{0}$. On the other hand, $\inverse{+1} \circ 0^n \hat{0}$ is undefined. This shows $+1 \inverse{+1} (\inverse{+1})^n (+1)^n \circ{}\! \neq (\inverse{+1})^n (+1)^n \circ{}\!$. To show the other inequality, one can use the word $1^n\hat{1}$.
    \end{example}

    \paragraph*{Orbital/Schreier Graphs}
    Many properties of automaton (semi)groups can be studied by exploring their so-called orbital (Schreier) graphs. For an $S$-automaton $\mathcal{T} = (Q, \Sigma, \delta)$ and a word $x \in \Sigma^* \cup \Sigma^\omega$\, we denote the \emph{orbit of $x$} under the action of $\mathcal{T}$ by
    \[
      Q^* \circ x = \{ q_n \dots q_1 \circ x \mid q_1, \dots, q_n \in Q, n \in \mathbb{N} \} \text{.}
    \]
    On this set, we can define a natural graph structure: the \emph{orbital graph} of $\mathcal{T}$ \emph{centered at $x$}, denoted by $\mathcal{T} \circ x$, is a labeled directed graph with $Q^* \circ x$ as the node set and the edge set $\{ \transa{y}{q}{q \circ x} \mid y \in Q^* \circ x, q \in Q, q \circ{}\! \text{ defined on } y \}$. Notice that, for an $\omega$-word $\xi \in \Sigma^\omega$, the graph $\mathcal{T} \circ \xi$ is the limit in the sense of pointed Gromov-Hausdorff convergence of the sequence $\mathcal{T} \circ \xi_n$ where $\xi_n$ is the prefix of length $n$ of $\xi$.
    If $\mathcal{T} = (Q, \Sigma, \delta)$ is an $\inverse{S}$-automaton (or even a $G$-automaton), then we can extend these notions to include inverses. Let $\inverse{\mathcal{T}} = (\inverse{Q}, \Sigma, \inverse{\delta})$ be the inverse of $\mathcal{T}$, let $x \in \Sigma^* \cup \Sigma^\omega$ be a word, and let $\wt{Q} = Q \cup \inverse{Q}$. Then, we can define the orbit of $x$ under the action of $\mathcal{T}$ as an $\inverse{S}$-automaton:
    \[
      \wt{Q}^* \circ x = \{ \wt{q}_n \dots \wt{q}_1 \circ x \mid \wt{q}_1, \dots, \wt{q}_n \in \wt{Q}, n \in \mathbb{N} \} \text{.}
    \]
    Again, we have a natural graph structure on this set: the \emph{Schreier graph} of $\mathcal{T}$ \emph{centered at $x$} is the labeled directed graph $\wt{\mathcal{T}} \circ x$ with node set $\wt{Q}^* \circ x$. It contains an edge $y \longmapright{\wt{q}} z$ whenever $\wt{q} \circ y = z$ for $\wt{q} \in \wt{Q}$. Notice that $\mathcal{T} \circ x$ is always a sub-graph of $\wt{\mathcal{T}} \circ x$. Just like with $\mathcal{T} \circ \xi$, $\wt{\mathcal{T}} \circ \xi$ is the limit (in the sense of pointed Gromov-Hausdorff convergence) of the sequence $\wt{\mathcal{T}} \circ \xi_n$ where $\xi_n$ is the prefix of length $n$ of $\xi$.

    \begin{example}\label{expl:orbitalSchreierGraph}
      Recall the adding machine $\mathcal{T}$ from \autoref{expl:automatonForZ}. The following figure depicts the infinite orbital graph $\mathcal{T} \circ 0^\omega$ (dark) and the additional vertices and edges in the Schreier graph $\wt{\mathcal{T}} \circ 0^\omega$ \textcolor{gray!70}{(light)} schematically.
      \begin{center}
        \begin{tikzpicture}[auto, shorten >=1pt, >=latex,
            vertex/.style={circle, fill=black, draw=none, inner sep=1pt},
            light/.style={
              color=gray!70, prefix after command={\pgfextra{\tikzset{every label/.style={color=gray!70}}}}},
            node distance=2cm]
          \node[light] (ld) {$\dots$};
          \node[vertex, light, label=below:$101^\omega$, right=0cm of ld] (-3) {};
          \node[vertex, light, label=below:$011^\omega$, right=of -3] (-2) {};
          \node[vertex, light, label=below:$111^\omega$, right=of -2] (-1) {};
          \node[vertex, label=below:$000^\omega$, right=of -1] (0) {};
          \node[vertex, label=below:$100^\omega$, right=of  0] (1) {};
          \node[vertex, label=below:$010^\omega$, right=of  1] (2) {};
          \node[vertex, label=below:$110^\omega$, right=of  2] (3) {};
          \node[right=0cm of 3] {$\dots$};

          \path[->] (-3) edge[loop above, looseness=25, out=125, in=55, light] node {$+0$} (-3)
                    (-2) edge[loop above, looseness=25, out=125, in=55, light] node {$+0$} (-2)
                    (-1) edge[loop above, looseness=25, out=125, in=55, light] node {$+0$} (-1)
                     (0) edge[loop above, looseness=25, out=125, in=55] node {$+0$}  (0)
                     (1) edge[loop above, looseness=25, out=125, in=55] node {$+0$}  (1)
                     (2) edge[loop above, looseness=25, out=125, in=55] node {$+0$}  (2)
                     (3) edge[loop above, looseness=25, out=125, in=55] node {$+0$}  (3)
                    (-3) edge[light, bend left] node {$+1$} (-2)
                    (-2) edge[light, bend left] node {$+1$} (-1)
                    (-1) edge[light, bend left] node {$+1$}  (0)
                     (0) edge[bend left] node {$+1$}  (1)
                     (1) edge[bend left] node {$+1$}  (2)
                     (2) edge[bend left] node {$+1$}  (3)
                    (-2) edge[light] node {$\inverse{+1}$} (-3)
                    (-1) edge[light] node {$\inverse{+1}$} (-2)
                     (0) edge[light] node {$\inverse{+1}$} (-1)
                     (1) edge[light] node {$\inverse{+1}$} (0)
                     (2) edge[light] node {$\inverse{+1}$} (1)
                     (3) edge[light] node {$\inverse{+1}$} (2)
                    ;
        \end{tikzpicture}
      \end{center}
    \end{example}

    One may have noticed that $\widetilde{\mathcal{T}} \circ 0^\omega$ from \autoref{expl:orbitalSchreierGraph} coincides with the Cayley graph of $\mathscr{G}(\mathcal{T}) = (\mathbb{Z}, +)$. This is not a coincidence but the reason why $\widetilde{\mathcal{T}} \circ x$ is called the \emph{Schreier} graph: it is isomorphic to the coset graph of the stabilizer of $x$. For an $S$-automaton $\mathcal{T} = (Q, \Sigma, \delta)$ and a word $x \in \Sigma^* \cup \Sigma^\omega$, we define
    \[
      \Stab_{\mathcal{T}}(x) = \{ \bm{q} \mid \bm{q} \in Q^+, \bm{q} \circ x = x \}
    \]
    as the set of (positive length) state sequences whose actions stabilize $x$. Accordingly, $\Stab_{\mathcal{T}}(x) \circ{}\! = \{ \bm{q} \circ{}\! \mid \bm{q} \in Q^+, \bm{q} \circ x = x \} \subseteq \mathscr{S}(\mathcal{T})$ is the (semigroup) \emph{stabilizer of $x$} under the action induced by $\mathcal{T}$. Notice that, if $\mathcal{T}$ is an $\inverse{S}$-automaton, then, with this notation, we have
    \[
      \Stab_{\mathcal{T} \sqcup \inverse{\mathcal{T}}}(x) = \{ \bm{\wt{q}} \circ{}\! \mid \bm{\wt{q}} \in (Q \sqcup \inverse{Q})^+, \bm{\wt{q}} \circ x = x \} \text{.}
    \]
    If $\mathcal{T}$ is a $G$-automaton, then
    \[
      \Stab_{\mathcal{T} \sqcup \inverse{\mathcal{T}}}(x) \circ{}\! = \{ \bm{\wt{q}} \circ{}\! \mid \bm{\wt{q}} \in (Q \sqcup \inverse{Q})^*, \bm{\wt{q}} \circ x = x \} \subseteq \mathscr{G}(\mathcal{T})
    \]
    is the group stabilizer of $x$ under the action induced by $\mathcal{T}$. The isomorphism between $\widetilde{\mathcal{T}} \circ x$ and $\mathscr{G}/\Stab_{\mathcal{T} \sqcup \inverse{\mathcal{T}}}(x) \circ{}\!$ can now be seen easily (using the mapping $\bm{q} \circ{}\! \mapsto \bm{q} \Stab_{\mathcal{T} \sqcup \inverse{\mathcal{T}}}(x) \circ{}\!$). In addition to the name, this connection has another consequence: the size of $\widetilde{\mathcal{T}} \circ x$ is the index of $\Stab_{\mathcal{T} \sqcup \inverse{\mathcal{T}}}(x) \circ{}\!$ in $\mathscr{G}(\mathcal{T})$.

    The finiteness of orbital and Schreier graphs is an important property in studying certain algebraic properties of the corresponding automaton structure. This relationship has been exploited in \cite{DaRo16} in connection with the property of an automaton group to be free. In \cite{DaRo16} the authors implicitly used the following fact for automaton groups. It states that, when considering automaton groups, it is indifferent whether one considers the orbital graph or the Schreier graphs. The former is finite if and only if the latter is.

    \begin{lemma}\label{lem:groupAndSemigroupOrbitsCoincide}
      Let $\mathcal{T} = (Q, \Sigma, \delta)$ be a $G$-automaton and let $\inverse{\mathcal{T}} = (\inverse{Q}, \Sigma, \inverse{\delta})$ be its inverse. Then, for any $x \in \Sigma^* \cup \Sigma^\omega$, we have
      \[
        Q^* \circ x = \wt{Q}^* \circ x \quad \text{or} \quad |Q^* \circ x| = |\wt{Q}^* \circ x| = \infty
      \]
      where $\wt{Q} = Q \sqcup \inverse{Q}$.
    \end{lemma}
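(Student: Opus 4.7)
The inclusion $Q^* \circ x \subseteq \wt{Q}^* \circ x$ is immediate from the definitions, so if $Q^* \circ x$ is infinite, then $\wt{Q}^* \circ x$ is infinite as well. The plan is therefore to show that when $Q^* \circ x$ is finite, the reverse inclusion $\wt{Q}^* \circ x \subseteq Q^* \circ x$ holds.

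The first observation is that $Q^* \circ x$ is by construction invariant under the action of $Q$: for every $q \in Q$ and every $y \in Q^* \circ x$, we have $q \circ y \in Q^* \circ x$. Since $\mathcal{T}$ is a $G$-automaton, each $q \circ{}\!$ is a (total, length-preserving) bijection $\Sigma^* \cup \Sigma^\omega \to \Sigma^* \cup \Sigma^\omega$, and the bijection $\inverse{q} \circ{}\!$ on the same set is its two-sided inverse. In particular, $q \circ{}\!$ restricted to $Q^* \circ x$ is injective.

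Now comes the step that uses finiteness: if $Q^* \circ x$ is finite, then $q|_{Q^* \circ x}$ is an injective self-map of a finite set, hence a bijection of $Q^* \circ x$. Its inverse as a permutation of $Q^* \circ x$ must agree with the global inverse $\inverse{q} \circ{}\!$, so $\inverse{q}$ also maps $Q^* \circ x$ bijectively onto itself. Consequently, $Q^* \circ x$ is invariant under the action of every letter in $\wt{Q}$. A straightforward induction on the length of $\bm{\wt{q}} \in \wt{Q}^*$, using $x \in Q^* \circ x$ as the base case, then yields $\bm{\wt{q}} \circ x \in Q^* \circ x$ for all $\bm{\wt{q}}$, which is exactly $\wt{Q}^* \circ x \subseteq Q^* \circ x$.

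There is essentially no hard step here; the only subtle point is that the pointwise identification of $\inverse{q}|_{Q^* \circ x}$ with the combinatorial inverse of $q|_{Q^* \circ x}$ relies on the bi-completeness of a $G$-automaton (without it, $\inverse{q} \circ y$ might not be defined, which is precisely why the analogous statement fails for general $\inverse{S}$-automata and why the distinction between orbital and Schreier graphs is meaningful in that broader setting).
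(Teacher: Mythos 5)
Your proof is correct and uses essentially the same argument as the paper: both hinge on the observation that, when $Q^* \circ x$ is finite, each $q \circ{}\!$ restricts to an injective and hence surjective self-map of the orbit, so every element of the orbit has a $q$-preimage inside it and $\inverse{q}$ cannot lead outside. The paper phrases this as a contradiction via a first edge leaving $Q^* \circ x$ in the Schreier graph, while you state the closure under $\inverse{Q}$ directly and induct, but the substance is identical.
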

    \begin{proof}
      Let $x \in \Sigma^* \cup \Sigma^\omega$ be arbitrary. Trivially, we have $Q^* \circ x \subseteq \wt{Q}^* \circ x$. Thus, it remains to show $\wt{Q}^* \circ x \subseteq Q^* \circ x$ if $Q^* \circ x$ is finite (which is always the case for $x \in \Sigma^*$). Suppose, $Q^* \circ x \subsetneq \wt{Q}^* \circ x$. Then, there is an element $y \in \wt{Q}^* \circ x \setminus Q^* \circ x$ such that $z \longmapright{\inverse{q}} y$ is an edge in $\wt{\mathcal{T}} \circ x$ for some $z \in Q^* \circ x$ and some $\inverse{q} \in \inverse{Q}$. Because $Q^* \circ x$ is finite and because $\mathcal{T}$ is complete, $z$ must have an in-coming $q$-labeled edge from some  $y' \in Q^* \circ x$ by cardinality reasons. This yields the edge $\transa{z}{\inverse{q}}{y'}$ in $\wt{\mathcal{T}} \circ x$. However, this implies $y = y' \in Q^* \circ x$ (by the determinism of the graph); a contradiction.
    \end{proof}

    In the next lemma, we show that the completeness of the automaton is essential in the statement of \autoref{lem:groupAndSemigroupOrbitsCoincide}.
    \begin{lemma}\label{lem:inverseAndSemigroupOrbitsDoNotCoincide}
      Let $\mathcal{T} = (Q, \{ 0, \hat{0}, 1, \hat{1} \}, \delta)$ denote the following $\inverse{S}$-automaton (a kind of \enquote{mark adding} variation of the adding machine from \autoref{expl:automatonForZ})
      \begin{center}
        \begin{tikzpicture}[auto, shorten >=1pt, >=latex]
          \node[state] (q) {$q$};
          \node[state, right=of q] (p) {$p$};
    
          \path[->] (q) edge[loop left] node {$1/\hat{0}$} (q)
                        edge node {$0/\hat{1}$} (p)
                    (p) edge[loop right] node[align=left] {$0/\hat{0}$\\$1/\hat{1}$} (p)
          ;
        \end{tikzpicture}
      \end{center}
      and let $\inverse{\mathcal{T}} = (\inverse{Q}, \{ 0, \hat{0}, 1, \hat{1} \}, \inverse{\delta})$ denote its inverse
        \begin{center}
          \begin{tikzpicture}[auto, shorten >=1pt, >=latex]
            \node[state] (q) {$\inverse{q}$};
            \node[state, right=of q] (p) {$\inverse{p}$};
    
            \path[->] (q) edge[loop left] node {$\hat{0}/1$} (q)
                          edge node {$\hat{1}/0$} (p)
                      (p) edge[loop right] node[align=left] {$\hat{0}/0$\\$\hat{1}/1$} (p)
            ;
          \end{tikzpicture}.
        \end{center}
      Then, we have that $\inverse{\mathscr{S}}(\mathcal{T})$ is infinite and $|\wt{Q}^* \circ 0^\omega| = \infty$ where $\wt{Q} = Q \sqcup \inverse{Q}$ but also that $\mathscr{S}(\mathcal{T})$ is finite and $|Q^* \circ \xi| \leq 2$ for all $\xi \in \Sigma^\omega$ and, in particular $|Q^* \circ 0^\omega| = 2$.
    \end{lemma}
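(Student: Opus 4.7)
The plan is to split the four assertions into a \emph{finiteness half} --- about $\mathscr{S}(\mathcal{T})$ and $Q^{*}\circ\xi$ --- and an \emph{infiniteness half} --- about $\inverse{\mathscr{S}}(\mathcal{T})$ and $\wt Q^{*}\circ 0^{\omega}$. The first half I would handle with a simple type-separation observation; for the second I would exhibit a single element of $\inverse{\mathscr{S}}(\mathcal{T})$ whose powers act pairwise-distinctly on $0^{\omega}$.

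\textbf{Finiteness.} My key observation is that both generators $p$ and $q$ are \enquote{hatting} maps: their domains are contained in $\{0,1\}^{*}\cup\{0,1\}^{\omega}$ while their images lie in $\{\hat 0,\hat 1\}^{*}\cup\{\hat 0,\hat 1\}^{\omega}$, and these two sets are disjoint. Hence, for any $f,g\in\{p,q\}$, the composition $g\circ f$ has empty domain (the image of $f$ consists of hatted letters, but $g$ has no transition on hatted input), and by induction every product of length at least $2$ in $\{p,q\}$ is the empty partial function. Thus $\mathscr{S}(\mathcal{T})$ reduces to $\{p,q,\emptyset\}$, which is finite. The same observation constrains the orbits: for $\xi\in\Sigma^{\omega}$, the only possible further elements of $Q^{*}\circ\xi$ are $p\circ\xi$ and $q\circ\xi$, which exist only when $\xi\in\{0,1\}^{\omega}$, and whose values lie in $\{\hat 0,\hat 1\}^{\omega}$ where no generator can act. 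For $\xi=0^{\omega}$ I compute $p\circ 0^{\omega}=\hat 0^{\omega}$ and $q\circ 0^{\omega}=\hat 1\hat 0^{\omega}$, two distinct elements, giving the stated orbit size.

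\textbf{Infiniteness.} I would set $\sigma=\inverse{q}\,p$; its domain contains all of $\{0,1\}^{\omega}$ since $p$ hats the input and $\inverse{q}$ is defined on hatted letters. A short case analysis on the leading hatted letter after applying $p$ --- using that $\inverse{q}$ loops on $\hat 0$ producing $1$, and switches to $\inverse{p}$ on $\hat 1$ producing $0$, after which $\inverse{p}$ merely unhats --- will show that, under the LSB-first identification $\{0,1\}^{\omega}\cong\mathbb{Z}_{2}$ with the $2$-adic integers, the map $\sigma$ acts as subtraction by one: the transition $\inverse{q}\to\inverse{p}$ precisely matches the point at which the carry in $x-1$ dies out. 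Hence $\sigma^{n}\circ 0^{\omega}$ is the $2$-adic expansion of $-n$ --- e.g.\ $1^{\omega}$, $01^{\omega}$, $101^{\omega}$, $001^{\omega}$ for $n=1,2,3,4$ --- and these are pairwise distinct. This immediately gives $|\wt Q^{*}\circ 0^{\omega}|=\infty$, and since $\sigma^{m}\circ 0^{\omega}\neq\sigma^{n}\circ 0^{\omega}$ for $m\neq n$, the partial functions $\sigma^{n}\circ{}\!$ are pairwise distinct in $\inverse{\mathscr{S}}(\mathcal{T})$, which is therefore infinite. The main technicality I expect is verifying the $2$-adic interpretation of $\sigma$; it can equally well be replaced by a direct induction tracking the length and content of the prefix before the eventual $1^{\omega}$-tail of $\sigma^{n}\circ 0^{\omega}$.
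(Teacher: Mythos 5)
Your proof is correct in substance and follows essentially the same route as the paper: the finiteness half rests on the same type-separation observation (domains unhatted, images hatted, so every product of length at least two is the empty partial function), and the infiniteness half exhibits powers of a single mixed element acting pairwise-distinctly on $0^\omega$ --- you use $\sigma = \inverse{q}\,p$, which acts as subtraction by one in the LSB-first $2$-adic encoding, where the paper uses $\inverse{p}\,q$, which acts as addition by one and sends $0^\omega$ to the reverse binary expansions of $1, 2, 3, \dots$; these are mirror images of the same idea and both verify correctly.

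One point deserves flagging. In the finiteness half you correctly identify the candidates $\xi$, $p \circ \xi$ and $q \circ \xi$ for the orbit, and for $\xi = 0^\omega$ you compute $p \circ 0^\omega = \hat{0}^\omega$ and $q \circ 0^\omega = \hat{1}\hat{0}^\omega$; together with $0^\omega$ itself (which belongs to $Q^* \circ 0^\omega$ by the paper's definition, taking $n = 0$) this gives \emph{three} distinct elements, not the two you claim when you say this yields ``the stated orbit size.'' Your own computation therefore contradicts the bounds $|Q^* \circ \xi| \leq 2$ and $|Q^* \circ 0^\omega| = 2$ rather than establishing them. You have in fact uncovered a (harmless) error in the lemma itself: the paper's proof asserts $Q^* \circ 0^\omega = \{ 0^\omega, \hat{1}\hat{0}^\omega \}$, overlooking $p \circ 0^\omega = \hat{0}^\omega$, and the correct constants are $|Q^* \circ \xi| \leq 3$ and $|Q^* \circ 0^\omega| = 3$. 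Nothing downstream depends on the exact constant --- the lemma is only used for the contrast between a uniformly bounded semigroup orbit and an infinite inverse-semigroup orbit --- but you should either correct the constant or explain why two images suffice, rather than asserting agreement with a count your computation does not produce.
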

    \begin{proof}
      Any $\omega$-word from $\{ 0, 1 \}^\omega$ containing only finitely many occurrences of $1$ can be seen as the reverse binary representation of some natural number $n \in \mathbb{N}$ (with infinitely many leading/trailing zeros). For example, we have $0^\omega = \revbin(0)$ and $10110^\omega = \revbin(13)$, where $\revbin(n)$ denotes the (infinite) reverse binary representation of a natural number $n \in \mathbb{N}$.

      Recall from \autoref{expl:automatonForZ}, that the action of $q$ on such words basically increments the represented number by one. In the automaton depicted in the lemma, $q \circ{}$ additionally adds a decoration to each letter. Let $\langle \cdot \rangle: \{ 0, 1 \}^* \to \{ \hat{0}, \hat{1} \}^*$ denote the isomorphism given by $\langle 0 \rangle = \hat{0}$ and $\langle 1 \rangle = \hat{1}$. Then, we have $q \circ \revbin(n) = \langle \revbin(n + 1) \rangle$ and $p \circ{} = \langle \cdot \rangle$.

      Because $p \circ{}\!$ and $q \circ{}\!$ are only defined on words over $\{ 0, 1 \}$ (and, in particular, not on words containing at least one $\hat{0}$ or $\hat{1}$), we have, for all $\eta \in \Sigma^\omega$, that $Q^* \circ \eta$ contains $\eta$ itself and, possibly, a single other word (if $\eta$ does not contain any $\hat{0}$ or $\hat{1}$). For example, we have $Q^* \circ 0^\omega = \{ 0^\omega, \hat{1}\hat{0}^\omega \}$. Additionally, this shows that $\mathscr{S}(\mathcal{T})$ is finite.

      Since we can remove the decorations added by $q \circ{}$ using $\inverse{p}$ (if we take the inverses into consideration), we have $\revbin(n) = (\inverse{p} q)^n \circ 0^\omega \in \wt{Q}^* \circ 0^\omega$ for all $n \in \mathbb{N}$ and, thus, $|\wt{Q}^* \circ 0^\omega| = \infty$. This also shows that $\inverse{\mathscr{S}}(\mathcal{T})$ is infinite.
    \end{proof}
    \autoref{lem:inverseAndSemigroupOrbitsDoNotCoincide} has another consequence. For a $G$-automaton $\mathcal{T}$, we have that $\mathscr{G}(\mathcal{T})$ is finite if and only if $\mathscr{S}(\mathcal{T})$ is \cite{aklmp12}. However, the lemma states that the analog for $\inverse{S}$-automata does not hold: there is an $\inverse{S}$-automaton $\mathcal{T}$ such that $\mathscr{S}(\mathcal{T})$ is finite but $\inverse{\mathscr{S}}(\mathcal{T})$ is not.
  \end{section}


\begin{section}{Freeness, Positive Relations and the Dynamics in the Boundary}

In \cite{DaRo16}, the study of the algorithmic problem of checking whether an automaton group is free has been initiated. Formally, let \DecProblem{Freeness} be the decision problem:
     \problem
        {a $G$-automaton $\mathcal{T}$}
        {is $\mathscr{G}(\mathcal{T})$ free?}
\noindent{}One of the results from \cite{DaRo16} on \DecProblem{Freeness} is a connection to the existence of finite Schreier graphs, which we recall in the next corollary.
\begin{cor}\cite[Corollary 2]{DaRo16}
The algorithmic problem of establishing whether a group generated by a $G$-automaton $\mathcal{T}=(Q, \Sigma, \delta)$ is not free is equivalent to the problem of checking whether $\partial(\mathcal{T} \sqcup \inverse{\mathcal{T}})$ possesses a finite orbital graph in the boundary centered at a periodic point $y^{\omega} \in (Q \sqcup \inverse{Q})^\omega$ where $y$ is different to the identity in the free group $FG(Q)$. Furthermore, for a bi-reversible $G$-automaton $\mathcal{T} = (Q, \Sigma, \delta)$, checking if $\mathscr{G}(\mathcal{T})$ is free is equivalent to checking whether or not there exists a finite Schreier graph of $\partial(\mathcal{T} \sqcup \inverse{\mathcal{T}})$ centered at an essentially non-trivial element\footnote{For an $\omega$-word $\xi \in (Q \sqcup \inverse{Q})^\omega$, consider the sequence of the prefixes of $\xi$ reduced in the free group $FG(Q)$. If this sequence contains a sub-sequence converging to an element from $(Q \sqcup \inverse{Q})^\omega$ (and not to an element from $(Q \sqcup \inverse{Q})^*$), then $\xi$ is \emph{essentially non-trivial}. See \cite[Proposition 4]{DaRo16} for equivalent definitions of essentially trivial elements.} from $(Q \sqcup \inverse{Q})^\omega$.
\end{cor}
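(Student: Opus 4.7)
Since this corollary is recalled from \cite{DaRo16}, I would reprove it by exploiting the standard duality between an automaton and its dual. The central observation is that a word $\bm{y} \in (Q \sqcup \inverse{Q})^+$ acts as the identity on $\Sigma^*$ if and only if, in $\partial(\mathcal{T} \sqcup \inverse{\mathcal{T}})$, reading $\bm{y}$ from every state $a \in \Sigma$ produces output $\bm{y}$ itself and returns to $a$. Iterating this, $\bm{y}$ acts as the identity precisely when $\bm{y}^\omega \in (Q \sqcup \inverse{Q})^\omega$ is a fixed point of each letter of $\Sigma$ in the dual, in which case the orbital graph $\partial\widetilde{\mathcal{T}} \circ \bm{y}^\omega$ collapses to a single vertex with self-loops.

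For the first statement, if $\mathscr{G}(\mathcal{T})$ is not free, then any non-trivial freely reduced relator $\bm{y}$ yields a periodic $\bm{y}^\omega$ with finite orbital graph and $\bm{y}$ non-trivial in $FG(Q)$. Conversely, from a finite orbital graph at $\bm{y}^\omega$ one obtains, by pigeonhole on the orbit under the $\Sigma$-action, a common power $k \geq 1$ such that reading $\bm{y}^k$ from every $a \in \Sigma$ returns to $a$ with output $\bm{y}^k$; by the duality above $\bm{y}^k$ acts as the identity on $\Sigma^*$. Since $FG(Q)$ is torsion-free, $\bm{y}^k$ is non-trivial and so $\mathscr{G}(\mathcal{T})$ fails to be free.

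For the bi-reversible case, bi-reversibility of $\mathcal{T}$ forces $\partial \widetilde{\mathcal{T}}$ itself to be a $G$-automaton, so \autoref{lem:groupAndSemigroupOrbitsCoincide} applies and orbital graphs in the dual coincide with Schreier graphs whenever they are finite. The extension from periodic points to arbitrary essentially non-trivial $\xi \in (Q \sqcup \inverse{Q})^\omega$ proceeds via a compactness argument: if the Schreier orbit of $\xi$ is finite, then pigeonholing the sequence of dual states visited along the prefixes of $\xi$ produces prefixes $u$ and $uv$ of $\xi$ that coincide as dual states for every $a \in \Sigma$, so the infix $v$ acts as the identity on $\Sigma^*$; essential non-triviality of $\xi$ then allows us to choose $u$, $v$ so that $v$ does not reduce to the empty word in $FG(Q)$.

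The main obstacle is making the last step of the bi-reversible argument precise: extracting an infix $v$ which is non-trivial in $FG(Q)$ requires carefully aligning the pigeonhole cuts with the reduced prefixes guaranteed by essential non-triviality (cf.~\cite[Proposition~4]{DaRo16}), so that the chosen infix does not freely collapse once reduction in $FG(Q)$ is performed. The converse direction of the bi-reversible statement is comparatively easy: the first statement already provides a non-trivial freely reduced relator $\bm{y}$, and $\bm{y}^\omega$ is automatically essentially non-trivial, with finite Schreier graph equal to its finite orbital graph.
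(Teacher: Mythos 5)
First, a point of comparison: the paper does not prove this corollary at all --- it is recalled verbatim from \cite[Corollary 2]{DaRo16} and used as a black box, so there is no in-paper argument to measure your reconstruction against. Judged on its own, your proof has a genuine gap that runs through both halves. Your ``central observation'' is only an implication, not an equivalence: if $\bm{y}$ acts as the identity, then reading $\bm{y}$ in the dual from any $a \in \Sigma$ does return to $a$, but the output is the section $\bm{y} \cdot a$, which is again a relator yet need \emph{not} equal $\bm{y}$ as a word over $Q \sqcup \inverse{Q}$. Consequently $a \circ \bm{y}^\omega = (\bm{y}\cdot a)^\omega$ may differ from $\bm{y}^\omega$, and the orbital graph does not collapse to a single vertex; it is merely finite (contained in the finite set of periodic points $\bm{z}^\omega$ with $|\bm{z}| = |\bm{y}|$ and $\bm{z}$ a relator). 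That still salvages the easy direction, but it breaks your converse: you claim pigeonhole yields $k$ with $\bm{y}^k \circ a = a$ and $\bm{y}^k \cdot a = \bm{y}^k$ as words for every $a$. Since $\bm{y}^k \cdot a$ begins with the block $\bm{y} \cdot a$, the required word equality already forces $\bm{y} \cdot a = \bm{y}$ for every $a$, i.\,e.\ that every state of the dual fixes the point $\bm{y}^\omega$ --- far stronger than finiteness of its orbit --- and passing to powers cannot help because $(\bm{y}^k)^\omega = \bm{y}^\omega$. What finiteness actually gives is that all sections $\bm{y}^k \cdot w$ are prefixes of finitely many $\omega$-words; extracting a non-trivial relation from that is the real content of \cite{DaRo16} and is not a one-line pigeonhole.

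The same conflation of ``fixes every letter'' with ``acts as the identity'' reappears in your bi-reversible argument: prefixes $u$ and $uv$ of $\xi$ that agree on the level of $\Sigma$ only give $v \circ a = a$ for all $a \in \Sigma$; to conclude that $v$ acts as the identity on all of $\Sigma^*$ you must additionally control every section $v \cdot w$, which is exactly the part that needs the full finite Schreier graph and a genuine argument. Combined with the difficulty you yourself flag (arranging the cuts so that $v$ does not reduce to $\idGrp$ in $FG(Q)$), neither hard direction is established. The sound ingredients are: the forward direction of the first statement (a relator yields a finite orbital graph at a periodic point, once the ``single vertex'' claim is weakened as above), the observation that $\partial(\mathcal{T} \sqcup \inverse{\mathcal{T}})$ is a $G$-automaton in the bi-reversible case so that \autoref{lem:groupAndSemigroupOrbitsCoincide} applies, and the easy direction of the second statement.
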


\paragraph{Positive Relations.}
In the direction of \DecProblem{Freeness}, we may consider the weaker problem \DecProblem{Positive Relations}. For a given $G$-automaton $\mathcal{T} = (Q, \Sigma, \delta)$, the set of \emph{positive relations} of $\mathscr{G}(\mathcal{T})$ with respect to $\mathcal{T}$ is given by
\[
  \mathcal{P}(\mathcal{T}) = \{ \bm{q} \in Q^{+} \mid \bm{q} = \idGrp \text{ in } \mathscr{G}(\mathcal{T}) \} \text{,}
\]
i.\,e.\ it is the set of state sequences (which do not contain states from the inverse of $\mathcal{T}$) which act like the identity. So, for instance, if $\mathcal{P}(\mathcal{T}) = \emptyset$, then the associated semigroup $\mathscr{S}(\mathcal{T})$ is torsion-free. This allows us to formally define the decision problem \DecProblem{Positive Relations}:
\problem
  {a $G$-automaton $\mathcal{T}$}
  {is $\mathcal{P}(\mathcal{T}) = \emptyset$?}

Notice that, in contrast to freeness, which is a property of the group, positive relations are dependent on the automaton representation. For example, if $\mathcal{T}$ denotes the adding machine from \autoref{expl:automatonForZ}, then $\mathscr{G}(\mathcal{T})$ is the free group of rank one, which is obviously free, but the set $\mathcal{P}(\mathcal{T})$ of positive relations is not empty, since we have, for example, $Q \ni +0 = \idGrp$ in $\mathscr{G}(\mathcal{T})$. So, the presence of a sink state in a $G$-automaton will always cause the set of positive relations to be non-empty.

Despite its dependency on the presentation, studying positive relations is still worthwhile as, for example, the emptiness of the set $\mathcal{P}(\mathcal{T})$ is strictly related to the dynamics of an automaton group $\mathscr{G}(\mathcal{T})$. While we refer the reader to \cite{DaGoKPRo16} for further details, especially regarding reversible and bi-reversible automata, we give an example of this connection.

In \cite[Lemma 5.8]{DaGoKPRo16}, it is observed that, for a $G$-automaton $\mathcal{T} = (Q, \Sigma, \delta)$, the absence of positive relations $\mathcal{P}(\mathcal{T}) = \emptyset$ implies the existence of a periodic point $u^{\omega}$, for some $u \in \Sigma^{*}$, with a non-trivial \emph{semigroup} stabilizer $\Stab_{\mathcal{T}}(u^{\omega}) \circ{}\! \neq \emptyset$.

Another example of the connection is \cite[Corollary 4]{DaRo16}: if a reversible $G$-automaton $\mathcal{T}$ generates an infinite group $G$, then the index $[G : \Stab_{\mathcal{T} \sqcup \inverse{\mathcal{T}}}(u^\omega) \circ{}\!]$ is infinite for all $u \in \Sigma^*$ if and only if the dual of $\mathcal{T}$ admits no positive relations: $\mathcal{P}(\partial\mathcal{T}) = \emptyset$.

\paragraph*{Positive Relations and the Structure of Orbital Graphs.}
Many interesting properties of automaton structures are reflected in the structure of their orbital graphs. For example, \autoref{prop:uniformly bounded} in \autoref{sct:finiteness} relates the finiteness of an automaton semigroup to the uniform boundedness of its orbital graphs. Similarly, the absence of positive relations in an automaton group is related to the absence of circles in some of its orbital graphs. We will develop this simple, yet interesting, connection next.

For any countable group $G$, we can identify each subgroup $H$ with its characteristic function, which maps an element $g \in G$ to $1$ if $g \in H$ and to $0$, otherwise. In this way, we have identified the set $\Sub(G)$ of subgroups of $G$ with $\{ 0, 1 \}^{|G|}$, which we can in turn endow with the Tychonoff topology of the $|G|$-fold product of the discrete topological space $\{ 0, 1 \}$.  For a $G$-automaton $\mathcal{T} = (Q, \Sigma, \delta)$, this allows us to consider the continuity of the map $\Stab_{\mathcal{T} \sqcup \inverse{\mathcal{T}}}(\cdot) \circ{}\!: \Sigma^{\omega} \to \Sub(\mathscr{G}(\mathcal{T}))$ which maps $\xi$ to $\Stab_{\mathcal{T} \sqcup \inverse{\mathcal{T}}}(\xi) \circ{}\!$.\footnote{For the special case of Grigorchuk's group, the map was studied by Vorobets \cite{vorobets}. Here, we consider the more general case of arbitrary automaton groups.} We say an $\omega$-word $\xi \in \Sigma^\omega$ is \emph{singular} if $\Stab_{\mathcal{T} \sqcup \inverse{\mathcal{T}}}(\cdot) \circ{}\!$ is not continuous at $\xi$. In \cite[Theorem 4.4]{DaGoKPRo16} it is shown that, for every $G$-automaton, the set~$\kappa$ of singular points has measure zero. Thus, if we denote by $\Theta=\Sigma^{\omega}\setminus\kappa$ the set of non-singular points, then $\Theta$ has measure one. Non-singular points can be characterized in the following way (see \cite[Lemma 4.2]{DaGoKPRo16}).
\begin{lemma}\label{lem:nonSingularPoints}
  Let $\mathcal{T} = (Q, \Sigma, \delta)$ be a $G$-automaton. An element $\xi\in\Sigma^{\omega}$ is not singular if and only if, for all $\wt{\bm{q}} \in Stab_{\mathcal{T} \sqcup \inverse{\mathcal{T}}}(\xi)$, there exists a prefix $u \in \Sigma^*$ of $\xi$ such that $\wt{\bm{q}} \cdot u \circ{}\! = \idGrp$.
\end{lemma}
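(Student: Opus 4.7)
My plan is to unfold the definition of continuity in $\Sub(\mathscr{G}(\mathcal{T}))$ and then prove the two directions by using the fact that the induced action on $\Sigma^*$ (and hence on $\Sigma^\omega$) is faithful in the group case. A basic open neighborhood of a subgroup $H \subseteq \mathscr{G}(\mathcal{T})$ in the Tychonoff topology is specified by a finite set $F \subseteq \mathscr{G}(\mathcal{T})$ and fixes, for each $g \in F$, whether $g \in H$ or $g \notin H$. Thus continuity of $\Stab_{\mathcal{T} \sqcup \inverse{\mathcal{T}}}(\cdot)\circ{}\!$ at $\xi$ translates to: for every finite $F \subseteq \mathscr{G}(\mathcal{T})$, there is a prefix $u$ of $\xi$ such that, for all $\eta \in u\Sigma^{\omega}$ and all $g \in F$, we have $g \circ \eta = \eta$ iff $g \circ \xi = \xi$. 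The key identity I will use throughout is $\wt{\bm{q}} \circ uv = (\wt{\bm{q}} \circ u)\bigl[(\wt{\bm{q}} \cdot u) \circ v\bigr]$, which allows me to translate stabilization on a cylinder into triviality of the state reached after reading $u$.

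For the forward direction, I assume $\xi$ is not singular and take $\wt{\bm{q}} \in \Stab_{\mathcal{T} \sqcup \inverse{\mathcal{T}}}(\xi)$. Applying continuity to $F = \{\wt{\bm{q}} \circ{}\!\}$, I obtain a prefix $u$ of $\xi$ so that every $\eta \in u\Sigma^{\omega}$ is stabilized by $\wt{\bm{q}}$. Writing $\eta = uv$ and using the identity above, I get $\wt{\bm{q}} \circ u = u$ and $(\wt{\bm{q}} \cdot u) \circ v = v$ for every $v \in \Sigma^{\omega}$. Since $\mathcal{T}$ is a $G$-automaton, the action on $\Sigma^{\omega}$ is faithful, forcing $\wt{\bm{q}} \cdot u \circ{}\! = \idGrp$ in $\mathscr{G}(\mathcal{T})$.

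For the converse, assume every element of $\Stab_{\mathcal{T} \sqcup \inverse{\mathcal{T}}}(\xi)$ acts trivially after reading some prefix of $\xi$, and fix a finite $F \subseteq \mathscr{G}(\mathcal{T})$. Each $g \in F$ falls into two cases. If $g \in \Stab_{\mathcal{T} \sqcup \inverse{\mathcal{T}}}(\xi)\circ{}\!$, the hypothesis yields a prefix $u_g$ of $\xi$ with $g\cdot u_g \circ{}\! = \idGrp$; then $g$ stabilizes the entire cylinder $u_g\Sigma^{\omega}$. If $g \notin \Stab_{\mathcal{T} \sqcup \inverse{\mathcal{T}}}(\xi)\circ{}\!$, then $g \circ \xi$ and $\xi$ disagree at some finite position, so a sufficiently long prefix $u_g$ of $\xi$ satisfies $g \circ u_g \neq u_g$, ensuring $g \circ \eta \neq \eta$ for every $\eta \in u_g\Sigma^{\omega}$ by prefix compatibility. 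Taking $u$ as a common extension (along $\xi$) of all the $u_g$ for $g \in F$ gives the cylinder witnessing the required continuity condition.

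The only subtle step is the passage from \emph{stabilizing a cylinder} to the state sequence $\wt{\bm{q}} \cdot u$ being trivial in the group; everything else is bookkeeping with basic open sets and the length-preserving, prefix-compatible action. I expect this faithfulness argument (and keeping the semigroup-style notation $\wt{\bm{q}} \cdot u$ straight against the group element $\wt{\bm{q}} \cdot u \circ{}\!$) to be the main point requiring care, but it follows directly from the $G$-automaton hypothesis and the defining identity for the action on concatenations.
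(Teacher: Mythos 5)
Your argument is correct. Be aware, though, that the paper does not prove this lemma at all: it is imported verbatim from \cite{DaGoKPRo16} (Lemma~4.2 there), so there is no in-paper proof to match against, and your write-up is a genuine, self-contained replacement for that citation. Your route is the natural one: unfold the basic open sets of the Tychonoff topology on $\Sub(\mathscr{G}(\mathcal{T}))$ into finite test sets $F$, translate neighborhoods of $\xi$ into cylinders $u\Sigma^\omega$, and use the decomposition $\wt{\bm{q}} \circ uv = (\wt{\bm{q}} \circ u)\bigl[(\wt{\bm{q}} \cdot u) \circ v\bigr]$ together with length-preservation to split ``stabilizes the whole cylinder'' into $\wt{\bm{q}} \circ u = u$ and $(\wt{\bm{q}} \cdot u) \circ{}\! = \idGrp$. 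Two small points deserve to be made explicit if you polish this: first, ``faithfulness'' needs no separate argument, since equality in $\mathscr{G}(\mathcal{T})$ is by definition equality of the induced partial functions on $\Sigma^* \cup \Sigma^\omega$ (and triviality on $\Sigma^\omega$ gives triviality on $\Sigma^*$ by prefix-compatibility); second, in the converse direction you implicitly use that $g \in \Stab_{\mathcal{T} \sqcup \inverse{\mathcal{T}}}(\eta) \circ{}\!$ is equivalent to $g \circ \eta = \eta$ for a group element $g$, and that all the finitely many prefixes $u_g$ of $\xi$ are linearly ordered, so taking the longest one suffices. Both are immediate, and with them your proof is complete; it has the advantage over the bare citation of making visible exactly where completeness of the $G$-automaton (totality of $\wt{\bm{q}} \cdot u$) and length-preservation enter.
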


Using this characterization, we can prove the following result on the set of non-singular points.
\begin{lemma}\label{lem:nonSingularsAreGStable}
  For every $G$-automaton $\mathcal{T} = (Q, \Sigma, \delta)$, the set $\Theta$ of non-singular points is $\mathscr{G}(\mathcal{T})$-stable, i.\,e.\ we have $\wt{\bm{q}} \circ \xi \in \Theta$ for all $\xi \in \Theta$ and all $\wt{\bm{q}} \in (Q \sqcup \inverse{Q})^*$, where $\inverse{Q}$ is the state set of $\inverse{\mathcal{T}}$.
\end{lemma}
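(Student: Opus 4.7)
The plan is to verify the characterization of non-singular points from \autoref{lem:nonSingularPoints} directly for the point $\eta := \wt{\bm{q}} \circ \xi$. So I would start by fixing an arbitrary $\wt{\bm{r}} \in \Stab_{\mathcal{T} \sqcup \inverse{\mathcal{T}}}(\eta)$ and conjugating it back to a stabilizer element of $\xi$. Writing $\wt{\bm{q}}^{-1}$ for the word in $(Q \sqcup \inverse{Q})^*$ obtained by reversing $\wt{\bm{q}}$ and swapping each letter with its formal inverse in $\inverse{Q}$ (or $Q$, respectively), the product $\wt{\bm{p}} := \wt{\bm{q}}^{-1} \wt{\bm{r}} \wt{\bm{q}}$ satisfies $\wt{\bm{p}} \circ \xi = \xi$, because $\wt{\bm{q}}^{-1} \wt{\bm{q}} \circ{}\! = \idGrp$ in $\mathscr{G}(\mathcal{T})$ and because $\wt{\bm{r}}$ stabilises $\eta = \wt{\bm{q}} \circ \xi$. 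Exploiting the assumed non-singularity of $\xi$, \autoref{lem:nonSingularPoints} then supplies a prefix $u$ of $\xi$ with $\wt{\bm{p}} \cdot u \circ{}\! = \idGrp$.

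The key step is to translate this identity into a corresponding statement at the prefix $v := \wt{\bm{q}} \circ u$ of $\eta$, which has the same length as $u$. I would do this by unfolding the section $\wt{\bm{p}} \cdot u$ layer by layer using the section rule $(\bm{s}\bm{t}) \cdot u = (\bm{s} \cdot (\bm{t} \circ u))(\bm{t} \cdot u)$: reading $u$ through the inner $\wt{\bm{q}}$ produces exactly $v$; reading $v$ through $\wt{\bm{r}}$ leaves it invariant, because $v$ is a prefix of the $\wt{\bm{r}}$-stabilised $\eta$; finally the outer $\wt{\bm{q}}^{-1}$ reads $v$ back to $u$. Collecting the three sections along this path, and using that the outer sections combine to $(\wt{\bm{q}}^{-1} \cdot v)(\wt{\bm{q}} \cdot u) \circ{}\! = \idGrp$ (again the identity $\wt{\bm{q}}^{-1} \wt{\bm{q}} \circ{}\! = \idGrp$ evaluated at $u$), one obtains, as elements of $\mathscr{G}(\mathcal{T})$,
\[
  \wt{\bm{p}} \cdot u \circ{}\! = (\wt{\bm{q}} \cdot u)^{-1} \, (\wt{\bm{r}} \cdot v) \, (\wt{\bm{q}} \cdot u) \circ{}\! \text{.}
\]
Since the left-hand side equals $\idGrp$, conjugation yields $\wt{\bm{r}} \cdot v \circ{}\! = \idGrp$, and another appeal to \autoref{lem:nonSingularPoints} in the converse direction shows $\eta \in \Theta$.

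The main obstacle I expect is not a deep one but rather purely combinatorial bookkeeping: one has to keep careful track of how the input word evolves as the three layers of $\wt{\bm{p}} = \wt{\bm{q}}^{-1} \wt{\bm{r}} \wt{\bm{q}}$ are peeled off from the right, so that the middle factor really becomes $\wt{\bm{r}} \cdot v$ for the correct prefix $v$, and the outer two factors recombine into the group-theoretic conjugator $(\wt{\bm{q}} \cdot u)^{\pm 1}$. Once this is arranged, the entire argument reduces to the standard conjugation trick that the stabiliser of $\wt{\bm{q}} \circ \xi$ is obtained from that of $\xi$ by conjugation with $\wt{\bm{q}}$.
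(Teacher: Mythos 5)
Your proposal is correct and follows essentially the same route as the paper's proof: conjugate a stabilizer element of $\eta = \wt{\bm{q}} \circ \xi$ back to a stabilizer element of $\xi$, apply the characterization of non-singular points to obtain a prefix $u$, and then unfold the section $\wt{\bm{q}}^{-1}\wt{\bm{p}}\wt{\bm{q}} \cdot u$ into the conjugate $(\wt{\bm{q}} \cdot u)^{-1}(\wt{\bm{p}} \cdot v)(\wt{\bm{q}} \cdot u)$ with $v = \wt{\bm{q}} \circ u$, using the identity $(\wt{\bm{q}} \cdot u)^{-1} = \wt{\bm{q}}^{-1} \cdot v$. The bookkeeping you flag as the main obstacle is exactly the chain of equalities carried out explicitly in the paper, so no gap remains.
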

\begin{proof}
  Let $\xi \in \Theta$ and $\wt{\bm{q}} \in (Q \sqcup \inverse{Q})^*$ and define $\eta = \wt{\bm{q}} \circ \xi$. We need to show $\eta \in \Theta$. For any $\wt{\bm{p}} \in \Stab_{\mathcal{T} \sqcup \inverse{\mathcal{T}}}(\eta)$, we have $\wt{\bm{q}}^{-1} \wt{\bm{p}} \wt{\bm{q}} \in \Stab_{\mathcal{T} \sqcup \inverse{\mathcal{T}}}(\xi)$. Thus, by the characterization in \autoref{lem:nonSingularPoints}, there is a prefix $u$ of $\xi$ such that $\wt{\bm{q}}^{-1} \wt{\bm{p}} \wt{\bm{q}} \cdot u \circ{}\! = \idGrp$. Hence, $v = \wt{\bm{q}} \circ u$ is a prefix of $\eta$ and we have $\wt{\bm{p}} \in \Stab_{\mathcal{T} \sqcup \inverse{\mathcal{T}}}(v)$. Now, for $\wt{\bm{r}} = \wt{\bm{q}} \cdot u$, we have
  \begin{align*}
    \wt{\bm{q}}^{-1} \wt{\bm{p}} \wt{\bm{q}} \cdot u
    &= [\wt{\bm{q}}^{-1} \wt{\bm{p}} \cdot (\wt{\bm{q}} \circ u)] [\wt{\bm{q}} \cdot u]
    = [\wt{\bm{q}}^{-1} \wt{\bm{p}} \cdot v] [\wt{\bm{q}} \cdot u]
    = [\wt{\bm{q}}^{-1} \cdot (\wt{\bm{p}} \circ v)] [\wt{\bm{p}} \cdot v] [\wt{\bm{q}} \cdot u]\\
    &= [\wt{\bm{q}}^{-1} \cdot v] [\wt{\bm{p}} \cdot v] [\wt{\bm{q}} \cdot u] = \bm{r}^{-1} [\wt{\bm{p}} \cdot v] \bm{r} \text{;}
  \end{align*}
  for the last step, notice the equality $(\wt{\bm{q}} \cdot u)^{-1} = \wt{\bm{q}}^{-1} \cdot v$. From $\idGrp = \wt{\bm{q}}^{-1} \wt{\bm{p}} \wt{\bm{q}} \cdot u \circ{}\!$, thus, also follows $\wt{\bm{p}} \cdot v \circ{}\! = \idGrp$. Since we have chosen $\wt{\bm{p}}$ arbitrarily from $\Stab_{\mathcal{T} \sqcup \inverse{\mathcal{T}}}(\eta)$, we have $\eta \in \Theta$ by \autoref{lem:nonSingularPoints}.
\end{proof}

Now, we are prepared to characterize the absence of positive relations in terms of orbital graphs.
\begin{prop}\label{prop: characterization empty}
  For every $G$-automaton $\mathcal{T} = (Q, \Sigma, \delta)$, the following are equivalent:
  \begin{enumerate}[label=(\roman*)]
    \item\label{itm:orbitalCycle} There is a non-singular element $\xi \in \Sigma^{\omega}$ such that $\mathcal{T} \circ \xi$ contains a (non-trivial) cycle;
    \item\label{itm:positiveRelation}$\mathcal{P}(\mathcal{T}) \neq \emptyset$;
  \end{enumerate}
\end{prop}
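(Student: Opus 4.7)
The equivalence rests on the characterization of non-singular points given by \autoref{lem:nonSingularPoints}, together with the $\mathscr{G}(\mathcal{T})$-stability of the set $\Theta$ of non-singular points from \autoref{lem:nonSingularsAreGStable}. The one technical point to keep track of is that, since $\mathcal{T}$ is complete, reading a finite word through a purely positive state sequence yields again a purely positive state sequence; this is what keeps us inside $Q^{+}$ when we invoke \autoref{lem:nonSingularPoints}.

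For \ref{itm:positiveRelation} $\Rightarrow$ \ref{itm:orbitalCycle}, I would take any $\bm{q} = q_n \cdots q_1 \in \mathcal{P}(\mathcal{T})$ and any non-singular point $\xi \in \Theta$, which exists because $\Theta$ has measure one by \cite[Theorem~4.4]{DaGoKPRo16}. Since $\bm{q}$ represents the identity in $\mathscr{G}(\mathcal{T})$, it in particular satisfies $\bm{q} \circ \xi = \xi$, and the resulting sequence of edges
\[
  \xi \longmapright{q_1} q_1 \circ \xi \longmapright{q_2} q_2 q_1 \circ \xi \longmapright{q_3} \dots \longmapright{q_n} \bm{q} \circ \xi = \xi
\]
is a closed walk of positive length $n \geq 1$ in $\mathcal{T} \circ \xi$, which contains a non-trivial cycle.

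For \ref{itm:orbitalCycle} $\Rightarrow$ \ref{itm:positiveRelation}, a non-trivial cycle in $\mathcal{T} \circ \xi$ supplies some $y \in Q^{*} \circ \xi$ and a word $\bm{r} \in Q^{+}$ with $\bm{r} \circ y = y$. By \autoref{lem:nonSingularsAreGStable}, the point $y$ is itself non-singular, and since $\bm{r} \in \Stab_{\mathcal{T} \sqcup \inverse{\mathcal{T}}}(y)$, \autoref{lem:nonSingularPoints} produces a prefix $u$ of $y$ with $\bm{r} \cdot u \circ{}\! = \idGrp$ in $\mathscr{G}(\mathcal{T})$. It remains to observe that $\bm{r} \cdot u \in Q^{+}$, which follows from the inductive definition of $\cdot$ since $\mathcal{T}$ is complete and $\bm{r}$ contains no states from $\inverse{Q}$; hence $\bm{r} \cdot u$ is a positive relation and $\mathcal{P}(\mathcal{T}) \neq \emptyset$.

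I expect the only (modest) obstacle to be confirming that $\bm{r} \cdot u$ does not pick up any inverse letters along the way. This, however, is forced by the definition of $\cdot$: the transitions of $\mathcal{T}$ never leave $Q$. Everything else is a direct application of the two lemmas recalled just before the proposition, and the use of the measure-theoretic fact that $\Theta \neq \emptyset$.
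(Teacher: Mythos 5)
Your proposal is correct and follows essentially the same route as the paper: the forward direction uses a positive relation to produce a closed positive-length walk at any non-singular point, and the converse applies \autoref{lem:nonSingularsAreGStable} to see that the cycle's base point is non-singular and then \autoref{lem:nonSingularPoints} to cut the stabilizing word $\bm{r}$ down to a positive relation $\bm{r} \cdot u$. Your explicit remarks that $\Theta \neq \emptyset$ (by the measure-one fact) and that $\bm{r} \cdot u$ stays in $Q^{+}$ by completeness are points the paper leaves implicit, but they do not change the argument.
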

\begin{proof}
  \ref{itm:positiveRelation}$\implies$\ref{itm:orbitalCycle}. If $\bm{q} \in \mathcal{P}(\mathcal{T}) \neq \emptyset$, then $\mathcal{T} \circ \xi$ contains the circle $\transa{\xi}{\bm{q}}{\xi}$ for every $\xi \in \Sigma^{\omega}$. So, in particular, it contains a cycle for every non-singular $\xi$.

  \ref{itm:orbitalCycle}$\implies$\ref{itm:positiveRelation}. Suppose that $\mathcal{T} \circ \eta$ for some non-singular $\eta \in \Sigma^{\omega}$ contains a cycle $\transa{\xi}{\bm{p}}{\xi}$ at a node $\xi \in Q^* \circ \eta$ labeled by $\bm{p} \in Q^+$. We have $\xi = \bm{q} \circ \eta$ for some $\bm{q} \in Q^*$ and, thus, by \autoref{lem:nonSingularsAreGStable}, $\xi$ is also non-singular. Therefore, by \autoref{lem:nonSingularPoints}, there is prefix $u \in \Sigma^*$ of $\xi$ such that $\bm{q} \cdot u = \idGrp$, i.\,e.\ we have $\bm{q} \cdot u \in \mathcal{P}(\mathcal{T}) \neq \emptyset$.
\end{proof}

\paragraph{Undecidability of \DecProblem{Positive Relations}}
In \cite{DaGoKPRo16}, it was left open whether or not \DecProblem{Positive Relations} is undecidable. In this section, we will prove undecidability of \DecProblem{Positive Relations} and the freeness problem for automaton semigroups. Crucial to our proof is a construction by \u Suni\'c and Ventura \cite{Su-Ve09}\footnote{In fact, it is a rediscovery of a result by Brunner and Sidki \cite{Br-Sid}.}. Any pair of a $d \times d$ matrix $M$ and a vector $\bm{v} \in \mathbb{Z}^d$ gives rise to an \emph{affine transformation}
\begin{align*}
  M_{\bm{v}}: \mathbb{Z}^d &\to \mathbb{Z}^d\\
  \bm{u} &\mapsto \bm{v} + M \bm{u} \text{.}
\end{align*}
We denote by $\SAff_d(\mathbb{Z})$ the semigroup of all these affine transformations. Since the affine transformation $M_{\bm{v}}$ is invertible if and only if the matrix $M$ is, we can also define $\Aff_d(\mathbb{Z})$, the group of affine transformations of $\mathbb{Z}^d$.

Although \u Suni\'c and Ventura only considered invertible matrices, their general construction also yields the following lemma, which considers non-invertible matrices.
\begin{lemma}\label{lem:SunicVenturaAutomaton}
  Let $\mathcal{M}$ be a finite set of $\mathbb{Z}^{d \times d}$ matrices. Furthermore, for every $M \in \mathcal{M}$, let $V_M$ be the finite set of vectors $\bm{v} \in \mathbb{Z}^d$ such that all components of $\bm{v}$ are between $- \| M \|$ and $\| M \| - 1$. Here, $\| M \|$ denotes the norm $\| M \| = \max_{1 \leq i \leq d} \sum_{j = 1}^{d} | m_{i, j}|$ where $m_{i, j}$ is the entry in the $i^\textnormal{th}$ row and $j^\textnormal{th}$ column of $M$.

  Then, one can compute a complete $S$-automaton $\mathcal{T}_{\mathcal{M}}$ with state set
  \[
    Q_{\mathcal{M}} = \{ m_{M, \bm{v}} \mid M \in \mathcal{M}, \bm{v} \in V_M \}
  \]
  such that the homomorphism $\varphi: \mathscr{S}(\mathcal{T}_{\mathcal{M}}) \to \SAff_d(\mathbb{Z})$ induced by $\varphi(m_{M, \bm{v}}) = M_{\bm{v}}$ for $M \in \mathcal{M}$ and $\bm{v} \in V_M$ is a well-defined isomorphism from $\mathscr{S}(\mathcal{T}_{\mathcal{M}})$ into the subsemigroup of $\SAff_d(\mathbb{Z})$ generated by $\{ M_{\bm{v}} \mid M \in \mathcal{M}, \bm{v} \in V_M \}$.

  If all matrices in $\mathcal{M}$ are invertible, then $\mathcal{T}_{\mathcal{M}}$ is inverse-deterministic and $\varphi$ extends to a well-defined isomorphism from $\mathscr{G}(\mathcal{T})$ into the subgroup of $\Aff_d(\mathbb{Z})$ generated by $\{ M_{\bm{v}} \mid M \in \mathcal{M}, \bm{v} \in V_M \}$.\marginline{For the group part, see the erratum below.}
\end{lemma}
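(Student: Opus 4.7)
My plan is to follow the construction used by {\v S}uni\'c and Ventura (which in turn goes back to Brunner and Sidki) and extend it to the non-invertible case. Fix any integer $n \geq 2$ and take $\Sigma = \{0, 1, \ldots, n-1\}^d$; a letter $\bm{a}\in\Sigma$ is viewed as a digit vector, a finite word $\bm{a}_1 \cdots \bm{a}_k$ encodes $\sum_{i=1}^k n^{i-1}\bm{a}_i \in (\mathbb{Z}/n^k\mathbb{Z})^d$, and in the limit an $\omega$-word encodes an element of the $n$-adic integers $\mathbb{Z}_n^d$, into which $\mathbb{Z}^d$ embeds. The transitions of $\mathcal{T}_{\mathcal{M}}$ are defined by
\[
  \trans{m_{M, \bm{v}}}{\bm{a}}{\bm{b}}{m_{M, \bm{c}}}
\]
where $\bm{b}$ is the unique element of $\{0, \ldots, n-1\}^d$ satisfying $\bm{b} \equiv \bm{v} + M\bm{a} \pmod{n}$ and $\bm{c} = (\bm{v} + M\bm{a} - \bm{b})/n$. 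This is motivated by the identity
\[
  M_{\bm{v}}(\bm{a} + n\bm{u}') = \bm{b} + n\bigl(\bm{c} + M\bm{u}'\bigr) = \bm{b} + n M_{\bm{c}}(\bm{u}'),
\]
so reading one input digit peels off one digit of the output under $M_{\bm{v}}$ while moving to the state encoding the residual affine map $M_{\bm{c}}$.

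The first key verification is that $\bm{c}\in V_M$, which shows that $\mathcal{T}_{\mathcal{M}}$ is finite with state set exactly $Q_{\mathcal{M}}$. This reduces to a coordinate-wise estimate: for the $i$-th coordinate, $v_i \in [-\|M\|, \|M\|-1]$, $(M\bm{a})_i \in [-\|M\|(n-1), \|M\|(n-1)]$ and $-b_i \in [-(n-1),0]$, so the $i$-th coordinate of $\bm{v} + M\bm{a} - \bm{b}$ lies in the open interval $(-n(\|M\|+1), n\|M\|)$; dividing by $n$ and using integrality forces $c_i \in [-\|M\|, \|M\|-1]$. Completeness and determinism of $\mathcal{T}_{\mathcal{M}}$ are immediate from the unique choice of $\bm{b}$.

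The semantic correspondence is established by a straightforward induction on word length: the action of $m_{M, \bm{v}}$ on a word encoding $\bm{u} \in (\mathbb{Z}/n^k\mathbb{Z})^d$ returns a word encoding $M_{\bm{v}}(\bm{u})$ modulo $n^k$. Passing to the limit gives the corresponding statement on $\mathbb{Z}_n^d$, and because composition of partial functions on $\Sigma^*$ projects level by level to composition of affine maps, the map $\varphi(m_{M,\bm{v}}) = M_{\bm{v}}$ extends to a semigroup homomorphism from $\mathscr{S}(\mathcal{T}_{\mathcal{M}})$ into $\SAff_d(\mathbb{Z})$ whose image is the subsemigroup generated by the $M_{\bm{v}}$. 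The main delicate step is injectivity: two distinct elements of $\SAff_d(\mathbb{Z})$ disagree on some vector $\bm{u}\in \mathbb{Z}^d$, and for $k$ large enough the encoding of $\bm{u}$ in $\Sigma^k$ witnesses that the two associated automaton actions are distinct already at that level.

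For the invertible case, inverse-determinism follows from the observation that if $M \in \operatorname{GL}_d(\mathbb{Z})$ then $\det M = \pm 1$, so $M$ is invertible modulo $n$, and the congruence $M\bm{a} \equiv \bm{b} - \bm{v} \pmod{n}$ uniquely recovers $\bm{a} \in \Sigma$ from $\bm{b}$ and the current state $m_{M,\bm{v}}$. The inverse automaton $\inverse{\mathcal{T}_{\mathcal{M}}}$ then realizes the inverse affine maps on $\mathbb{Z}_n^d$, so the faithfulness argument from the semigroup case extends verbatim to show that $\varphi$ extends to a group isomorphism onto the subgroup of $\Aff_d(\mathbb{Z})$ generated by the $M_{\bm{v}}$. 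The only non-routine point throughout is keeping the injectivity argument honest: one must be careful that the automaton action on $\Sigma^*$ genuinely detects distinct affine transformations of $\mathbb{Z}^d$, rather than only distinct elements of the larger semigroup $\SAff_d(\mathbb{Z}_n)$ in which it a priori lives.
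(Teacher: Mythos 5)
Your construction is the same one the paper uses (following {\v S}uni\'c--Ventura), and both the digit-peeling transition rule and the coordinate-wise estimate showing $\bm{c}\in V_M$ are correct; you even make explicit what the paper delegates to \cite[Lemma 4.5]{Su-Ve09}. However, there are two gaps. The first is the choice of $n$: you fix an arbitrary $n\geq 2$ at the outset and then, for inverse-determinism, argue that $M\in\operatorname{GL}_d(\mathbb{Z})$ has $\det M=\pm 1$ and is therefore invertible modulo $n$. But ``invertible'' in this lemma must be read as ``non-zero determinant'': that is the generality the paper needs, since in \autoref{theo: freeness semigroup} the lemma is applied to invertible matrices from $\mathbb{N}^{3\times 3}$ coming from the Klarner--Birget--Satterfield encoding, whose determinants are not $\pm 1$. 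With an arbitrary $n$ the construction then fails to be inverse-deterministic; for instance, for $M=\operatorname{diag}(2,1,\dots,1)$ and $n=2$ the residue $M\bm{a}\bmod n$ does not determine $\bm{a}$. The repair is exactly the paper's: choose $n$ coprime to all the non-zero determinants of the matrices in $\mathcal{M}$, so that every $M$ becomes invertible over $\mathbb{Z}/n\mathbb{Z}$ and over $\mathbb{Z}_n$.

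The second issue is that your ``injectivity'' argument proves the wrong implication. Showing that two affine maps which differ on some $\bm{u}\in\mathbb{Z}^d$ already act differently on a sufficiently long encoding of $\bm{u}$ establishes well-definedness of $\varphi$ (equal automaton actions force equal affine maps on $\mathbb{Z}^d$). Injectivity of $\varphi$ is the converse: two state sequences whose images agree as affine transformations of $\mathbb{Z}^d$ must induce the same action on $\Sigma^\omega$, that is, the same affine map on all of $\mathbb{Z}_n^d$. Your closing sentence correctly identifies this as the delicate point, but the proposal never actually argues it. It does hold --- an affine map $\bm{w}\mapsto\bm{v}+M\bm{w}$ on $\mathbb{Z}_n^d$ with integer $M$ and $\bm{v}$ is determined by its values at the integer points $\bm{0},\bm{e}_1,\dots,\bm{e}_d$, or alternatively by continuity together with the density of $\mathbb{Z}^d$ in $\mathbb{Z}_n^d$ --- and the paper devotes an explicit component-wise computation with $n$-adic sequences to precisely this step. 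With these two points repaired, your argument coincides with the paper's proof.
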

\begin{proof}
  The proof is based on the construction given in \cite{Su-Ve09}, which uses $n$-adic integers. The ring $\mathbb{Z}_n$ of \emph{$n$-adic integers} is the projective limit of the rings $\mathbb{Z}/n^k \mathbb{Z}$. Its elements are right-infinite sequences $(a_k)_{k \in \mathbb{Z}_+}$ such that, for every $k$, $a_k$ is in $\mathbb{Z}/n^k \mathbb{Z}$ and $a_k \equiv a_l \bmod n^k$ for all $l \geq k$. In this representation, multiplication and addition are component-wise operations and a (normal) integer $z \in \mathbb{Z}$ is a sequence which becomes stationary with value $z$.

  Another way of representing $n$-adic integers is by their \emph{$n$-adic expansion}. Any (formal) sum $Z = \sum_{k = 0}^{\infty} d_k n^k$ where the coefficients $d_k$ are from the range $0 \leq d_k \leq n - 1$ represents an $n$-adic integer $(Z \bmod n, Z \bmod n^2, Z \bmod n^3, \dots) = (d_0, d_0 + d_1 \cdot n, d_0 + d_1 \cdot n + d_2 \cdot n^2, \dots)$. On the other hand, any $n$-adic integer $Z = (a_1, a_2, a_3, \dots)$ can be written as a sum $\sum_{k = 0}^{\infty} d_k n^k$ with $d_k = \frac{a_{k + 1} - a_k}{n^k}$ (where we set $a_0 = 0$). Notice that $a_{k + 1} - a_k \equiv 0 \bmod{n^k}$ and $a_{k + 1} < n^{k + 1}$. The $n$-adic expansion is then the $\omega$-word $d_0 d_1 d_2 \dots$.

  By using $n$-adic expansions, any vector $\bm{z} \in \mathbb{Z}_n^d$ can be represented by $d$ many $\omega$-words over the alphabet $Y_n = \{ 0, \dots, n - 1 \}$ or by one $\omega$-word over the alphabet $X_n = Y_n^d = \{ 0, \dots, n - 1 \}^d$. The latter is the encoding on which the automata constructed in \cite{Su-Ve09} act.

  By \cite[Lemma 4.5]{Su-Ve09}, there is a $G$-automaton $\mathcal{A}_{M, n}$ with state set $\{ m_{\bm{v}} \mid \bm{v} \in V_M \}$ for every invertible $\mathbb{Z}^{d \times d}$ matrix $M$ such that, for every vector $\bm{v} \in V_m \subseteq \mathbb{Z}^d \subsetneq \mathbb{Z}_n^d$, the state $m_{\bm{v}}$ acts like $M_{\bm{v}}$ extended into an affine transformation $M_{\bm{v}} : \mathbb{Z}_n^d \to \mathbb{Z}_n^d$ with $M_{\bm{v}}(\bm{w}) = \bm{v} + M \bm{w}$. Here, $n$ is a number relatively prime to the (non-zero) determinant of $M$ to preserve invertibility of $M$ over the $n$-adic integers. The described construction of the automaton is clearly computable. Furthermore, it does not depend on the invertibility of the matrix $M$; if the matrix is not invertible, the obtained automaton is not necessarily a $G$-automaton anymore, but a complete $S$-automaton whose states still act in the way described above.

  As the automaton $\mathcal{T}_{\mathcal{M}}$, we can choose the disjoint union $\bigsqcup_{M \in \mathcal{M}} \mathcal{A}_{M, n}$ for some $n$ which is coprime to all non-zero determinants of the matrices in $\mathcal{M}$.

  The semigroup part of the lemma's assertion follows if we choose $\varphi$ as the restriction of maps $\mathbb{Z}_n^d \to \mathbb{Z}_n^d$ over $n$-adic integer vectors to maps $\mathbb{Z}^d \to \mathbb{Z}^d$ of (normal) integer vectors. The only missing part, here, is to show injectivity (see also \cite[Lemma 4.1]{Su-Ve09}). For this, assume that, for some matrices $M, M' \in \mathbb{Z}^{d \times d}$ and some vectors $\bm{v}, \bm{v}' \in \mathbb{Z}^d$, $M_{\bm{v}}(\bm{u}) = M'_{\bm{v'}}(\bm{u})$ holds for all $\bm{u} \in \mathbb{Z}^d$ but, for some vector $\bm{w} \in \mathbb{Z}_n^d \setminus \mathbb{Z}^d$, we have $M_{\bm{v}}(\bm{w}) \neq M'_{\bm{v'}}(\bm{w})$. Choosing $\bm{u} = \bm{0}$ as the zero vector yields $\bm{v} = \bm{v'}$. Thus, there is some $i \in \{ 1, \dots, d \}$ such that $\sum_{j = 1}^{d} m_{i,j} w_j \neq \sum_{j = 1}^{d} m'_{i,j} w_j$ where $m_{i,1}, \dots, m_{i,d} \in \mathbb{Z}$ and $m'_{i,1}, \dots, m'_{i,d} \in \mathbb{Z}$ are the entries of the $i^{\textnormal{th}}$ row of $M$ and $M'$, respectively, and $w_1, \dots, w_d \in \mathbb{Z}_n$ are the components of $\bm{w}$. As an $n$-adic integer, we can write each $w_j$ as a sequence $(a_{j, 1}, a_{j, 2}, \dots)$ with $a_{j, k} \in \mathbb{Z}/n^k \mathbb{Z}$. As addition and multiplication in $\mathbb{Z}_n$ are component-wise in this representation, there is some $k \in \mathbb{Z}_+$ with
  \[
    \left[ \sum_{j = 1}^d m_{i,j} a_{j,k} \right] \bmod n^k \neq \left[ \sum_{j = 1}^d m'_{i,j} a_{j,k} \right] \bmod n^k \text{,}
  \]
  which implies $\sum_{j = 1}^d m_{i,j} a_{j,k} \neq \sum_{j = 1}^d m'_{i,j} a_{j,k}$. Thus, we have $M_{\bm{v}}(\bm{u}) \neq M'_{\bm{v'}}(\bm{u})$ if we choose $\bm{u} \in \mathbb{Z}^d$ in such a way that the $j^{\textnormal{th}}$ component is equal to $a_{j, k}$; this constitutes a contradiction.
\end{proof}

It turns out that the relations in the (semi)group generated by $\mathcal{T}_{\mathcal{M}}$ are closely related to those in the linear (semi)group generated by the matrices in $\mathcal{M}$. We state and prove this connection in the next lemma.

\begin{lemma}\label{lem:RelationsInSunicVenturaAutomaton}
  Let $\mathcal{M}$ be a finite set of $\mathbb{Z}^{d \times d}$ matrices and let $\mathcal{T}_{\mathcal{M}}$ be the automaton from \autoref{lem:SunicVenturaAutomaton}. Then, for any sequence $M_1 \dots M_k$ and $N_1 \dots N_l$ of matrices from $\mathcal{M}$, we have
  \[
    M_1 \dots M_k = N_1 \dots N_l \iff m_{M_1, \bm{v}_1} \dots m_{M_k, \bm{v}_k} = m_{N_1, \bm{w}_1} \dots m_{N_l, \bm{w}_l}
  \]
  for some vectors $\bm{v}_i$ and $\bm{w}_j$ such that all $m_{M_i, \bm{v}_i}$ and $m_{N_j, \bm{w}_j}$ are states in $Q_{\mathcal{M}}$.

  In particular, if we denote by $I$ the $d \times d$ identity matrix, we have
  \[
    M_1 \dots M_k = I \iff m_{M_1, \bm{v}_1} \dots m_{M_k, \bm{v}_k} \text{ acts like the identity}
  \]
  for some vectors $\bm{v}_i$ such that all $m_{M_i, \bm{v}_i}$ are states in $Q_{\mathcal{M}}$.
\end{lemma}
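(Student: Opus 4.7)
My plan is to exploit the isomorphism $\varphi$ from \autoref{lem:SunicVenturaAutomaton} together with the composition formula for affine transformations.

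The first step is the routine computation that $\SAff_d(\mathbb{Z})$ satisfies
\[
  M_{\bm{v}_1} \circ N_{\bm{v}_2}(\bm{u}) = \bm{v}_1 + M \bm{v}_2 + MN \bm{u} = (MN)_{\bm{v}_1 + M \bm{v}_2}(\bm{u}) \text{,}
\]
so by induction the composite $(M_1)_{\bm{v}_1} \circ \cdots \circ (M_k)_{\bm{v}_k}$ equals $(M_1 \cdots M_k)_{\bm{v}}$ for the vector $\bm{v} = \sum_{i=1}^{k} M_1 \cdots M_{i-1} \bm{v}_i$. In particular, the \emph{linear part} of this composite is precisely the matrix product $M_1 \cdots M_k$, and two affine transformations $M_{\bm{v}}, N_{\bm{w}}$ agree on all of $\mathbb{Z}^d$ if and only if $M = N$ and $\bm{v} = \bm{w}$ (evaluate at $\bm{0}$ and at the standard basis vectors).

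For the direction $(\Leftarrow)$, assume $m_{M_1, \bm{v}_1} \cdots m_{M_k, \bm{v}_k} = m_{N_1, \bm{w}_1} \cdots m_{N_l, \bm{w}_l}$ in $\mathscr{S}(\mathcal{T}_{\mathcal{M}})$ for some valid vectors. Applying the isomorphism $\varphi$ from \autoref{lem:SunicVenturaAutomaton} together with the composition formula yields $(M_1 \cdots M_k)_{\bm{v}} = (N_1 \cdots N_l)_{\bm{w}}$ in $\SAff_d(\mathbb{Z})$, and comparing linear parts gives $M_1 \cdots M_k = N_1 \cdots N_l$.

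For the direction $(\Rightarrow)$, I simply choose $\bm{v}_i = \bm{0}$ and $\bm{w}_j = \bm{0}$ for all $i, j$. The only thing to check is that these are valid indices, i.e.\ that $\bm{0} \in V_M$ for every $M \in \mathcal{M}$; this holds whenever $\|M\| \geq 1$ (the zero matrix is a harmless degenerate case which I would handle separately if it occurs). With this choice, the composition formula gives $\varphi(m_{M_1, \bm{0}} \cdots m_{M_k, \bm{0}}) = (M_1 \cdots M_k)_{\bm{0}}$ and $\varphi(m_{N_1, \bm{0}} \cdots m_{N_l, \bm{0}}) = (N_1 \cdots N_l)_{\bm{0}}$, which coincide in $\SAff_d(\mathbb{Z})$ by hypothesis; injectivity of $\varphi$ then transfers the equality back into $\mathscr{S}(\mathcal{T}_{\mathcal{M}})$. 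The ``in particular'' statement is the special case $N_1 \cdots N_l = I$ (an empty product), where $(I)_{\bm{0}}$ is the identity transformation and must therefore correspond to an element of the semigroup acting as the identity.

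There is no real obstacle here: the proof is essentially a translation of the linear-algebraic identity ``$M_1 \cdots M_k = N_1 \cdots N_l$'' into an equality of affine maps with zero translation parts, and then an application of the already-established isomorphism. The only thing requiring care is confirming that the chosen vectors lie in the sets $V_M$ so that the corresponding states actually exist in $Q_{\mathcal{M}}$.
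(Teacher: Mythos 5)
Your proof is correct and follows essentially the same route as the paper's: apply the isomorphism $\varphi$ from \autoref{lem:SunicVenturaAutomaton}, choose all translation vectors to be $\bm{0}$ for the forward direction, and for the converse compare the two affine transformations by evaluating at $\bm{0}$ to eliminate the translation parts. Your explicit composition formula and the check that $\bm{0} \in V_M$ (requiring $\|M\| \geq 1$) make slightly more precise two points the paper leaves implicit, but the argument is the same.
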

\begin{proof}
  The second part of the lemma follows from the first one by choosing $l = 0$. The direction from left to right of the first part follows by choosing all vectors $\bm{v}_1, \dots, \bm{v}_k$ and $\bm{w}_1, \dots, \bm{w}_l$ as the $d$-dimensional zero vector $\bm{0}$ as, then, we have
  \[
    \varphi(m_{M_1, \bm{0}} \dots m_{M_k, \bm{0}}) \bm{u} = M_1 \dots M_k \bm{u} = N_1 \dots N_l \bm{u} = \varphi(m_{N_1, \bm{0}} \dots m_{N_l, \bm{0}}) \bm{u}
  \]
  for all $\bm{u} \in \mathbb{Z}^d$ where $\varphi$ is the isomorphism from \autoref{lem:SunicVenturaAutomaton}.

  For the other direction, suppose we have $m_{M_1, \bm{v}_1} \dots m_{M_k, \bm{v}_k} = m_{N_1, \bm{w}_1} \dots m_{N_l, \bm{w}_l}$ for some vectors\footnote{Indeed, for this direction of the proof, we do not require the vectors to come from $V_{M_i}$ or $V_{N_j}$, respectively.} $\bm{v}_i$ and $\bm{w}_j$. Because the images under the isomorphism $\varphi$ must be equal as well, we have
  \[
    {\left( M_1 \right)}_{\bm{v}_1} \dots {\left( M_k \right)}_{\bm{v}_k}(\bm{u}) = {\left( N_1 \right)}_{\bm{w}_1} \dots {\left( N_l \right)}_{\bm{w}_l}(\bm{u})
  \]
  for all $\bm{u} \in \mathbb{Z}^d$. Since both sides are affine transformations, there are vectors $\bm{v}, \bm{w} \in \mathbb{Z}^d$ such that
  \begin{align*}
    \bm{v} + M_1 \dots M_k \bm{u} = {\left( M_1 \right)}_{\bm{v}_1} \dots {\left( M_k \right)}_{\bm{v}_k}(\bm{u}) = {\left( N_1 \right)}_{\bm{w}_1} \dots {\left( N_l \right)}_{\bm{w}_l}(\bm{u}) = \bm{w} + N_1 \dots N_l \bm{u}
  \end{align*}
  holds for all $\bm{u} \in \mathbb{Z}^d$. Setting $\bm{u} = \bm{0}$, we obtain $\bm{v} = \bm{w}$ and, thus, $M_1 \dots M_k \bm{u} = N_1 \dots N_l \bm{u}$ for all $\bm{u} \in \mathbb{Z}^d$.
\end{proof}

We are now prepared to prove undecidability of \DecProblem{Positive Relations}.

\begin{thm}\label{thm: positive relations}
  \DecProblem{Positive Relations} is undecidable.
\end{thm}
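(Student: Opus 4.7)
The plan is to reduce the undecidable problem \DecProblem{ICP} (the Identity Correspondence Problem, proven undecidable in \cite[Theorem 11]{Pota-Bell}) to \DecProblem{Positive Relations}. Recall that an instance of \DecProblem{ICP} consists of a finite set $\mathcal{M}$ of invertible $\mathbb{Z}^{d \times d}$ matrices (for some fixed dimension $d$) and asks whether there exist $M_1, \dots, M_k \in \mathcal{M}$ with $k \ge 1$ such that $M_1 \cdots M_k = I$, where $I$ is the $d \times d$ identity matrix.

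Given such an instance $\mathcal{M}$, I would proceed in two steps. First, apply \autoref{lem:SunicVenturaAutomaton} to effectively construct the automaton $\mathcal{T}_{\mathcal{M}}$; since all matrices in $\mathcal{M}$ are invertible, the lemma guarantees that $\mathcal{T}_{\mathcal{M}}$ is a $G$-automaton, so it is a legal input to \DecProblem{Positive Relations}. Second, invoke the second assertion of \autoref{lem:RelationsInSunicVenturaAutomaton} to obtain the biconditional
\[
  \exists\, M_1, \dots, M_k \in \mathcal{M},\ k \geq 1 \colon M_1 \cdots M_k = I \iff \mathcal{P}(\mathcal{T}_{\mathcal{M}}) \neq \emptyset \text{.}
\]
The forward direction is witnessed by choosing $\bm{v}_i = \bm{0}$ for each $i$ (note that the zero vector lies in every $V_{M_i}$ as defined in \autoref{lem:SunicVenturaAutomaton}), so that $m_{M_1, \bm{0}} \cdots m_{M_k, \bm{0}}$ is a positive relation in $\mathscr{G}(\mathcal{T}_{\mathcal{M}})$. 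The backward direction is immediate from the lemma: any positive relation, being a nonempty product of generators $m_{M_i, \bm{v}_i}$ acting as the identity, forces $M_1 \cdots M_k = I$ regardless of the actual vectors $\bm{v}_i$ chosen.

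Since the construction $\mathcal{M} \mapsto \mathcal{T}_{\mathcal{M}}$ is computable, this biconditional provides a many-one reduction from \DecProblem{ICP} to the complement of \DecProblem{Positive Relations}. The undecidability of \DecProblem{ICP} then yields the undecidability of \DecProblem{Positive Relations}. Notably, there is no real obstacle left to overcome: the genuine work has already been absorbed into \autoref{lem:SunicVenturaAutomaton} (the \u{S}uni\'c--Ventura construction, extended here to not necessarily invertible matrices) and \autoref{lem:RelationsInSunicVenturaAutomaton} (the precise correspondence between identity products of matrices and positive relations of the generated $G$-automaton). The theorem follows by combining these with the undecidability result of Bell and Potapov.
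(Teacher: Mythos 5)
Your reduction is in substance the same as the paper's: construct $\mathcal{T}_{\mathcal{M}}$ via \autoref{lem:SunicVenturaAutomaton} and apply the second part of \autoref{lem:RelationsInSunicVenturaAutomaton} to translate \enquote{the matrix semigroup contains $I$} into \enquote{$\mathcal{P}(\mathcal{T}_{\mathcal{M}}) \neq \emptyset$}. However, you have misstated \DecProblem{ICP}: an instance of the Identity Correspondence Problem is a finite set of \emph{pairs} $(s_i, t_i)$ of elements of the free group $FG(\{a,b\})$, and the question is whether some nonempty index sequence makes both coordinates reduce to $\varepsilon$ simultaneously; it is not a priori a problem about matrices. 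What you describe is the identity problem for finitely generated integer matrix semigroups, and the step you have silently skipped is the encoding of an \DecProblem{ICP} instance into matrices --- the paper does this explicitly via the embedding $\rho$ of $FG(\{a,b\})$ into $SL_2(\mathbb{Z})$ and the $4 \times 4$ block matrices $H_i = \operatorname{diag}(\rho(s_i), \rho(t_i))$, so that a solution of \DecProblem{ICP} corresponds exactly to a product $H_{l_1} \cdots H_{l_k} = I$. This is not a fatal gap, since the undecidability of the matrix identity problem is itself established in Bell--Potapov (their Theorem~13, built on the Theorem~11 you cite), but as written your proof attributes to Theorem~11 a statement it does not make, and a complete argument must either include the block-matrix encoding or cite the matrix version of the result.
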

\begin{proof}
We reduce the \emph{Identity Correspondence Problem} (\DecProblem{ICP}) \cite{Pota-Bell} to \DecProblem{Positive Relations}. It is the following decision problem. Let $FG(\Sigma)$ denote the free group over an alphabet $\Sigma$.
     \problem
     [a binary alphabet $\Sigma=\{a,b\}$]
        {$m \in \mathbb{N}$, $\Pi=\{(s_{1}, t_{1}), (s_{2}, t_{2}),\ldots, (s_{m},t_{m})\}\subseteq FG(\Sigma)\times FG(\Sigma)$}
        {is there a finite sequence $l_{1}, l_{2}, \ldots, l_{k}$ of indices with $k \geq 1$ where $1\le l_{i}\le m$ for $i=1,\ldots, k$  such that
        \[
          s_{l_{1}}s_{l_{2}}\ldots s_{l_{k}}=t_{l_{1}}t_{l_{2}}\ldots t_{l_{k}}=\varepsilon
        \]
        holds, where $\varepsilon$ is the empty word?
        }
Since \DecProblem{ICP} is undecidable\footnote{In fact, it is already undecidable for constant $m$ with $m = 8 (n - 1)$, where $n$ is the minimal number of pairs for which the \emph{Restricted PCP} is undecidable} \cite[Theorem 11]{Pota-Bell}, this reduction proves the undecidability of \DecProblem{Positive Relations}.

We consider the usual group embedding $\rho$ of $FG(\Sigma)$ into $SL_{2}(\mathbb{Z})$ defined on $\Sigma$ by
$$
\rho(a)=\left(
\begin{array}{cc}
1 & 2\\
0 & 1
\end{array}
\right), \; \rho(b)=\left(
\begin{array}{cc}
1 & 0\\
2 & 1
\end{array}
\right),\;
 \rho(a^{-1})=\left(
\begin{array}{cc}
1 & -2\\
0 & 1
\end{array}
\right), \; \rho(b^{-1})=\left(
\begin{array}{cc}
1 & 0\\
-2 & 1
\end{array}
\right)
$$
and use the same encoding of pairs from $FG(\Sigma) \times FG(\Sigma)$ as in \cite[Theorem 13]{Pota-Bell}: for each pair $(s_{i}, t_{i})\in \Pi$, we define the block matrix
$$
H_{i}=\left(
\begin{array}{cc}
\rho(s_{i}) & O_{2}\\
O_{2} & \rho(t_{i})
\end{array}
\right)
$$
where $O_{2}$ is the $2 \times 2$ zero-matrix. Let $\mathcal{H}$ be the set of these $4 \times 4$ invertible integer matrices $\{ H_1, H_2, \dots, H_m \}$. Note that existence of a solution $l_{1}, l_{2}, \ldots, l_{k}$ for \DecProblem{ICP} is equivalent to having
$$
H_{l_{1}} H_{l_{2}}\ldots H_{l_{k}}=I
$$
where $I$ is the $4\times 4$ identity matrix.

Therefore, we have so far reduced an instance of \DecProblem{ICP} to the problem of checking whether the matrix semigroup generated by a finite set $\mathcal{H}$ of invertible integer matrices contains the identity. To reduce this problem to \DecProblem{Positive Relations}, we use the $G$-automaton $\mathcal{T}_{\mathcal{H}}$ from \autoref{lem:SunicVenturaAutomaton}. As the automaton is computable, it remains to show that $\mathcal{P}(\mathcal{T}_{\mathcal{H}})$ is non-empty if and only if the linear semigroup generated by $\mathcal{H}$ contains the identity matrix. This, however, is basically the second part of \autoref{lem:RelationsInSunicVenturaAutomaton}.
\end{proof}

From \autoref{thm: positive relations} and \autoref{prop: characterization empty} we immediately obtain the following corollary.
\begin{cor}
  Let $\mathcal{T} = (Q, \Sigma, \delta)$ be a $G$-automaton. The algorithmic problem of checking whether any of the orbital graphs $\mathcal{T} \circ \xi$ centered at a non-singular element $\xi \in \Sigma^{\omega}$ contains a cycle is undecidable.
\end{cor}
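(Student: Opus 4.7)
The plan is to observe that this corollary is essentially a direct translation of \autoref{thm: positive relations} via the equivalence established in \autoref{prop: characterization empty}. So the proof will consist of a short reduction argument rather than any new technical content.

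First, I would recall from \autoref{prop: characterization empty} that, for any $G$-automaton $\mathcal{T}$, the condition $\mathcal{P}(\mathcal{T}) \neq \emptyset$ is logically equivalent to the existence of a non-singular $\xi \in \Sigma^{\omega}$ for which $\mathcal{T} \circ \xi$ contains a (non-trivial) cycle. In particular, the decision problem of checking whether there is such a non-singular $\xi$ with a cycle in $\mathcal{T} \circ \xi$ is the \emph{complement} of \DecProblem{Positive Relations}: the instance $\mathcal{T}$ is a positive instance of one problem exactly when it is a negative instance of the other.

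Since decidability is preserved under complement, the undecidability of \DecProblem{Positive Relations} established in \autoref{thm: positive relations} transfers verbatim to the orbital cycle problem. This yields the corollary; there is no obstacle to overcome, as the work has already been done in the proposition and the theorem. Formally, any algorithm solving the orbital cycle problem would, by negating its output and invoking \autoref{prop: characterization empty}, solve \DecProblem{Positive Relations}, contradicting \autoref{thm: positive relations}.
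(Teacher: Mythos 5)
Your proof is correct and is exactly the argument the paper intends: the corollary is stated as an immediate consequence of Theorem~3.8 and Proposition~3.4, and your explicit complementation argument is precisely how that implication works. Nothing is missing.
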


In the proof of \autoref{thm: positive relations}, we have reduced \DecProblem{ICP} to \DecProblem{Positive Relations}. However, we can use the idea of this proof also for another reduction.\marginline{The rest of this section is flawed; see the erratum below.} Re-consider (\DecProblem{Group}) \DecProblem{Freeness}, the freeness problem for automaton groups:
\problem
  {a $G$-automaton $\mathcal{T}$}
  {is $\mathscr{G}(\mathcal{T})$ free?}
\noindent{}If we make some straightforwards modifications to \autoref{lem:RelationsInSunicVenturaAutomaton} to also cover inverses (or arbitrary powers of the matrices and states in general), then we also obtain a reduction from the following variant of \DecProblem{ICP}, which we will call the \emph{Inverse Identity Correspondence Problem} (\DecProblem{IICP}) to \DecProblem{Group Freeness}.
    \problem
     [a binary alphabet $\Sigma=\{a,b\}$]
        {$m \in \mathbb{N},\Pi=\{(s_{1}, t_{1}), (s_{2}, t_{2}),\ldots, (s_{m},t_{m})\}\subseteq FG(\Sigma)\times FG(\Sigma)$}
        {is there a finite sequence $l_{1}, l_{2}, \ldots, l_{k}$ of indices with $k \geq 1$ where $1\le l_{i}\le m$ for $i=1,2,\ldots, k$, and $e_{1},e_2\ldots, e_{k}\in\{1,-1\}$ with $l_i \neq l_{i + 1}$ or $e_i = e_{i + 1}$ for $i = 1, \dots, k - 1$ such that
        \[
        s^{e_{1}}_{l_{1}}s^{e_{2}}_{l_{2}}\ldots s^{e_{k}}_{l_{k}}=t^{e_{1}}_{l_{1}}t^{e_{2}}_{l_{2}}\ldots t^{e_{k}}_{l_{k}}=\varepsilon
        \]
        holds, where $\varepsilon$ is the empty word?
        }
\noindent{}We state this result in the following proposition.
\begin{prop}
  \DecProblem{IICP} is reducible to \DecProblem{Group Freeness}. In particular, if \DecProblem{IICP} is undecidable, then so is \DecProblem{Group Freeness}.
\end{prop}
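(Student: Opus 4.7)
The approach transplants the proof strategy of \autoref{thm: positive relations} from the \DecProblem{ICP}/\DecProblem{Positive Relations} setting to the \DecProblem{IICP}/\DecProblem{Group Freeness} setting. Given an \DecProblem{IICP} instance $\Pi = \{(s_i, t_i)\}_{i = 1}^{m}$, I would reuse the Potapov--Bell encoding $\rho: FG(\{a, b\}) \hookrightarrow SL_2(\mathbb{Z})$ from the proof of \autoref{thm: positive relations} and form the block matrices $H_i = \mathrm{diag}(\rho(s_i), \rho(t_i)) \in SL_4(\mathbb{Z})$, letting $\mathcal{H} = \{H_1, \ldots, H_m\}$. By the injectivity of $\rho$, $\Pi$ admits an \DecProblem{IICP} solution iff some non-trivial reduced word in $\mathcal{H}^{\pm 1}$ equals the $4 \times 4$ identity matrix. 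Applying \autoref{lem:SunicVenturaAutomaton} to $\mathcal{H}$ then yields a computable $G$-automaton $\mathcal{T}_{\mathcal{H}}$ together with an injective homomorphism $\varphi: \mathscr{G}(\mathcal{T}_{\mathcal{H}}) \hookrightarrow \Aff_4(\mathbb{Z})$.

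The main technical step is the ``straightforward modification'' of \autoref{lem:RelationsInSunicVenturaAutomaton} that covers inverses: for any two reduced words $H_{l_1}^{e_1} \cdots H_{l_k}^{e_k}$ and $H_{l'_1}^{e'_1} \cdots H_{l'_{k'}}^{e'_{k'}}$ over $\mathcal{H}^{\pm 1}$, matrix equality holds iff there exist vectors $\bm{v}_i \in V_{H_{l_i}}$ and $\bm{v}'_j \in V_{H_{l'_j}}$ such that
$m_{H_{l_1}, \bm{v}_1}^{e_1} \cdots m_{H_{l_k}, \bm{v}_k}^{e_k} = m_{H_{l'_1}, \bm{v}'_1}^{e'_1} \cdots m_{H_{l'_{k'}}, \bm{v}'_{k'}}^{e'_{k'}}$
in $\mathscr{G}(\mathcal{T}_{\mathcal{H}})$. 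The proof mirrors that of \autoref{lem:RelationsInSunicVenturaAutomaton}: for ``$\Rightarrow$'', set each vector to $\bm{0}$ and use $(H)_{\bm{0}}^{-1} = (H^{-1})_{\bm{0}}$ so that both sides collapse to pure linear maps; for ``$\Leftarrow$'', apply $\varphi$ and evaluate the resulting affine transformations at $\bm{u} = \bm{0}$ to separate translation and linear parts. Combined with the first step, $\Pi$ admits an \DecProblem{IICP} solution iff there is a non-trivial reduced word over $Q_{\mathcal{H}} \sqcup \inverse{Q_{\mathcal{H}}}$ acting as the identity on $\Sigma^{*}$.

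The main obstacle is then promoting this ``non-trivial reduced relation over $Q_{\mathcal{H}}$'' to ``$\mathscr{G}(\mathcal{T}_{\mathcal{H}})$ is not abstractly free'': in general, a group may admit relations in one presentation yet remain abstractly free (e.\,g., $\mathbb{Z}$ on $\{a^2, a^3\}$). Bridging this gap requires exploiting the concrete structure of $\mathcal{T}_{\mathcal{H}}$---notably, the abelian subgroup of translations that $\mathscr{G}(\mathcal{T}_{\mathcal{H}})$ inherits from the various choices of $\bm{v} \in V_{H_i}$---to argue that any reduced relation over $Q_{\mathcal{H}}$ forces a genuine obstruction to abstract freeness, and conversely that the absence of an \DecProblem{IICP} solution leaves $\mathscr{G}(\mathcal{T}_{\mathcal{H}})$ abstractly free. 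Once this verification is in place, the reduction follows mechanically, and the undecidability claim transfers from \DecProblem{IICP} to \DecProblem{Group Freeness}.
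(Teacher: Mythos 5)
Your plan follows the same route that the paper itself only sketches (the paper offers no detailed proof of this proposition, just the one-sentence remark that \autoref{lem:RelationsInSunicVenturaAutomaton} can be modified to cover inverses), and your first two steps are fine: the block-matrix encoding works verbatim, and the extension of \autoref{lem:RelationsInSunicVenturaAutomaton} to reduced words over $\mathcal{H}^{\pm 1}$ goes through exactly as you describe. You are also right that the delicate point is the final step, but the bridge you propose there cannot be built. The generating set of $\mathscr{G}(\mathcal{T}_{\mathcal{H}})$ is necessarily the \emph{whole} state set $Q_{\mathcal{H}} = \{ m_{M, \bm{v}} \mid M \in \mathcal{H}, \bm{v} \in V_M \}$, and for $\bm{v} \neq \bm{w}$ the element $m_{M,\bm{v}} \, m_{M,\bm{w}}^{-1}$ is mapped by $\varphi$ to the translation of $\mathbb{Z}^4$ by $\bm{v} - \bm{w}$. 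Since $|V_M| \geq 2^4$, one obtains two commuting, linearly independent translations of infinite order, so $\mathscr{G}(\mathcal{T}_{\mathcal{H}})$ always contains a copy of $\mathbb{Z}^2$ and is therefore \emph{never} abstractly free (abelian subgroups of free groups are cyclic); for the same reason it always admits non-trivial reduced relations over $Q_{\mathcal{H}}$, so it is never free with basis $Q_{\mathcal{H}}$ either. Hence your converse direction (\enquote{absence of an \DecProblem{IICP} solution leaves $\mathscr{G}(\mathcal{T}_{\mathcal{H}})$ abstractly free}) is false, and the map $\Pi \mapsto \mathcal{T}_{\mathcal{H}}$ sends every instance to a no-instance of \DecProblem{Group Freeness}; it is not a reduction.

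What your extended lemma actually yields is: $\Pi$ admits an \DecProblem{IICP} solution if and only if the \emph{subgroup} of $\mathscr{G}(\mathcal{T}_{\mathcal{H}})$ generated by the designated states $m_{H_1, \bm{0}}, \dots, m_{H_m, \bm{0}}$ fails to be free on these $m$ elements. So your argument does give a reduction of \DecProblem{IICP} to the problem \enquote{given a $G$-automaton and a distinguished tuple of its states, is the subgroup they generate free on that tuple?}, but not to \DecProblem{Group Freeness} as the paper defines it. To prove the proposition as stated you would need a genuinely different construction --- e.\,g.\ a computable $G$-automaton whose full state set generates (an isomorphic copy of) the linear group $\langle H_1, \dots, H_m \rangle$ itself, without the extraneous translation part --- and this is not supplied by \autoref{lem:SunicVenturaAutomaton}. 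Be aware that the paper's own one-sentence justification is silent on exactly this point, so you should not expect to find the missing argument there.
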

  This raises an obvious question:
  \begin{prob}
    Is \DecProblem{IICP} decidable?
  \end{prob}
  \noindent{}Another problem related to the proof of \autoref{thm: positive relations} is the following.
  \begin{prob}
    The automata devised in \cite{Br-Sid, Su-Ve09} and used in \autoref{lem:SunicVenturaAutomaton} are not reversible. Are there also (bi-)reversible $G$-automata generating these groups?
  \end{prob}
  \noindent{}This problem is interesting because a positive answer would lead to undecidability of the decision problem whether all Schreier graphs cantered at periodic $\omega$-words of a $G$-automaton are infinite:
  \problem
    {a (bi-)reversible $G$-automaton $\mathcal{T} = (Q, \Sigma, \delta)$}
    {is $\wt{Q}^{*} \circ u^{\omega}$ infinite for all finite words $u \in \Sigma^+$ (where $\wt{Q}$ is the union of the states of $\mathcal{T}$ and its inverse)?}
  The reduction is based on the fact that $\mathcal{\partial \mathcal{T}} = \emptyset$ holds if and only if the stabilizer $\Stab_{\mathcal{T} \sqcup \wt{\mathcal{T}}}(u^\omega) \circ{}\!$ has infinite index in $\mathscr{G}(\mathcal{T})$ for all $u \in \Sigma^+$ \cite[Corollary 4]{DaRo16}. So, one can reduce a strengthened version of \DecProblem{Positive Relations} where the input automaton is also (bi-)reversible to the above problem by taking the dual. This strengthened version is undecidable by the same proof as for \autoref{thm: positive relations} if one can compute (bi-)reversible $G$-automata for \autoref{lem:SunicVenturaAutomaton}.

  \paragraph*{Freeness Problem for Automaton Semigroups.}
  In\marginline{Regarding the freeness problem, see the erratum below.} the direction of the freeness problems for automaton structures, we may also consider a strengthened version of the freeness problem for automaton semigroups: \DecProblem{$G$-Semigroup Freeness}. It is the following problem.
  \problem
    {a $G$-automaton $\mathcal{T}$}
    {is $\mathscr{S}(\mathcal{T})$ free?}

  Besides its obvious connection to the freeness problem of automaton groups and semigroups, this problem is also interesting since, in literature, $G$-automata quite often generate free semigroups, see for instance \cite{DaRo13, DaRo14, DaRo16, olukoya2017growthRates, SilSte05}. In \cite[Theorem 3.2]{DaRo13}, for example, decidable sufficient conditions for reset\footnote{An $S$-automaton $\mathcal{T} = (Q, \Sigma, \delta)$ is called a \emph{reset} $S$-automaton if there is a finite word $u \in \Sigma^*$ and a state $q_0 \in Q$ with $q \cdot u = q_0$ for all $q \in Q$.} $G$-automata to generate free semigroups are presented. Moreover, it is a well known fact that groups containing free semigroups (in at least two generators) have exponential growth. Usually, it seems considerably more easy to show that an automaton generates a free semigroup than to show that it generates a free group. However, it seems to be easier to show undecidability of the freeness problem for automaton semigroups than to study the freeness problem for automaton groups as the following theorem demonstrates.
\begin{thm}\label{theo: freeness semigroup}
  \DecProblem{$G$-Semigroup Freeness} is undecidable.
\end{thm}
\begin{proof}
  Consider the \DecProblem{Matrix Semigroup Freeness} problem
  \problem
    {a finite non-empty set $\mathcal{M}$ of invertible matrices from $\mathbb{N}^{3 \times 3}$}
    {is the linear semigroup generated by $\mathcal{M}$ free?}
  \noindent{}that has been proven undecidable in \cite{KlBiSat91}. As in the proof of \autoref{thm: positive relations}, we can use the $G$-automaton $\mathcal{T}_{\mathcal{M}}$ from \autoref{lem:SunicVenturaAutomaton} to reduce \DecProblem{Matrix Semigroup Freeness} to \DecProblem{$G$-Semigroup Freeness}. By \autoref{lem:RelationsInSunicVenturaAutomaton}, the semigroup generated by $\mathcal{T}_{\mathcal{M}}$ is free if and only if so is the linear semigroup generated by $\mathcal{M}$.
\end{proof}

As \DecProblem{$G$-Semigroup Freeness} is a strengthened version of the freeness problem for automaton semigroups, we get the following corollary, which solves an open problem of Grigorchuk, Nekrashevych and Sushchansky \cite[7.2 b)]{GriNeShu}.
\begin{cor}\label{cor:SemigroupFreenessIsUndecidable}
  The freeness problem for automaton semigroups
  \problem
    {an $S$-automaton $\mathcal{T}$}
    {is $\mathscr{S}(\mathcal{T})$ free?}
  \noindent{}is undecidable.
\end{cor}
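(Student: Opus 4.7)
The plan is to observe that this corollary is an immediate specialization argument from Theorem~\ref{theo: freeness semigroup}. Recall that a $G$-automaton is by definition a complete $\inverse{S}$-automaton, which is in particular a deterministic automaton, hence an $S$-automaton. Moreover, the semigroup $\mathscr{S}(\mathcal{T})$ associated to any $G$-automaton $\mathcal{T}$ (viewed merely as an $S$-automaton by forgetting the invertibility structure) is exactly the same semigroup that appears in the statement of \DecProblem{$G$-Semigroup Freeness}.

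Concretely, I would argue by contradiction: suppose there were an algorithm $\mathcal{A}$ that, given an arbitrary $S$-automaton $\mathcal{T}$, decides whether $\mathscr{S}(\mathcal{T})$ is free. Then, given any $G$-automaton $\mathcal{T}'$ (an instance of \DecProblem{$G$-Semigroup Freeness}), we could simply feed $\mathcal{T}'$ into $\mathcal{A}$ and obtain the correct answer to the question whether $\mathscr{S}(\mathcal{T}')$ is free. This would yield a decision procedure for \DecProblem{$G$-Semigroup Freeness}, contradicting Theorem~\ref{theo: freeness semigroup}.

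There is no real obstacle here; the only subtlety worth spelling out is that the reduction is the trivial identity reduction on inputs, and that one must check the encoding of a $G$-automaton is also a valid encoding of an $S$-automaton (which is clear from the definitions in Section~\ref{sec: preliminaries}, since the only difference is that a $G$-automaton satisfies additional closure properties that an $S$-automaton may lack). Hence the corollary follows.
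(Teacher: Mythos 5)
Your argument is exactly the one the paper intends: since every $G$-automaton is in particular an $S$-automaton generating the same semigroup $\mathscr{S}(\mathcal{T})$, a decision procedure for the general freeness problem would restrict to one for \DecProblem{$G$-Semigroup Freeness}, contradicting \autoref{theo: freeness semigroup}. This matches the paper's (one-line) justification that the corollary follows because \DecProblem{$G$-Semigroup Freeness} is a strengthened (i.\,e.\ input-restricted) version of the general problem.
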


\autoref{theo: freeness semigroup} can also be interpreted outside the context of automaton structures. Remember that any $S$-automaton $\mathcal{T} = (Q, \Sigma, \delta)$ is, by definition, a synchronous, deterministic finite state transducer whose input and output alphabets coincide. In a more automata theoretic setting, one would not use these transducers to define semigroups but to define (rational) relations. For this, we need to select an initial state $q_0 \in Q$ and a set of final states $F \subseteq Q$. Under this choice, the \emph{rational relation} accepted by $\mathcal{T}$ is
\[
  \mathcal{R}[\mathcal{T}, q_0, F] = \{ (u, v) \in \Sigma^* \times \Sigma^* \mid q_0 \circ u = v, q_0 \cdot u \in F \} \text{,}
\]
i.\,e.\ a pair of finite words $(u, v)$ is in the relation if and only if, when reading $u$ starting in $q_0$, one ends in a state belonging to $F$ and the output is $v$. 

Now, \autoref{theo: freeness semigroup} yields the following undecidability result:
\begin{cor}
  The problem\footnote{Problems similar to the one presented here are also discussed in \cite{pierce}.}
  \problem
    {a synchronous, complete, deterministic, invertible, finite-state transducer $\mathcal{T}$ with coinciding input and output alphabet and state set $Q$, in which every state is final}
    {is there a $k \geq 1$ such that, in the $k^\text{th}$-power of $\mathcal{T}$, one can choose two different initial states $\bm{p}, \bm{q} \in Q^k$ with $\mathcal{R}[\mathcal{T}^k, \bm{p}, Q^k] = \mathcal{R}[\mathcal{T}^k, \bm{q}, Q^k]$?}
  \noindent{}is undecidable.
\end{cor}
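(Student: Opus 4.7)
The plan is to first reformulate the problem in familiar semigroup-theoretic terms, and then chain together reductions already established in the paper. Since every state of $\mathcal{T}^k$ is final and $\mathcal{T}$ is complete, the acceptance condition $\bm{p}\cdot u\in Q^k$ is automatic, so the rational relation $\mathcal{R}[\mathcal{T}^k,\bm{p},Q^k]$ is simply the graph of the total function $\bm{p}\circ{}\!\colon \Sigma^*\to\Sigma^*$. Two graphs of total functions coincide if and only if the functions do, so the question in the corollary is equivalent to asking: does there exist $k\geq 1$ and two distinct $\bm{p},\bm{q}\in Q^k$ with $\bm{p}=\bm{q}$ in $\mathscr{S}(\mathcal{T})$? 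In other words, does $\mathscr{S}(\mathcal{T})$ admit a non-trivial \emph{equal-length} relation between two words over the generators $Q$?

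Next, I would feed into this decision procedure the very $G$-automaton $\mathcal{T}_{\mathcal{M}}$ produced by \autoref{lem:SunicVenturaAutomaton} in the proof of \autoref{theo: freeness semigroup}. The key point from \autoref{lem:RelationsInSunicVenturaAutomaton} is that the correspondence between matrix-semigroup relations in $\mathcal{M}$ and semigroup relations in $\mathscr{S}(\mathcal{T}_{\mathcal{M}})$ \emph{preserves the lengths of both sides}: the forward direction is witnessed by choosing every translation vector to be $\bm{0}$ (which always lies in $V_{M_i}$), while the backward direction equates the matrix parts of the corresponding affine transformations. Hence $\mathscr{S}(\mathcal{T}_{\mathcal{M}})$ admits an equal-length relation if and only if the matrix semigroup generated by $\mathcal{M}$ does.

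The main obstacle is then the final step: undecidability of \emph{equal-length} freeness for matrix semigroups generated by invertible integer matrices. I would establish this via either (i) the observation that the reduction from (Modified) \DecProblem{PCP} underlying \cite{KlBiSat91} naturally produces matrix instances whose only non-trivial relations are of the equal-length form $M_{i_1}\cdots M_{i_k}=N_{i_1}\cdots N_{i_k}$ (a PCP-solution of length $k$ yields a matrix relation with exactly $k$ factors on each side), so that freeness and equal-length freeness coincide on the produced instances; or (ii) a general combinatorial remark: from any relation $\bm{A}=\bm{B}$ in the matrix semigroup with $|\bm{A}|<|\bm{B}|$ one obtains the equal-length relation $\bm{A}\bm{B}=\bm{B}\bm{A}$, whose two sides represent the same matrix (by substituting $\bm{B}=\bm{A}$) and which are distinct as words unless $\bm{A}$ and $\bm{B}$ are powers of a common subword. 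By the Fine--Wilf theorem the latter forces a torsion element in the matrix semigroup, a degeneracy easily ruled out by the specific matrices arising in \cite{KlBiSat91}. Chaining the three steps yields undecidability of the problem stated in the corollary.
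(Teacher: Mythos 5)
Your opening reformulation is exactly the one the paper uses: since $\mathcal{T}$ is complete and every state is final, $\mathcal{R}[\mathcal{T}^k,\bm{p},Q^k]$ is the graph of the total function $\bm{p}\circ{}\!$, so the question is whether $\mathscr{S}(\mathcal{T})$ has a relation between two \emph{distinct words of equal length} over $Q$. From there, however, you and the paper diverge, and your route has a genuine gap. The paper stays at the automaton level: it reduces from \DecProblem{$G$-Semigroup Freeness} (\autoref{theo: freeness semigroup}) by the identity map and proves the purely combinatorial fact that, for a $G$-automaton with $|Q|\geq 2$, \emph{any} relation $\bm{p}'\circ{}\!=\bm{q}'\circ{}\!$ between distinct positive words yields an equal-length one: either $\bm{p}'\bm{q}'\neq\bm{q}'\bm{p}'$ as words (and these have equal length and equal action), or $\bm{p}'$ and $\bm{q}'$ are powers of a common word $\bm{r}$, in which case some $\bm{r}^{\ell-k}$ acts as the identity and one compares $q\bm{r}^{\ell-k}$ with $\bm{r}^{\ell-k}q$ for a generator $q$ differing from the first letter of $\bm{r}$. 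Your route~(ii) rediscovers precisely this trick, but you deploy it one level too far down, inside the matrix semigroup, and that is where the argument breaks.

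The problem is your claimed equivalence \enquote{$\mathscr{S}(\mathcal{T}_{\mathcal{M}})$ admits an equal-length relation if and only if the matrix semigroup generated by $\mathcal{M}$ does.} \autoref{lem:RelationsInSunicVenturaAutomaton} preserves lengths, but it does not preserve \emph{distinctness}: an equal-length relation between two distinct state words of $\mathcal{T}_{\mathcal{M}}$ may involve the \emph{same} matrix sequence decorated with different translation vectors, and then it projects to a trivial matrix identity. Such collisions always exist once some $\|M\|\geq 2$: already for $d=1$ and $M=(2)$ one has $V_M=\{-2,-1,0,1\}$ and
\[
  m_{M,0}\,m_{M,0} \;=\; m_{M,-2}\,m_{M,1} \text{,}
\]
since both sides act as $w\mapsto 4w$, although the monogenic matrix semigroup generated by $2$ is free. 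Hence your reduction maps NO-instances of (equal-length) matrix freeness to YES-instances of the corollary's problem and is unsound. (Your route~(i) is, in addition, only an unverified claim about the instances produced in \cite{KlBiSat91}; neither route is needed once one argues, as the paper does, at the level of the automaton semigroup, where the massaging into an equal-length relation works for an arbitrary $G$-automaton with at least two states and the undecidability of \DecProblem{$G$-Semigroup Freeness} can be invoked as a black box.)
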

\begin{proof}
  The reduction function from (the complement of) \DecProblem{$G$-Semigroup Freeness} to this problem  is the identity function with one exception: if the input automaton contains only one state, then it generates a finite and, thus, non-free semigroup; therefore, the reduction function can map these automata to an arbitrary but fixed automaton with a positive answer to the above question.

  We need to show that, for a $G$-automaton $\mathcal{T} = (Q, \Sigma, \delta)$ with at least two states, the semigroup $\mathscr{S}(\mathcal{T})$ is \textbf{not} free if and only if there is a $k \geq 1$ and $\bm{p}, \bm{q} \in Q^k$ with $\bm{p} \neq \bm{q}$ such that $\mathcal{R}[\mathcal{T}^k, \bm{p}, Q^k] = \mathcal{R}[\mathcal{T}^k, \bm{q}, Q^k]$. Notice that we have $\bm{p} \circ{}\! = \bm{q} \circ{}\!$ if and only if we have $\mathcal{R}[\mathcal{T}^{|\bm{p}|}, \bm{p}, Q^{|\bm{p}|}] = \mathcal{R}[\mathcal{T}^{|\bm{q}|}, \bm{q}, Q^{|\bm{q}|}]$ for any par $\bm{p}, \bm{q} \in Q^+$. Thus, it suffices to show that there are $\bm{p}, \bm{q} \in Q^+$ with $\bm{p} \neq \bm{q}$ but $|\bm{p}| = |\bm{q}|$ and $\bm{p} \circ{}\! = \bm{q} \circ{}\!$ if $\mathscr{S}(\mathcal{T})$ is not free.

  Suppose that $\mathscr{S}(\mathcal{T})$ is not free, i.\,e.\ that there are $\bm{p}', \bm{q}' \in Q^+$ with $\bm{p}' \neq \bm{q}'$ but $\bm{p}' \circ{}\! = \bm{q}' \circ{}\!$. We are done if we have $|\bm{p}'| = |\bm{q}'|$. Otherwise, we distinguish two cases: $\bm{p}' \bm{q}' \neq \bm{q}' \bm{p}'$ and $\bm{p}' \bm{q}' = \bm{q}' \bm{p}'$. In the former one, we can set $\bm{p} = \bm{p}' \bm{q}'$ and $\bm{q} = \bm{q}' \bm{p}'$. The latter case is a bit more complicated: as $\bm{p}'$ and $\bm{q}'$ commute, there is a word $\bm{r} \in Q^+$ such that $\bm{p}' = \bm{r}^k$ and $\bm{q}' = \bm{r}^\ell$ for some $k, \ell \in \mathbb{N}$ (see, for example, \cite[p.~8, Porposition~1.3.2]{lot83}). Without loss of generality, we may assume $k < \ell$. We have $\id = \inverse{\bm{p}'} \bm{p}' \circ{}\! = \inverse{\bm{r}}^k \bm{q}' \circ{}\! = \inverse{\bm{r}}^k \bm{r}^\ell \circ{}\! = \bm{r}^{\ell - k} \circ{}\!$, i.\,e.\ that $\bm{r}^{\ell - k} \in Q^+$ acts like the identity. Since we have $|Q| \geq 2$, there is a state $q \in Q$ which differs from the first letter of $\bm{r}$ (seen as a word over $Q$). Thus, we have $q \bm{r}^{\ell - k} \neq \bm{r}^{\ell - k} q$ but $|q \bm{r}^{\ell - k}| = |\bm{r}^{\ell - k} q|$ and $q \bm{r}^{\ell - k} \circ{}\! = \bm{r}^{\ell - k} q \circ{}\!$.
\end{proof}

\end{section}

\begin{section}{The Finiteness Problem for Invertible, Bi-Reversible Automata}\label{sct:finiteness}

In this section, we consider \DecProblem{Group Finiteness}, the finiteness problem for automaton groups:
 \problem
      {a $G$-automaton $\mathcal{T}$}
      {is $\mathscr{G}(\mathcal{T})$ finite?}
\noindent Although it has been studied widely, the decidability of this problem is still an open problem. Notice, however, that the problem is semi-decidable in the sense that there is an algorithm which stops if and only if $\mathscr{G}(\mathcal{T})$ is finite. For this algorithm, one can use the naïve approach of enumerating all state sequences (in order for ascending length) until no further new group elements are found. The result closest to proving undecidability of \DecProblem{Group Finiteness} is due to Gillibert, who showed that the finiteness problem for automaton \emph{semigroups} is undecidable \cite{Gilbert13}, and the recent result on the undecidability of the order problem for automaton groups (checking whether a group element has finite order) \cite{gillibert2017automaton}.

In this section, we extend this result to get closer to automaton groups: we show that the finiteness problem for automaton semigroups remains undecidable if the input automaton is inverse-deterministic and bi-reversible. In particular, we obtain undecidability of the problem whether a subsemigroup of an automaton-inverse semigroup given by some generating states is finite. We obtain these results by connecting the existence of an infinite orbit in the boundary to the finiteness of the semigroup.

\paragraph{Orbital Graphs and Finiteness.}
We start by characterizing finite semigroups by their dynamics on the boundary. Finiteness of an automaton semigroup is related to the sizes of its orbital graphs as we have the following result that naturally extends the group case  \cite[Corollary 1]{DaRo14}.
\begin{prop}\label{prop:uniformly bounded}
  Let $\mathcal{T} = (Q, \Sigma, \delta)$ be an $S$-automaton. Then, its generated semigroup $\mathscr{S}(\mathcal{T})$ is finite if and only if there exists a constant $C$ such that, for every $\xi \in \Sigma^\omega$, the size of $Q^{*}\circ \xi$ is bounded by $C$.
\end{prop}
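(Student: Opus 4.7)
First I would dispatch the easy direction: if $\mathscr{S}(\mathcal{T})$ has cardinality $N$, then $Q^* \circ \xi \subseteq \{\xi\} \cup \{s \circ \xi : s \in \mathscr{S}(\mathcal{T}),\, s \circ \xi \text{ defined}\}$ has at most $N+1$ elements for every $\xi \in \Sigma^\omega$, so $C = N+1$ works uniformly.

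For the converse, I assume $|Q^* \circ \xi| \leq C$ for all $\xi \in \Sigma^\omega$. The plan is to first reduce to the complete case and then exploit that the semigroup acts on each finite orbit by a small transformation semigroup, which via a finiteness argument for congruences of bounded quotient-size will bound the whole semigroup. Concretely, I would let $\mathcal{T}^0$ denote the completion of $\mathcal{T}$ obtained by adjoining a fresh sink state; then $\mathscr{S}(\mathcal{T}^0) \cong \mathscr{S}(\mathcal{T})^0$ (so finiteness is equivalent on both sides) and each orbit in $\mathcal{T}^0$ has cardinality at most $C + 1$ (the sink contributes at most one extra point). From here on I may assume $\mathcal{T}$ is complete and, in particular, that the induced action on $\Sigma^\omega$ is total and faithful.

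For every $\xi \in \Sigma^\omega$, the set $Q^* \circ \xi$ is $\mathscr{S}(\mathcal{T})$-invariant because $s \circ (\bm{q} \circ \xi) = (s\bm{q}) \circ \xi$, which yields a semigroup homomorphism
\[
\pi_\xi \colon \mathscr{S}(\mathcal{T}) \longrightarrow \mathrm{Func}(Q^* \circ \xi,\, Q^* \circ \xi)
\]
whose image has size at most $(C+1)^{C+1}$, so the congruence $\theta_\xi = \ker \pi_\xi$ has quotient of uniformly bounded cardinality. The key observation is then that, since $\mathscr{S}(\mathcal{T})$ is finitely generated, only finitely many congruences on it can have quotient of size bounded by a given constant: each such congruence is determined by choosing the image of the (finitely many) generators in one of the (finitely many, up to isomorphism) semigroups of the allowed size. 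Thus the entire family $\{\theta_\xi : \xi \in \Sigma^\omega\}$ is in fact finite, say $\theta_1, \ldots, \theta_m$.

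Faithfulness of the boundary action in the complete case (two elements of $\mathscr{S}(\mathcal{T})$ coincide iff they act identically on every $\xi$, which reduces to agreement on every finite prefix) now forces $\bigcap_{i=1}^m \theta_i = \bigcap_{\xi \in \Sigma^\omega} \theta_\xi$ to be the equality congruence on $\mathscr{S}(\mathcal{T})$. Consequently the diagonal map $\mathscr{S}(\mathcal{T}) \hookrightarrow \prod_{i=1}^m \mathscr{S}(\mathcal{T})/\theta_i$ is an injective homomorphism into a finite product of finite semigroups, so $\mathscr{S}(\mathcal{T})$ itself is finite. The main obstacle I anticipate is handling the partiality, which the completion step is meant to address: without completion, two distinct elements of $\mathscr{S}(\mathcal{T})$ could agree on every $\omega$-word while differing on dead-end finite words, so the intersection of the $\theta_\xi$ might be strictly coarser than equality; adjoining a sink routes those dead ends into the orbit and restores faithfulness. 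A secondary point that merits careful bookkeeping is the claim that a finitely generated semigroup admits only finitely many congruences of bounded quotient-size, which follows by enumerating homomorphisms into the finitely many isomorphism classes of semigroups of that size.
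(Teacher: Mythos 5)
Your overall architecture is sound in the complete case, and there it is a genuinely different route from the paper's: where the paper observes that there are only finitely many orbital graphs up to rooted labelled isomorphism, forms their finite union $\mathcal{U}$, and embeds $\mathscr{S}(\mathcal{T})$ into the finite semigroup of partial maps on the vertices of $\mathcal{U}$, you instead use the congruences $\theta_\xi$ of uniformly bounded quotient size and a residual-finiteness argument. Your two auxiliary claims there --- that a finitely generated semigroup has only finitely many congruences with quotient of bounded size, and that the boundary action of a \emph{complete} automaton is faithful --- are both correct. You also correctly identify the crux of the partial case, namely that distinct elements of $\mathscr{S}(\mathcal{T})$ need not be separated by their action on $\Sigma^\omega$ alone.

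The genuine gap is the step meant to dispose of that crux: the claim that every orbit of the completion $\mathcal{T}^0$ has size at most $C+1$ because ``the sink contributes at most one extra point.'' This is false, and no uniform bound for $\mathcal{T}^0$ follows from the hypothesis in any evident way. Writing $\#$ for the fresh letter output by the sink, the orbit of $\xi \in \Sigma^\omega$ under $\mathcal{T}^0$ contains, besides $Q^* \circ \xi$ and $\#^\omega$, a word $v\#^\omega$ for \emph{every} $v \in \Sigma^*$ of the form $v = \bm{q} \circ u$ where $u$ is a finite prefix (of $\xi$ or of another orbit element) on which $\bm{q} \circ{}$ is defined but beyond which it dies. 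These $v$ range over subsets of the orbits $Q^* \circ \xi[1..n]$ of the \emph{finite} prefixes, and the hypothesis bounds only orbits of $\omega$-words, which can be strictly smaller (definedness on a prefix does not imply definedness on any infinite extension). Already the automaton with $\Sigma = \{a,b\}$, states $q$ and $d$, transitions $a/b$ and $b/a$ from $q$ to $d$, and no transitions out of $d$ has every orbit $Q^* \circ \xi$ equal to $\{\xi\}$ (so $C = 1$), while the orbit of $a^\omega$ in the completion is $\{a^\omega, b\#^\omega, a\#^\omega, \#^\omega\}$ of size $4 > C+1$. Worse, as far as your argument shows, the deaths could occur at infinitely many distinct positions $n$, each contributing a new point $v\#^\omega$ with a $\Sigma$-prefix of a different length, so the orbit in $\mathcal{T}^0$ could a priori be infinite. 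Proving that the orbits of $\mathcal{T}^0$ are in fact uniformly bounded under your hypothesis is essentially equivalent to the partial case of the proposition itself; the completion therefore relocates the difficulty rather than resolving it, and this is precisely the point where your proof and the paper's diverge in substance (the paper instead works directly with the partial orbital graphs and the semigroup of partial maps on $\mathcal{U}$).
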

\begin{proof}
  Suppose that $\mathscr{S}(\mathcal{T})$ is finite. Then, clearly, $|Q^{*}\circ \xi| \le |\mathscr{S}(\mathcal{T})|$ for all $\xi\in \Sigma^{\omega}$.

  Conversely, since $Q$ is a finite set and there is a constant $C$ such that $|Q^{*}\circ \xi| \le C$ for all $\xi\in \Sigma^{\omega}$, there are only finitely many possibilities for the orbital graph $\mathcal{T} \circ \xi$ up to edge-label preserving isomorphism of rooted graphs. For each of these isomorphism classes, we fix a representative. Let them be $\mathcal{T} \circ \xi_{1}, \ldots, \mathcal{T} \circ \xi_{n}$ and let $\mathcal{U}$ be the union of these orbital graphs. Notice that $\mathcal{U}$, as a finite union of finite graphs, is finite itself. Additionally, we have a partial action of $Q^+$ on $\mathcal{U}$: For any $\bm{q} \in Q^+$ with $\bm{q} = q_\ell \dots q_1$ for $q_1, \dots, q_\ell \in Q$, let $\bm{q} \star u$ denote the vertex of $\mathcal{U}$ reached by following the (unique if existing) path
  \begin{center}
    \begin{tikzpicture}[auto, >=latex]
      \node (u) {$u$};
      \node[right=of u] (v1) {$q_1 \circ u$};
      \node[right=of v1] (v2) {$q_2 q_1 \circ u$};
      \node[right=of v2] (dots) {$\dots$};
      \node[right=of dots] (vl) {$q_\ell \dots q_1 \circ u = \bm{q} \star u$};
      
      \draw[->] (u) edge node {$q_1$} (v1)
                (v1) edge node {$q_2$} (v2)
                (v2) edge node {$q_3$} (dots)
                (dots) edge node {$q_\ell$} (vl);
    \end{tikzpicture}
  \end{center}
  of $\mathcal{U}$ which starts in $u$ and has label $q_1 \dots q_\ell$. The partial maps $\bm{q} \star{}\!$ mapping a vertex $u$ to $\bm{q} \star u$ form a semigroup $T$, which is finite since there are only finitely many partial maps from the finite vertex set of $\mathcal{U}$ to itself.

  We show that $\mathscr{S}(\mathcal{T})$ is isomorphic to $T$ via the isomorphism $\varphi: \mathscr{S}(\mathcal{T}) \to T, \bm{q} \circ{}\! \mapsto \bm{q} \star{}\!$. First, we have to show that $\varphi$ is well-defined. For this, we need to show that $\bm{q} \circ{}\! = \bm{p} \circ{}\!$ implies $\bm{q} \star{}\! = \bm{p} \star{}\!$. Consider $\bm{q} = q_m \dots q_1$ and $\bm{p} = p_\ell \dots p_1$ for $q_1, \dots, q_m, p_1, \dots, p_\ell \in Q$ and suppose that there is some $\xi \in \Sigma^\omega$ such that $\bm{q} \star{}\!$ differs from $\bm{p} \star{}\!$ on $\xi$. There are two cases: either one of them (say: $\bm{p} \star{}\!$) is undefined on $\xi$ or they are both defined but their values differs. In the first case, there is some path in $\mathcal{U}$ which starts in $\xi$ and is labeled by $q_1 \dots q_m$; on the other hand, there is no such path labeled with $p_1 \dots p_\ell$. Since $\mathcal{U}$ arises as a union of orbital graphs the (non-)existence of these paths mean that $\bm{q} \circ{}\!$ is defined on $\xi$ while $\bm{p} \circ{}\!$ is not. For the second case, we have that both paths exist but end in different nodes. For the same reasons, this means that $\bm{q} \circ{}\!$ and $\bm{p} \circ{}\!$ are both defined on $\xi$ but differ in value.
  
  As surjectivity of $\varphi$ is trivial, it remains to show injectivity. For two distinct elements, $\bm{q} \circ{}\!$ and $\bm{p} \circ{}\!$ of $\mathscr{S}(\mathcal{T})$, there is an $\omega$-word $\xi \in \Sigma^\omega$ such that $\bm{q} \circ{}\!$ and $\bm{p} \circ{}\!$ differ on $\eta$ (either because only one of them is defined or because their respective values differ). Let $\mathcal{T} \circ \xi_i$ be the orbital graph isomorphic to $\mathcal{T} \circ \xi$. Then, we have $\bm{q} \star{}\! \neq \bm{p} \star{}\!$ because there either is (without loss of generality) only a path belonging to $\bm{q}$ which starts in $\xi_i$ in $\mathcal{U}$ but none belonging to $\bm{p}$ or both paths exist but they end in different nodes.
\end{proof}

Notice that the same idea can be used to prove a similar result for automaton-inverse semigroups:
\begin{prop}
  Let $\mathcal{T} = (Q, \Sigma, \delta)$ be an $\inverse{S}$-automaton. Then, its generated inverse semigroup $\inverse{\mathscr{S}}(\mathcal{T})$ is finite if and only if there exists a constant $C$ such that, for every $\xi \in \Sigma^\omega$, the size of $\tilde{Q}^{*}\circ \xi$ is bounded by $C$, where $\tilde{Q}$ denotes the disjoint union of the states of $\mathcal{T}$ and the states of its inverse $\inverse{\mathcal{T}}$.
\end{prop}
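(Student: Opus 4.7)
The plan is to reduce the statement directly to the preceding proposition for $S$-automata. The key observation, already recorded in the preliminaries, is that for every $\inverse{S}$-automaton $\mathcal{T}$ the inverse semigroup $\inverse{\mathscr{S}}(\mathcal{T})$ coincides with the ordinary semigroup $\mathscr{S}(\mathcal{T} \sqcup \inverse{\mathcal{T}})$ generated by the disjoint union of $\mathcal{T}$ and its inverse. Thus finiteness of one is equivalent to finiteness of the other.

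First, I would verify that $\mathcal{T} \sqcup \inverse{\mathcal{T}}$ is genuinely an $S$-automaton, i.\,e.\ deterministic. Since $\mathcal{T}$ is bi-deterministic by assumption, both $\mathcal{T}$ and $\inverse{\mathcal{T}}$ are deterministic, and because their state sets are disjoint by definition the disjoint union inherits determinism without any interaction between the two parts. Its state set is precisely $\tilde{Q} = Q \sqcup \inverse{Q}$, so the orbital graph of $\mathcal{T} \sqcup \inverse{\mathcal{T}}$ centered at some $\xi \in \Sigma^\omega$ has vertex set $\tilde{Q}^* \circ \xi$, matching the orbit in the statement.

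Next, I would invoke the previous proposition on the $S$-automaton $\mathcal{T} \sqcup \inverse{\mathcal{T}}$: its generated semigroup $\mathscr{S}(\mathcal{T} \sqcup \inverse{\mathcal{T}}) = \inverse{\mathscr{S}}(\mathcal{T})$ is finite if and only if there is a uniform constant $C$ bounding $|\tilde{Q}^* \circ \xi|$ for all $\xi \in \Sigma^\omega$. There is no genuine obstacle here: counting elements of the inverse semigroup or vertices of orbital graphs does not use the inverse semigroup structure at all, only the underlying semigroup structure of partial maps, which is exactly why the authors remark that \enquote{the same idea can be used}. If a proof by adaptation were preferred over reduction, one would repeat the previous construction verbatim with $\tilde{Q}$ in place of $Q$, define the partial action $\bm{\tilde{q}} \star{}\!$ of $\tilde{Q}^+$ on a union $\mathcal{U}$ of representatives of the finitely many isomorphism classes of orbital graphs, and observe that the resulting semigroup of partial maps on a finite vertex set is finite and isomorphic to $\inverse{\mathscr{S}}(\mathcal{T})$ via the obvious map, with the inverse semigroup structure tagging along for free.
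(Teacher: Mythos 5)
Your proposal is correct. The paper in fact gives no proof of this proposition at all --- it merely remarks that \enquote{the same idea can be used}, i.\,e.\ that one should repeat the argument of \autoref{prop:uniformly bounded} with $\wt{Q}$ in place of $Q$. Your primary route is a genuine (if small) improvement in packaging: rather than re-running the construction of the graph union $\mathcal{U}$ and the action $\star$, you invoke \autoref{prop:uniformly bounded} as a black box on the $S$-automaton $\mathcal{T} \sqcup \inverse{\mathcal{T}}$ and use the identity $\inverse{\mathscr{S}}(\mathcal{T}) = \mathscr{S}(\mathcal{T} \sqcup \inverse{\mathcal{T}})$ recorded in the preliminaries. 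The only points that need checking for this reduction are exactly the ones you check: that the disjoint union of the two deterministic automata $\mathcal{T}$ and $\inverse{\mathcal{T}}$ (the latter deterministic because $\mathcal{T}$ is bi-deterministic) is again deterministic, and that its orbits are precisely the sets $\wt{Q}^* \circ \xi$ appearing in the statement. What the reduction buys is that nothing from the proof of \autoref{prop:uniformly bounded} has to be re-verified; what the adaptation route (which you also sketch, and which is what the authors intend) buys is independence from the observation $\inverse{\mathscr{S}}(\mathcal{T}) = \mathscr{S}(\mathcal{T} \sqcup \inverse{\mathcal{T}})$. Either way the argument is complete.
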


Notice that it does not suffice that the sizes of all $Q^* \circ \xi$ are bounded by some constant for an automaton-inverse semigroup to be finite; we also need to consider the inverses. A counter-example to this has already been given in \autoref{lem:inverseAndSemigroupOrbitsDoNotCoincide}.\footnote{This stops us from directly transferring our proof for the undecidability of the finiteness problem for invertible, reversible automata to the finiteness problem for automaton-inverse semigroups.}

Given an infinite automaton semigroup, \autoref{prop:uniformly bounded} states that, for every $n \in \mathbb{N}$, there is a word whose orbit is larger than $n$. While it seems plausible that one can obtain a single $\omega$-word with an infinite orbit from this increasing sequence, it is not obvious how this can be done. For example, consider the $S$-automaton
\begin{center}
  \begin{tikzpicture}[auto, shorten >=1pt, >=latex]
    \node[state] (q) {$q$};
    \node[state, right=of q] (p) {$p$};
    
    \path[->] (q) edge[loop left] node {$b/b$} (q)
                  edge node {$a/b$} (p)
              (p) edge[loop right] node[align=center] {$a/a$\\$b/b$} (p)
    ;
  \end{tikzpicture}.
\end{center}
While $p$ acts like the identity, the action of $q$ is to replace the first $a$ in the input word with $b$. Thus, the orbit of $a^n$ contains $n + 1$ words. However, the same is true for $b^n a^n$. Now, while the first sequence $(a^n)_{n \in \mathbb{N}}$ converges (with respect to the prefix metric) to $a^\omega$, the second sequence $(b^n a^n)_{n \in \mathbb{N}}$ converges to $b^\omega$. The former indeed has an infinite orbit. The orbit of the latter, however, only contains the word $b^\omega$ itself. This example demonstrates that we cannot simply take the limit point of an arbitrary sequence of words with increasing orbit size and obtain a(n $\omega$-)word with an infinite orbit, which leads us to the following question.
\begin{prob}\footnote{Shortly after making the first version of this paper available on the arXiv (arXiv:1712.07408v1), this problem was solved independently by the authors and Dominik Francoeur \cite{francoeur2018infinite}, which resulted in a joint paper on this topic merging the two proofs \cite{dangeli2019orbits}.}\enlargethispage{2\baselineskip}
  Let $\mathcal{T} = (Q, \Sigma, \delta)$ be an $S$-automaton. Is $\mathscr{S}(\mathcal{T})$ infinite if and only if there is an $\omega$-word $\xi \in \Sigma^\omega$ with an infinite orbit $Q^* \circ \xi$?

  If $\mathcal{T}$ is a $G$-automaton, is $\mathscr{G}(\mathcal{T})$ then infinite if and only if there is an $\omega$-word $\xi \in \Sigma^\omega$ with an infinite orbit $\tilde{Q}^* \circ \xi$ (where $\tilde{Q}$ is the disjoint union of the states of $\mathcal{T}$ and the states of its inverse)?
\end{prob}
\noindent{}A positive answer to the semigroup case leads to a positive answer for the group case by \autoref{lem:groupAndSemigroupOrbitsCoincide} since $\mathscr{S}(\mathcal{T})$ is finite if and only if so is $\mathscr{G}(\mathcal{T})$ \cite{aklmp12}.

\paragraph{Wang Tilings and Automata.}

Recall that a Wang tile is a unit square tile with a color on each edge. Formally, it is a quadruple $t = (t_{W}, t_{S}, t_{E}, t_{N})\in C^{4}$ where $C$ is a finite set of colors. The choice of $W$, $S$, $E$ and $N$ stems from the four cardinal directions and we say, for example, that $t_W$ is the color at the \emph{west side} of $t$. When seeing it as a Wang tile, we write the tuple $t = (t_{W}, t_{S}, t_{E}, t_{N})$ as $t = \wang{t_W}{t_S}{t_E}{t_N}$.

A tile set is a finite set~$\mathcal{W}$ of Wang tiles. For each $t\in \mathcal{W}$ and~$D \in \{ W, S, E, N \}$, we let~$t_{D}$ denote the color of the edge on the $D$-side. A tile set $\mathcal{W}$ \emph{tiles the discrete plane} $\mathbb{Z}^2$ if there is a map $f: \mathbb{Z}^{2} \to \mathcal{W}$ that associates to each point in the discrete plane a tile from~$\mathcal{W}$ such that adjacent tiles share the same color on their common edge, i.\,e.\ $f(x,y)_{E}=f(x+1,y)_{W}$ and $f(x,y)_{N}=f(x,y+1)_{S}$ for every $(x,y)\in\mathbb{Z}^{2}$. Such a map is called a $\mathbb{Z}^2$-tiling. Analogously, a tile set $\mathcal{W}$ \emph{tiles the first quadrant of the discrete plane} $\mathbb{N}^2$ if it admits an $\mathbb{N}^2$-tiling, i.\,e.\ there is a map $g: \mathbb{N}^2 \to \mathcal{W}$ with $g(x,y)_{E}=g(x+1,y)_{W}$ and $g(x,y)_{N}=g(x,y+1)_{S}$ for every $(x,y)\in\mathbb{N}^{2}$.

A $\mathbb{Z}^2$-tiling $f$ is called \emph{periodic} if there exists a (non-zero) periodicity vector $\bm{v} \in \mathbb{Z}^2$ for $f$, i.\,e.\ we have $f(\bm{t} + \bm{v}) = f(\bm{t})$ for all $t \in \mathbb{Z}^2$. If a tile set does not admit a periodic $\mathbb{Z}^2$-tiling, it is called \emph{aperiodic}.

For an $\mathbb{N}^2$-tiling $g$, we may define the notion of horizontal words: the \emph{$i^\text{th}$ horizontal word of $g$} is $f(0, i)_S f(1, i)_S \dots$, i.\,e.\ the $\omega$-word over $C$ given by the south colors of the rectangle $[0, \infty] \times [i, i]$. We call $g$ \emph{y-recurrent} if there are $i \neq j$ such that the $i^\text{th}$ and $j^\text{th}$ horizontal word of $g$ coincide; if no such $i$ and $j$ exits, we call $g$ \emph{non-y-recurrent}. The notions of y-recurrence and periodicity of a tile set are linked:\footnote{The proof is basically the same as the one showing that $\mathcal{W}$ tiles $\mathbb{Z}^2$ if and only if it tiles $\mathbb{N}^2$ (see \cite{robinson1971undecidability}).}

\begin{lemma}\label{lem: recurrent implies periodic}
  A tile set $\mathcal{W}$ admits a y-recurrent $\mathbb{N}^2$-tiling if and only if it admits a periodic $\mathbb{Z}^2$-tiling.
\end{lemma}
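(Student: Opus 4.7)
I would prove both directions by passing through the stronger notion of a \emph{doubly periodic} $\mathbb{Z}^2$-tiling (one admitting two linearly independent period vectors); once this is in hand, both implications essentially reduce to bookkeeping.

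For ($\Rightarrow$), let $g$ be a y-recurrent $\mathbb{N}^2$-tiling with horizontal words at rows $i < j$ coinciding. I would first vertically periodize the slab from row $i$ to row $j-1$: define $h(x, y) := g(x, i + ((y - i) \bmod (j - i)))$ for $(x, y) \in \mathbb{N} \times \mathbb{Z}$. The only non-inherited Wang matching is the wraparound seam between row $j - 1$ and row $i$, which holds because $g(x, j - 1)_N = g(x, j)_S = g(x, i)_S$ for every $x \in \mathbb{N}$, the second equality being y-recurrence. Then apply pigeonhole to the sequence of columns of $h$ (at most $|\mathcal{W}|^{j - i}$ column types), find $x_1 < x_2$ whose columns coincide, and horizontally periodize $h$ with period $x_2 - x_1$; the horizontal seam matches because the columns at $x_1$ and $x_2$ are identical. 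The result is a $\mathbb{Z}^2$-tiling that is doubly periodic with periods $(x_2 - x_1, 0)$ and $(0, j - i)$, in particular periodic.

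For ($\Leftarrow$), given a periodic $\mathbb{Z}^2$-tiling $f$ with period $\bm{v} = (v_1, v_2) \ne (0, 0)$, my goal is to produce a $\mathbb{Z}^2$-tiling with a purely vertical period, whose restriction to $\mathbb{N}^2$ is then immediately y-recurrent (rows $0$ and the vertical period coincide tile-by-tile). If $v_1 = 0$, restrict $f$. If $v_2 = 0$, each row of $f|_{\mathbb{N}^2}$ is horizontally $|v_1|$-periodic, so only finitely many horizontal words can occur and pigeonhole over the infinitely many rows supplies two coinciding rows. The remaining slanted case $v_1, v_2 \ne 0$ is the heart of the argument: naively restricting $f$ to $\mathbb{N}^2$ fails because the horizontal word of row $y + v_2$ is a $v_1$-shift of the bi-infinite south-color sequence of row $y$, and these shifted $\omega$-words typically remain pairwise distinct on $\mathbb{N}$. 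Here I would invoke the classical strengthening that a tile set admitting any periodic tiling also admits a doubly periodic one: passing to the cylinder $\mathbb{Z}^2 / \bm{v}\mathbb{Z}$ and choosing a $\mathbb{Z}$-basis $\{\bm{v}/g, \bm{u}\}$ of $\mathbb{Z}^2$ with $g = \gcd(v_1, v_2)$, apply pigeonhole to \emph{thick} cross-sections transverse to $\bm{v}$ whose width is large enough to cover all four neighbor offsets expressed in this basis; a cut-and-paste along two coinciding thick sections yields a new tiling invariant under a second period $m\bm{u}$ independent of $\bm{v}$. The rank-$2$ period lattice $\mathbb{Z}\bm{v} + \mathbb{Z} m\bm{u}$ has non-zero determinant $\pm mg$, so it contains a non-zero purely vertical vector, reducing the slanted case to the first subcase.

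The only genuinely delicate point is the pigeonhole step in the slanted case: the cross-sections must be made thick enough (not just one tile wide) precisely so that the matchings of tiles whose neighbors straddle the cut remain valid after pasting, rather than only the matchings of tiles immediately adjacent to the cut. Everything else is either direct restriction, routine pigeonhole over a finite set of row or column types, or the elementary lattice computation extracting a vertical period from a rank-$2$ period lattice.
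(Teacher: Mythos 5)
Your argument is correct and takes essentially the same route as the paper: the forward direction is the same pigeonhole on the slab between the two coinciding rows to extract a rectangle that repeats in both directions, and the backward direction rests on the same classical fact (which the paper obtains by citing Robinson) that a periodic $\mathbb{Z}^2$-tiling can be upgraded to one with a purely vertical period, whose restriction to $\mathbb{N}^2$ is then y-recurrent. The only difference is level of detail: you spell out the doubly-periodic cut-and-paste for the slanted case where the paper defers to the literature, and that sketch is the standard, sound argument.
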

\begin{proof}
  Let $g$ be a y-recurrent $\mathbb{N}^2$-tiling for $\mathcal{W}$. Then, by definition, there are $i < j$ such that the $i^\text{th}$ horizontal word is the same as the $j^\text{th}$ one, i.\,e.\ the rectangle $[0, \infty] \times [i, j - 1]$ is colored the same way on its north and on its south side. Notice that there are only finitely many possible colorings for the west sides of the rectangles $[k, \infty] \times [i, j - 1]$ for $k \in \mathbb{N}$. Therefore, there must be $k, \ell \in \mathbb{N}$ with $k < \ell$ such that the colorings on the west side of the respective rectangles coincide, or, in other words, such that the rectangle $[k, \ell - 1] \times [i, j - 1]$ has the same coloring on its west and its east side. Thus, we can repeat the tile pattern associated to this rectangle infinitely often, both, horizontally and vertically, which yields a periodic $\mathbb{Z}^2$-tiling.
  
  For the other direction, let $f$ be a periodic $\mathbb{Z}^2$-tiling for $\mathcal{W}$. By an argument similar to the one just presented, one can see that $f$ can be assumed to be vertically periodic, i.\,e.\ that there is a $v_y > 0$ such that $f(x, y) = f(x, y + v_y)$ for all $x, y \in \mathbb{Z}$ (see, for example, \cite{robinson1971undecidability}). Clearly, the restriction of $f$ into a map $\mathbb{N}^2 \to \mathcal{W}$ is a y-recurrent $\mathbb{N}^2$-tiling.
\end{proof}
\noindent{}Notice that the periodic $\mathbb{Z}^2$-tiling might be different from the y-recurrent $\mathbb{N}^2$-tiling.

There is a natural way to associate a (possibly non-deterministic) automaton $\mathcal{T}(\mathcal{W})$ to a tile set $\mathcal{W}$ (and vice-verse, see \cite{DaGoKPRo16}): the state set and alphabet are the set of colors and we associate a transition $\trans{t_W}{t_S}{t_N}{t_E}$ to every tile $\mathcal{W} \ni t = \wang{t_W}{t_S}{t_E}{t_N}$. We say that a tile set~$\mathcal{W}$ is $CD$-deterministic with~$(C,D) \in \{ (S, W), (S, E), (N, W), (N, E)\}$ if each tile~$t\in\mathcal{W}$ is uniquely determined by its pair~$(t_{C}, t_{D})$ of colors on the $C$ and $D$ sides. Whenever~$\mathcal{W}$ is $CD$-deterministic for each~$(C,D) \in \{ (S, W), (S, E), (N, W), (N, E)\}$, we say that $\mathcal{W}$ is \emph{4-way deterministic}. There are some obvious connections between the determinism of a tile set and its associated automaton, which we list in the following lemma (see also \cite[Lemma 6.1]{DaGoKPRo16}).
\begin{lemma}\label{lem: basic facts tilings automata}
  The following facts hold:
  \begin{itemize}
    \item $\mathcal{W}$ is $SW$-deterministic if and only if $\mathcal{T}(\mathcal{W})$ is deterministic;
    \item $\mathcal{W}$ is $SE$-deterministic if and only if $\mathcal{T}(\mathcal{W})$ is reversible;
    \item $\mathcal{W}$ is $NW$-deterministic if and only if $\mathcal{T}(\mathcal{W})$ is inverse-deterministic;
    \item $\mathcal{W}$ is $NE$-deterministic if and only if $\mathcal{T}(\mathcal{W})$ is inverse-reversible.
  \end{itemize}
\end{lemma}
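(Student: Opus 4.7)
The plan is to prove all four equivalences uniformly by directly unpacking the definition of the associated automaton $\mathcal{T}(\mathcal{W})$ and matching the four cardinal sides of a tile to the four slots of a transition. Recall the encoding: a tile $t = \wang{t_W}{t_S}{t_E}{t_N}$ yields the transition $\trans{t_W}{t_S}{t_N}{t_E}$, i.e.\ $t_W$ is the source state, $t_S$ is the input, $t_N$ is the output, and $t_E$ is the target state. Since this correspondence is a bijection between $\mathcal{W}$ and the transition set of $\mathcal{T}(\mathcal{W})$, every uniqueness statement about tiles with a prescribed pair of colors translates verbatim into a uniqueness statement about transitions with a prescribed pair of automaton coordinates.

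With this dictionary in hand, the first two cases are immediate. Determinism of $\mathcal{T}(\mathcal{W})$ means that for every source state and every input there is at most one transition; translated, this says that for every pair $(t_W, t_S)$ there is at most one tile, which is exactly $SW$-determinism. Reversibility of $\mathcal{T}(\mathcal{W})$ means that for every input and every target state there is at most one transition; translated, this says that for every pair $(t_S, t_E)$ there is at most one tile, which is exactly $SE$-determinism.

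For the remaining two cases I would appeal to the definition of the inverse automaton, which swaps input and output: the transition $\trans{t_W}{t_S}{t_N}{t_E}$ of $\mathcal{T}(\mathcal{W})$ corresponds to the transition $\trans{\inverse{t_W}}{t_N}{t_S}{\inverse{t_E}}$ of $\inverse{\mathcal{T}(\mathcal{W})}$. Applying the previous two observations to $\inverse{\mathcal{T}(\mathcal{W})}$ then yields the remaining equivalences: inverse-determinism, i.e.\ determinism of $\inverse{\mathcal{T}(\mathcal{W})}$, says that every pair $(t_W, t_N)$ is realised by at most one tile, which is $NW$-determinism; and inverse-reversibility, i.e.\ reversibility of $\inverse{\mathcal{T}(\mathcal{W})}$, says that every pair $(t_N, t_E)$ is realised by at most one tile, which is $NE$-determinism.

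There is no real obstacle here: the statement is essentially a translation table between two equivalent ways of describing the same combinatorial object, so the whole proof amounts to reading off the definitions carefully. The only place where one has to be mildly careful is keeping track of which side of a tile corresponds to the \emph{source} versus the \emph{target} state of a transition (west vs.\ east) and which corresponds to \emph{input} versus \emph{output} (south vs.\ north), since swapping any of these would match the wrong directional property; once the convention is fixed as above, the four bullet points are simultaneous corollaries.
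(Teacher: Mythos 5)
Your proof is correct and matches the paper's (implicit) argument: the paper states this lemma without proof, treating it as an obvious consequence of the definitions, and your careful unpacking of the tile-to-transition dictionary together with the input/output swap for the inverse automaton is exactly that definitional verification.
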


For aperiodic tile sets, the notion of y-recurrence and the finiteness of the associated semigroup are linked as the following proposition demonstrates.
\begin{prop}\label{prop:recurrentTiling}
  Let $\mathcal{W}$ be a $SW$-deterministic, aperiodic tile set and let $\mathcal{T} = \mathcal{T}(W) = (Q, \Sigma, \delta)$ be the associated $S$-automaton. Then,
  \begin{itemize}
    \item $\mathscr{S}(\mathcal{T})$ is infinite
    \item $\mathcal{W}$ admits a $\mathbb{Z}^2$-tiling and
    \item there is an $\omega$-word $\xi \in \Sigma^\omega$ with an infinite orbit $Q^* \circ \xi$
  \end{itemize}
  are equivalent.
\end{prop}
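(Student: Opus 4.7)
My plan is to prove the three-way equivalence as a cycle, tying everything together through the natural correspondence between the action of $\mathcal{T} = \mathcal{T}(\mathcal{W})$ on $\omega$-words and horizontal strips of tiles from $\mathcal{W}$. The key dictionary, which I will record at the start, is the following: by $SW$-determinism, for every $\bm{q} = q_k \dots q_1 \in Q^+$ and every $\xi \in \Sigma^\omega$, the expression $\bm{q} \circ \xi$ is defined if and only if $\mathcal{W}$ admits a tiling of the strip $\mathbb{N} \times \{0, \dots, k-1\}$ whose south-border reads $\xi$ and whose west-border, from bottom to top, is $q_1 q_2 \dots q_k$; in this case, the north-border reads $\bm{q} \circ \xi$ and the west-colors of rows $1, 2, \dots, k$ are exactly the states visited in the orbit along the walk $\xi \to q_1 \circ \xi \to q_2 q_1 \circ \xi \to \dots \to \bm{q} \circ \xi$. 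This equivalence is immediate from the shape of transitions $\trans{t_W}{t_S}{t_N}{t_E}$ coming from tiles and the determinism of $\mathcal{T}$.

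\emph{Infinite orbit $\Rightarrow$ $\mathscr{S}(\mathcal{T})$ infinite.} This is immediate from \autoref{prop:uniformly bounded}: if $|Q^* \circ \xi| = \infty$ for some $\xi$, the orbits are not uniformly bounded, so $\mathscr{S}(\mathcal{T})$ cannot be finite.

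\emph{$\mathscr{S}(\mathcal{T})$ infinite $\Rightarrow$ $\mathbb{Z}^2$-tiling exists.} Using \autoref{prop:uniformly bounded} in the other direction, for every $N$ there exist $\xi_N \in \Sigma^\omega$ and $\bm{q}_N \in Q^+$ defined on $\xi_N$ with $|\bm{q}_N| \geq \log_{|Q|} N$ (since at most $|Q|^k$ orbit elements are reached via words of length $k$). Through the dictionary, this yields a tiling of the strip $\mathbb{N} \times \{0, \dots, |\bm{q}_N| - 1\}$, hence, by restriction, a tiling of every finite rectangle $\{0, \dots, M\} \times \{0, \dots, K\}$ as soon as $N$ is large enough. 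Since $\mathcal{W}$ is finite, a standard compactness (König's lemma) argument then provides an $\mathbb{N}^2$-tiling, and a further compactness/translation argument (as in \cite{robinson1971undecidability}) lifts it to a $\mathbb{Z}^2$-tiling.

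\emph{$\mathbb{Z}^2$-tiling exists $\Rightarrow$ infinite orbit.} Restricting a $\mathbb{Z}^2$-tiling to the first quadrant yields an $\mathbb{N}^2$-tiling $g$. Since $\mathcal{W}$ is aperiodic, the contrapositive of \autoref{lem: recurrent implies periodic} forces $g$ to be non-y-recurrent, so its horizontal words $\xi_0, \xi_1, \xi_2, \dots$ are pairwise distinct. Setting $q_i = g(0, i)_W$, the dictionary gives $q_{k-1} \dots q_1 q_0 \circ \xi_0 = \xi_k$ for every $k$, so $\{\xi_k \mid k \in \mathbb{N}\} \subseteq Q^* \circ \xi_0$ is infinite.

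The main obstacle I expect is the compactness step in the second implication (extracting a $\mathbb{Z}^2$-tiling from tilings of arbitrarily wide but possibly short strips), which requires combining a König-style diagonal argument with the classical extension of $\mathbb{N}^2$-tilings to $\mathbb{Z}^2$-tilings; everything else is a direct application of the dictionary together with \autoref{prop:uniformly bounded} and \autoref{lem: recurrent implies periodic}.
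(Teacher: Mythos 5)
Your proposal is correct and takes essentially the same route as the paper: the identical cycle of implications (infinite orbit $\Rightarrow$ infinite semigroup via \autoref{prop:uniformly bounded}; infinite semigroup $\Rightarrow$ $\mathbb{Z}^2$-tiling via long paths in orbital graphs plus compactness; $\mathbb{Z}^2$-tiling $\Rightarrow$ infinite orbit via restriction to $\mathbb{N}^2$ and non-y-recurrence from \autoref{lem: recurrent implies periodic}). The only cosmetic difference is that you tile infinite strips and then restrict to finite rectangles, whereas the paper truncates directly to $\ell \times \ell$ squares before invoking the same compactness argument.
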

\begin{proof}
  First, we show that a $\mathbb{Z}^2$-tiling induces an $\omega$-word with an infinite orbit, which means that $\mathscr{S}(\mathcal{T})$ must be infinite (by \autoref{prop:uniformly bounded}). Therefore, suppose that $\mathcal{W}$ admits a $\mathbb{Z}^2$-tiling $f$. Since $\mathcal{W}$ is aperiodic, its restriction $g$ into an $\mathbb{N}^2$-tiling must be non-y-recurrent by \autoref{lem: recurrent implies periodic}. Let $\xi_i$ denote the $i^\text{th}$ horizontal word of $g$ and let $q_i = g(0, i)_W$ for all $i \in \mathbb{N}$ (i.\,e.\ $q_1 q_2 \dots$ labels the west side of the first quadrant read from bottom to top). By construction of $\mathcal{T}$, we have $\xi_{i + 1} = q_i \circ \xi_i$ and, since $g$ is non-y-recurrent, we have $\xi_i \neq \xi_j$ for all $i \neq j$. Therefore, the orbit $Q^* \circ \xi_0$ is infinite.

  Next, assume that $\mathscr{S}(\mathcal{T})$ is infinite. We show that $\mathcal{W}$ admits a $\mathbb{Z}^2$-tiling tiling in this case, which -- as shown above -- implies that there is an $\omega$-word with an infinite orbit. In fact, we only need to show that $\mathcal{W}$ admits tilings for arbitrarily large squares $[0, \ell] \times [0, \ell]$ since, then, it also admits a $\mathbb{Z}^2$-tiling (by a standard compactness argument, see e.\,g.\ \cite{robinson1971undecidability}). By \autoref{prop:uniformly bounded} there is an infinite sequence $(\xi_{i})_{i \in \mathbb{N}}$ of $\omega$-words such that $| Q^* \circ \xi_i | \geq i$, which, by definition, is also the size of the orbital graph $\mathcal{T} \circ \xi_i$. Notice that, for every $i \in \mathbb{N}$, all nodes in $\mathcal{T} \circ \xi_i$ have out-degree at most $|Q|$. Thus, for every $\ell \in \mathbb{N}$, we find an $i(\ell) \in \mathbb{N}$ such that $\mathcal{T} \circ \xi_{i(\ell)}$ contains a path of length $\ell$
  \[
    \eta^{(\ell)} = \eta^{(\ell)}_0 \overset{q^{(\ell)}_1}{\longrightarrow} \eta^{(\ell)}_1 \overset{q^{(\ell)}_2}{\longrightarrow} \dots \overset{q^{(\ell)}_{\ell}}{\longrightarrow} \eta^{(\ell)}_{\ell} \text{,}
  \]
  where all $\eta^{(\ell)}_j \in \Sigma^\omega$ are in the orbit $Q^* \circ \xi_{i(\ell)}$. Let $u_j^{(\ell)}$ denote the prefix of $\eta^{(\ell)}_j$ of length $\ell$. Notice that, due to prefix-compatibility, we still have $u^{(\ell)}_j = q_j \circ u^{(\ell)}_{j - 1}$. By construction of $\mathcal{T} = \mathcal{T}(\mathcal{W})$, this yields a tiling of $[0, \ell] \times [0, \ell]$ where the south sides of the $i^\text{th}$ row (with $i \in \mathbb{N}$) is labeled by $u^{(\ell)}_i$, the north side of the $\ell^\text{th}$ row is labeled by $u^{(\ell)}_\ell$ and the west side of the square is labeled by $q_1 \dots q_\ell$.
\end{proof}

The previous proposition can be used for the reduction to show the main result of this section:
\begin{thm}\label{theo: undecidability finite}
  The strengthened version
  \problem
    {a bi-reversible and bi-deterministic automaton $\mathcal{T}$}
    {is $\mathscr{S}(\mathcal{T})$ finite?}
  \noindent{}of the finiteness problem for automaton semigroups remains undecidable. In particular, it is undecidable to check whether an automaton subsemigroup of an automaton-inverse semigroup is finite.
  
  Furthermore, the problem
  \problem
    {a bi-reversible and bi-deterministic automaton $\mathcal{T} = (Q, \Sigma, \delta)$}
    {is there an $\omega$-word $\xi \in \Sigma^\omega$ with an infinite orbit $Q^* \circ \xi$?}
  \noindent{}is undecidable.
\end{thm}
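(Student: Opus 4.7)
The plan is to reduce the domino problem for 4-way deterministic Wang tile sets, shown undecidable by Lukkarila, to both problems stated in the theorem via the tile-to-automaton construction $\mathcal{T}(\mathcal{W})$. By \autoref{lem: basic facts tilings automata}, 4-way determinism of $\mathcal{W}$ translates precisely into bi-determinism and bi-reversibility of $\mathcal{T}(\mathcal{W})$, so the automata produced by the reduction belong to the input class of both target problems. If $\mathcal{W}$ is additionally aperiodic, then \autoref{prop:recurrentTiling} yields the equivalence of: $\mathscr{S}(\mathcal{T}(\mathcal{W}))$ is infinite; $\mathcal{W}$ admits a $\mathbb{Z}^2$-tiling; and there is some $\xi \in \Sigma^\omega$ with infinite orbit $Q^* \circ \xi$. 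Deciding either of the two target problems would therefore decide whether an aperiodic 4-way deterministic tile set tiles the plane.

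To complete the reduction, it remains to transfer Lukkarila's undecidability to the aperiodic restriction. I would use the standard product trick: fix a 4-way deterministic aperiodic tile set $\mathcal{A}$ that admits a tiling of $\mathbb{Z}^2$ (such a set is available from Lukkarila's construction), and for an arbitrary 4-way deterministic input $\mathcal{W}$ form the product $\mathcal{W}' = \mathcal{W} \otimes \mathcal{A}$ whose tiles are pairs with componentwise color matching on each side. Then $\mathcal{W}'$ is again 4-way deterministic, inherits aperiodicity from $\mathcal{A}$ (any periodic tiling of $\mathcal{W}'$ would project to a periodic tiling of $\mathcal{A}$), and tiles $\mathbb{Z}^2$ if and only if $\mathcal{W}$ does, because $\mathcal{A}$ always tiles. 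Combining this pre-processing with the equivalence above yields the two claimed undecidability results.

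The consequence about automaton-inverse semigroups is then immediate. The automaton $\mathcal{T}(\mathcal{W}')$ is bi-deterministic, hence an $\inverse{S}$-automaton, and the subsemigroup of $\inverse{\mathscr{S}}(\mathcal{T}(\mathcal{W}'))$ generated by the states of $\mathcal{T}(\mathcal{W}')$ alone, without their formal inverses, coincides with $\mathscr{S}(\mathcal{T}(\mathcal{W}'))$. Deciding finiteness of a finitely generated subsemigroup of an automaton-inverse semigroup is therefore at least as hard as the finiteness problem just shown undecidable. The main obstacle I anticipate is extracting and verifying a concrete 4-way deterministic aperiodic tile set admitting a $\mathbb{Z}^2$-tiling; this is the combinatorially delicate ingredient and has to be taken from Lukkarila's work rather than reproduced from scratch.
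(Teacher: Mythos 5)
Your proposal is correct and follows essentially the same route as the paper: reduce Lukkarila's 4-way deterministic tiling problem to both questions via the map $\mathcal{W} \mapsto \mathcal{T}(\mathcal{W})$, using \autoref{lem: basic facts tilings automata} for membership in the input class and \autoref{prop:recurrentTiling} for the three-way equivalence. The only divergence is that you re-establish the aperiodicity restriction by taking a product with a fixed 4-way deterministic aperiodic tile set; this is sound (the product preserves 4-way determinism, inherits aperiodicity, and tiles iff $\mathcal{W}$ does), but the paper bypasses it by citing Lukkarila's result in the form where the constructed tile sets are already guaranteed to be both 4-way deterministic and aperiodic, since his construction layers the Kari--Papasoglu aperiodic set underneath.
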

\begin{proof}
  Lukkarila \cite{Lukkarila} has shown that the tiling problem for Wang tilings remains undecidable when the input is restricted to be a $4$-way deterministic, aperiodic tile sets. In other words, he showed undecidability of the problem:
  \problem
    {a $4$-way deterministic, aperiodic tile set $\mathcal{W}$}
    {does $\mathcal{W}$ admit a $\mathbb{Z}^2$-tiling?}
  To prove this undecidability, he constructed a $4$-way deterministic tile set from a Turing Machine in such a way that the tile set admits a tiling if and only if the Turing Machine halts. The constructed tiles consist of multiple layers of tiles (using direct products) and the first layer uses the 4-way deterministic, aperiodic tile set of Kari and Papasoglu \cite{Ka-Pa}.
  
  For a 4-way-deterministic and aperiodic tile set $\mathcal{W}$, the automaton $\mathcal{T}(\mathcal{W})$ is bi-reversible and bi-deterministic by \autoref{lem: basic facts tilings automata} and, by \autoref{prop:recurrentTiling}, it generates an infinite semigroup if and only if $\mathcal{W}$ admits a $\mathbb{Z}^2$-tiling, which is the case if and only if there is an $\omega$-word with an infinite orbit. Thus, mapping $\mathcal{W}$ to $\mathcal{T}(\mathcal{W})$ is a co-reduction from Lukkarila's problem to our strengthened version of the finiteness problem and a reduction to our second problem.
\end{proof}

Unfortunately, the above theorem neither shows undecidability for the finiteness problem of automaton groups nor for the finiteness problem of automaton-inverse semigroups. Although, for the former, we have that $\mathscr{G}(\mathcal{T})$ is infinite if and only if $\mathscr{S}(\mathcal{T})$ is infinite, the automaton associated to a 4-way deterministic tile set is not necessarily complete and, thus, not a $G$-automaton. This incompleteness is inherent to the undecidability of the tiling problem: the automaton is complete if and only if, for every pair $(c_S, c_W)$ of colors, there is at least (and, thus, due to 4-way determinism exactly) one tile whose south side is colored with $c_S$ and whose west side is colored with $c_W$. If this is true for some tile set, however, it always admits a $\mathbb{N}^2$-tiling as one can choose the colors for the x-axis and for the y-axis arbitrarily and continue the tiling from there.

One can complete an $S$-automaton by adding a sink state (see \cite[Proposition 1]{DAngeli2017}). This has the effect of (possibly\footnote{Notice that -- although it is wrongfully stated in the proof of \cite[Proposition 1]{DAngeli2017} -- this is not always the case; see \cite[Section~3]{structurePart} for a discussion.}) adding a zero to the generated semigroup and, therefore, maintains (in)finiteness. However, it does not maintain reversibility, which leads to the following open problem.
\begin{prob}
  Given a reversible $S$-automaton, can one compute a complete and reversible $S$-automaton $\mathcal{T}'$ such that $\mathscr{S}(\mathcal{T})$ is finite if and only if $\mathscr{S}(\mathcal{T}')$ is finite? Is it possible to give such a construction if $\mathcal{T}$ is a bi-reversible $\inverse{S}$-automaton?
\end{prob}
\noindent{}A positive answer to this open problem would lead to the undecidability of the finiteness problem for automaton groups. Indeed, for the reduction, one could take the bi-reversible and bi-deterministic automaton $\mathcal{T}$ obtained from the 4-way-deterministic tile set in \autoref{theo: undecidability finite}, compute its reversible completion $\mathcal{T}'$ and take the dual of $\mathcal{T}'$, which is a $G$-automaton. Then, we had
\[
  |\mathscr{S}(\mathcal{T})| = \infty \iff |\mathscr{S}(\mathcal{T}')| = \infty \iff |\mathscr{S}(\partial\mathcal{T}')| = \infty \iff |\mathscr{G}(\partial\mathcal{T}')| = \infty \text{,}
\]
where the first equivalence is due to the assumption in the open problem. For the second equivalence, see e.\,g.\ \cite[Corollary 1]{DaRo14} or \cite{aklmp12} and, for the third equivalence see \cite{aklmp12}.

For automaton-inverse semigroups, on the other hand, the partiality is not problematic. Here, the problem is rather that we have to consider the inverses as well. From the perspective of Wang tiles, taking the inverse belongs to mirroring a tile at the horizontal axis. The union of the tiles and their mirrored versions is not an aperiodic tile set anymore. In fact, it will always admit an $\mathbb{N}^2$-tiling as long as we can tile a single right-infinite row. In this case, our approach of obtaining an infinite orbit cannot be applied, which leaves us with the following open problem.

\begin{prob}
  Is the finiteness problem for automaton-inverse semigroups
  \problem
    {an $\inverse{S}$-automaton}
    {is $\inverse{\mathscr{S}}(\mathcal{T})$ finite?}
  \noindent{}decidable?
\end{prob}

\end{section}

\bibliographystyle{plain}
\bibliography{references}

\cleardoublepage
\includepdf[pages=-]{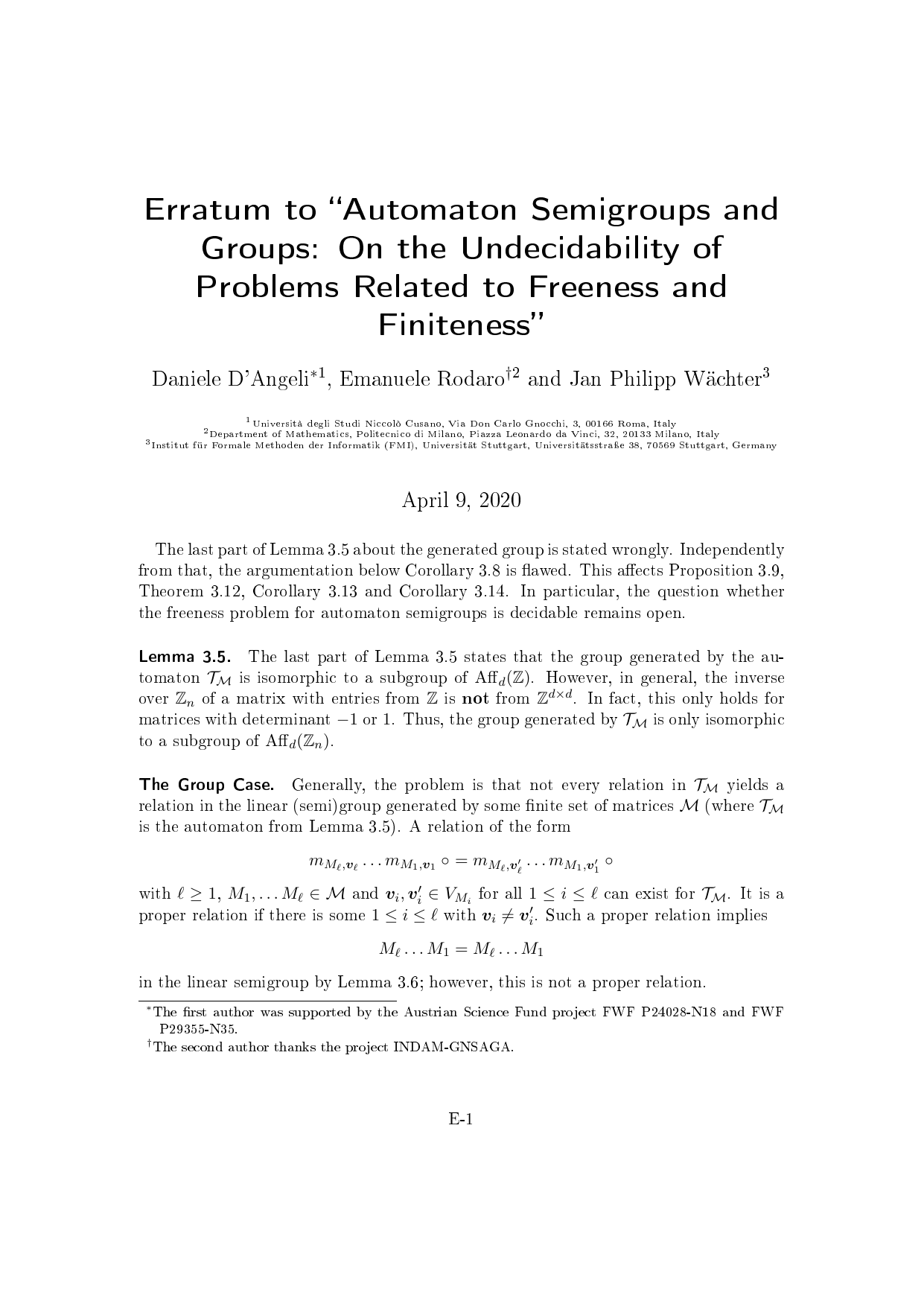}

\end{document}